\documentclass[aps,reprint,superscriptaddress,nofootinbib,floatfix]{revtex4-2}
\usepackage{amsmath,amssymb,amsfonts,mathtools,bm}
\usepackage{amsthm}
\usepackage{graphicx}
\usepackage{booktabs,array}
\usepackage{orcidlink}
\usepackage{xcolor}
\usepackage{hyperref}
\hypersetup{colorlinks=true,linkcolor=blue!55!black,citecolor=blue!55!black,urlcolor=blue!55!black}
\graphicspath{{./}}

\newtheorem{theorem}{Theorem}[section]
\newtheorem{proposition}[theorem]{Proposition}
\newtheorem{lemma}[theorem]{Lemma}
\newtheorem{corollary}[theorem]{Corollary}
\newtheorem{definition}[theorem]{Definition}
\newtheorem{remark}[theorem]{Remark}

\newcommand{\Del}{\mathrm{Del}}
\newcommand{\sym}{\mathrm{sym}}
\newcommand{\Tr}{\operatorname{Tr}}
\newcommand{\Bin}{\mathrm{Bin}}
\newcommand{\E}{\mathbb{E}}
\newcommand{\id}{\mathrm{id}}
\newcommand{\ket}[1]{\lvert #1\rangle}
\newcommand{\bra}[1]{\langle #1\rvert}

\newcommand{\cE}{\mathcal E}
\newcommand{\cH}{\mathcal H}

\newcommand{\cR}{\mathcal R}

\newcommand{\cL}{\mathcal L}
\newcommand{\etaG}{\eta_{\rm G}}

\newcommand{\cC}{\mathcal C}
\newcommand{\cD}{\mathcal D}
\newcommand{\cP}{\mathcal P}
\newcommand{\cT}{\mathcal T}
\newcommand{\norm}[1]{\left\lVert #1\right\rVert}
\newcommand{\Rdel}{R}
\newcommand{\Jsurv}{M}
\newcommand{\Mbin}{\mathsf M}
\newcommand{\Nfp}{N}
\newcommand{\epsq}{\varepsilon_q}

\newcommand{\ContractX}{\mathsf X}

\begin{document}

\title{Recovery-Free CHSH Nonlocality with Particle Loss}

\author{Giovanni Scala~\orcidlink{0000-0003-2685-0946}}
\affiliation{Dipartimento Interateneo di Fisica, Politecnico di Bari, 70126 Bari, Italy}
\affiliation{INFN, Sezione di Bari, 70126 Bari, Italy}
\author{Cosmo Lupo~\orcidlink{0000-0002-5227-4009}}
\affiliation{Dipartimento Interateneo di Fisica, Politecnico di Bari, 70126 Bari, Italy}
\affiliation{INFN, Sezione di Bari, 70126 Bari, Italy}

\begin{abstract}
Can CHSH nonlocality survive particle loss without applying an explicit recovery operation?  
In principle, any deterministic recovery can be absorbed into the measurement. Operationally, we show that the answer depends on the allowed measurements on the lossy system. We distinguish flagged erasure, in which each lost particle leaves a detectable record, from unflagged deletion, in which no such record remains. 
For flagged erasure and survival probability \(\eta>1/2\), using known quantum-capacity results, we show measurements that
asymptotically approach the quantum maximum \(2\sqrt2\). In contrast, for
\(\eta\le1/2\), CHSH violation is impossible for both loss models.
Then, we construct explicit recovery-free protocols using
permutation-invariant encodings built from $n$-qubit Dicke states $| D_N^n\rangle$ and measurements on the surviving particles. 
A one-excitation $(N=1)$ encoding violates CHSH for
\(\eta>1/\sqrt2\). 
Increasing the excitation number $N$ yields a family of protocols that violates CHSH for \(\eta>\eta_G=(\sqrt5-1)/2\), with the asymptotic golden ratio approached as $N \to \infty$. 
Finally, we present a sparse-deletion binomial PI protocol that guarantees CHSH violation for up to $O(\sqrt n)$ deletion errors.
Our results distinguish fundamental limits imposed by loss from
those set by explicit measurements without recovery.
\end{abstract}

\maketitle
\section{Introduction}
\noindent Bell inequalities certify correlations that cannot be explained by local hidden-variable theories. The canonical CHSH test combines the correlations \(E_{xy}\) obtained from two binary measurement choices \(x,y\in\{0,1\}\) for each party into $S=E_{00}+E_{01}+E_{10}-E_{11}$.
Local correlations satisfy \(|S|\le2\), whereas quantum theory allows \(|S|>2\) up to \(|S|\le2\sqrt2\)~\cite{Clauser1969,Cirelson1980,Brunner2014}. Such violations underpin device-independent quantum information~\cite{Acin2006,Brunner2014, primaatmaja2023security,zapatero2023advances,Ghoreishi2025}.
Loss is a major obstacle: it degrades entanglement, changes the number of systems available for measurement, and if no-detection events are discarded, opens the detection loophole~\cite{Pearle1970,Eberhard1993,Larsson2014}. 
Quantum error correction suggests encoding the entangled state redundantly and applying an active recovery before the Bell measurements~\cite{Knill1997,Desef2022,Cai2024}. 
Related approaches avoid active recovery by using quantum error-correcting codes to design Bell tests that tolerate errors~\cite{Walker2008,Chen2009, Silberstein2023,Zhang2026}.
This raises the question: \emph{how much CHSH nonlocality remains accessible by explicitly measuring the damaged state directly, without first reconstructing the encoded state?}
Formally, any deterministic recovery followed by a logical measurement can be written as a single POVM on
the post-loss state. Thus, if arbitrary collective POVMs are allowed, simply calling a Bell test ``recovery-free'' does not restrict the obtainable correlations. The operational question is which explicit measurements can be performed directly on the surviving particles.
The answer also depends on the loss model. In flagged erasure, a lost particle
is replaced by the orthogonal state \(\ket{\varnothing}\), so that its position is
recorded. In unflagged deletion, this record is discarded, and only the surviving
particles remain in a shorter register. These models have different reference
limits. For flagged erasure, the quantum capacity is positive for
\(\eta>1/2\)~\cite{Bennett1997,Lloyd1997,Barnum2000,Devetak2005,Devetak2005a}. Hence, with arbitrary direct POVMs, the CHSH
value can asymptotically approach \(2\sqrt2\). This argument uses the erasure
record and does not by itself apply to unflagged deletion. Conversely, for
\(\eta\le1/2\), we show that both loss models are two-extendible and therefore
cannot violate CHSH~\cite{Terhal2003,Toner2006,Seevinck2010,Cui2025}.
The nontrivial task is therefore to find explicit measurements,
particularly for unflagged deletion, that preserve Bell nonlocality without
implementing a recovery operation.
We address this problem using permutation-invariant (PI) encodings built from Dicke states, the equal superpositions of all \(n\)-qubit configurations with a fixed number of excitations~\cite{Dicke1954,Ouyang2014, NakayamaHagiwara2020,PollatsekRuskai2004}. 
Their permutation symmetry makes deletion independent of which particles are lost~\cite{Ouyang2021,Shibayama2021,Aydin2024}. 
We deliver explicit recovery-free CHSH protocols in two complementary regimes:
low-excitation encodings preserve nonlocality at a constant loss rate, whereas binomial PI encodings use redundancy to tolerate a number of deletions growing as \(\sqrt n\). Together with the two-extendibility bound, these results delineate the operational regimes of recovery-free Bell nonlocality.
These results paves the way to recovery-free Bell tests: they
separate the fundamental no-go region from regimes in which explicit direct
measurements can preserve nonlocality, while clarifying the trade-off between
loss tolerance and measurement complexity.
\section{Operational models}
Let \(\rho\) be the joint state after local loss. Suppose that Alice
and Bob apply deterministic local recovery channels
\(\cR_A,\cR_B\) before measuring POVMs
\(\{M^A_{a|x}\}\) and \(\{M^B_{b|y}\}\). Define
\begin{equation}
\widetilde M^A_{a|x}=\cR_A^\dagger(M^A_{a|x}),
\qquad
\widetilde M^B_{b|y}=\cR_B^\dagger(M^B_{b|y}).
\label{eq:heisenberg}
\end{equation}
Because the recovery channels are trace preserving, their adjoints are
unital, so $\widetilde M^A_{a|x},
\widetilde M^B_{b|y}$ form valid POVMs. Trivially,
\begin{equation*}
    \Tr\!\left[
(M^A_{a|x}\otimes M^B_{b|y})
(\cR_A\otimes\cR_B)(\rho)
\right]=
\Tr\!\left[
(\widetilde M^A_{a|x}\otimes\widetilde M^B_{b|y})\rho
\right].
\end{equation*}
Thus every deterministic recovered strategy, without postselection,
has a direct-POVM representation with identical Bell statistics; see Suppl. Mat., Sec.~S2.
This equivalence concerns statistics, not implementation:
\(\widetilde M^A_{a|x}\) and \(\widetilde M^B_{b|y}\) may be as
difficult to realize as the recovery operations they replace. We
nevertheless allow arbitrary direct POVMs in the next argument.

\section{Universal no-go for \(\eta\le1/2\)}
To rule out CHSH violation by any POVMs, we show that flagged, and then, unflagged
deletion produces a two-extendible state on Bob's side.
Consider an arbitrary state \(\rho_{AB_0}\), where \(A\) is Alice's
system and \(B_0\) is one of Bob's particles before loss. Let
$\sigma_{AB}=\bigl(\id_A\otimes\mathcal E_\eta^{\rm flag}\bigr)
(\rho_{AB_0})$
be the state after flagged erasure of \(B_0\). The state
\(\sigma_{AB}\) is two-extendible on Bob's side if there exists a
tripartite state \(\omega_{ABB'}\), invariant under exchange of the
auxiliary output registers \(B\) and \(B'\), such that
$\omega_{AB}=\omega_{AB'}=\sigma_{AB}$. Thus Alice has the same reduced state with either \(B\) or \(B'\).
Such an extension exists whenever \(\eta\le1/2\)~\cite{Bennett1997,Terhal2003,Toner2006,Cubitt2008}.
Applying this extension independently to all of Bob's particles gives a
symmetric extension of his entire post-erasure register. Unflagged
deletion is obtained by locally removing the erasure symbols and packing
the survivors into a shorter register. Performing this same local
postprocessing on \(B\) and \(B'\) preserves both the exchange symmetry
and the equality of the two Alice--Bob marginals. Hence the state after
unflagged deletion is also two-extendible. Extending Bob's side alone is
sufficient, since any subsequent loss on Alice's side is a local
operation and cannot create a Bell violation.

Two-extendible states admit a local model for Bell tests with at most
two measurement settings on the extended party~\cite{Terhal2003}.
Since CHSH has exactly two settings for Bob, every choice of local
POVMs must therefore satisfy \(S\le2\). Equivalently, identical
marginals imply \(S_{AB}=S_{AB'}\), while CHSH monogamy requires
\(S_{AB}^{\,2}+S_{AB'}^{\,2}\le8\)~\cite{Toner2006}. We conclude that, for
\(\eta\le1/2\), CHSH violation is impossible after either flagged or unflagged deletion, even with arbitrary direct POVMs.

Equation~\eqref{eq:heisenberg} also shows that deterministic recovery before measurement cannot circumvent this no-go result.
However, for flagged erasure, \(\eta=1/2\) is the exact asymptotic boundary with arbitrary POVMs. For every \(\eta>1/2\), the quantum erasure channel has positive quantum capacity~\cite{Bennett1997,Schumacher1996,Lloyd1997,Devetak2005}. Encoding and deterministically decoding one half of an entangled logical state can therefore recover it with fidelity approaching one, yielding \(S\to2\sqrt2\). In this case, by Eq.~\eqref{eq:heisenberg}, equivalent POVMs acting directly on the unrecovered flagged output attain the same correlations; see Supplemental Material, Sec.~S2.

This achievability result does not automatically extend to unflagged deletion. The central constructive problem is consequently to find explicit recovery-free protocols whose threshold lies as close as possible to the universal boundary \(\eta=1/2\) from above.

\section{One-excitation PI protocol}
Recall that \(\ket{D_k^n}\) denotes the symmetric \(n\)-particle Dicke state
with \(k\) excitations. We encode
\[
\ket{\Phi_L^+}
=
\frac{\ket{0_L0_L}+\ket{1_L1_L}}{\sqrt2},
\qquad
\ket{0_L}=\ket{D_0^n},
\quad
\ket{1_L}=\ket{D_1^n}.
\]
Consider a loss branch in which \(r\) particles are deleted from
Alice's and Bob's registers. Both parties then retain $m=n-r$
particles. On the logical subspace, deletion acts as the amplitude-damping
channel \(\mathcal A_{m/n}\): the vacuum is unchanged, while a single
excitation survives with amplitude \(\sqrt{m/n}\). The conditional output is
therefore
\[
\rho_{\rm out}
=
\bigoplus_{m=0}^{n}p_m\,\rho_m,\quad\rho_m
=
\bigl(\mathcal A_{m/n}\otimes\mathcal A_{m/n}\bigr)
\bigl(\ket{\Phi_L^+}\!\bra{\Phi_L^+}\bigr),
\]
with independent particle survival with probability \(\eta\), the common
survivor number is distributed as
$p_m=\binom{n}{m}\eta^m(1-\eta)^{n-m}$.
For \(m\ge1\), define on the surviving logical subspace
\[
Q_m=
\ket{D_0^m}\bra{D_0^m}
+
\ket{D_1^m}\bra{D_1^m},
\]
and
\begin{align}
X_m&=
\ket{D_0^m}\bra{D_1^m}
+
\ket{D_1^m}\bra{D_0^m}, \nonumber\\
Y_m&=
-i\ket{D_0^m}\bra{D_1^m}
+i\ket{D_1^m}\bra{D_0^m}.
\end{align}
Outside \(Q_m\), these observables are set to zero, corresponding to an
unbiased random outcome. Alice and Bob measure the block-diagonal observables
\[
A_0=\bigoplus_{m=0}^{n}X_m,
\quad
A_1=\bigoplus_{m=0}^{n}Y_m,
\qquad
B_{0,1}=\bigoplus_{m=0}^{n}\frac{X_m\pm Y_m}{\sqrt2},
\]
with \(X_0=Y_0=0\). 
These measurements act directly on the post-deletion registers and require no
recovery operation.
In the branch with \(m\) surviving particles on each side,
$S_m=2\sqrt2\,\frac{m}{n}$.
Let \(M\sim\mathrm{Bin}(n,\eta)\) denote the number of particles retained by each party. Averaging over \(M\) gives
\begin{equation}
S_n(\eta)
=
2\sqrt2\,\mathbb E\!\left[\frac{M}{n}\right]
=
2\sqrt2\,\eta\, \Longrightarrow\,\eta>\frac{1}{\sqrt2}.
\label{eq:oneex-threshold}
\end{equation}
Hence this correlated-loss model violates CHSH whenever
\(\eta>1/\sqrt2\); see Supp. Mat S3.

\section{\(N\)-excitation PI protocols}
We now ask whether increasing the excitation number can improve the
one-excitation threshold. We keep
\(\ket{0_L^{(n)}}=\ket{D_0^n}\), but encode the logical one using
\(N\) symmetrically distributed excitations,
\begin{equation}
\ket{0_L^{(n)}}=\ket{D_0^n},
\qquad
\ket{1_L^{(n)}}=\ket{D_N^n}.
\label{eq:finite-n-fockpair}
\end{equation}
To analyze this family, we proceed in three steps: we first describe all
survivor-number sectors in a common excitation-number basis, then show that
finite-\(n\) PI deletion approaches pure loss for fixed \(N\), and finally
construct direct measurements that combine the surviving diagonal correlation
with the remaining coherence.

After deletion, a sector containing \(m\) surviving particles may contain \(j=0,\ldots,\min\{N,m\}\) surviving excitations, represented by \(\ket{D_j^m}\). Although different values of \(m\) correspond to registers of different sizes, we can describe all these states by their common excitation number:
\begin{equation}
J_m\ket{D_j^m}=\ket j,
\qquad
\mathcal H_N=\operatorname{span}\{\ket0,\ldots,\ket N\}.
\label{eq:block-identification}
\end{equation}
This is only a common mathematical description, not a recovery operation: measurements on \(\mathcal H_N\) represent block-diagonal measurements acting on the original survivor sectors.
For fixed \(N\), the finite-\(n\) PI-deletion channel in this representation approaches, as \(n\to\infty\), the following pure-loss channel~\cite{Michael2016,Albert2018}
\begin{equation*}
\mathcal L_\eta(\rho)
=
\sum_{a=0}^{N}L_a\rho L_a^\dagger,
\quad
L_a\ket{k}
=
\sqrt{\binom{k}{a}(1-\eta)^a\eta^{k-a}}\ket{k-a},
\end{equation*}
where \(L_a\ket{k}=0\) for \(a>k\). 
Thus, conditioned on losing \(a\) of the \(k\) excitations, \(k-a\) remain.
For independent particle loss, the number of surviving excitations in a fixed
\(N\)-excitation component is distributed as \(\operatorname{Bin}(N,\eta)\).
When \(N\) is fixed and \(n\) is large, the total number \(m\) of surviving
particles concentrates around \(n\eta\). Hence, the packed PI deletion channel,
restricted to the finite sector spanned by \(\{\ket{D_k^n}:0\le k\le N\}\),
converges in diamond norm to the finite-Fock pure-loss channel \(L_\eta\)
defined above.
This diamond-norm convergence is stable under tensoring with an arbitrary
reference system. It therefore applies directly to the shared encoded Bell
state, where Alice's side may be viewed as the reference for Bob's local loss
channel, and similarly for the two-sided channel. Consequently, the CHSH value
obtained from the corresponding direct finite-\(n\) measurements converges to
the CHSH value of the limiting pure-loss protocol. Any limiting violation with
a nonzero gap above \(2\) therefore persists for all sufficiently large \(n\);
see Supplemental Material, Sec.~S4.

The common representation sends the logical codewords to
\(\ket{0_L}=\ket0\) and \(\ket{1_L}=\ket N\), and hence sends the encoded Bell
state to
\begin{equation}
\ket{\Phi_N^+}
=
\frac{\ket{00}+\ket{NN}}{\sqrt2}.
\label{eq:fockpair-code}
\end{equation}
We call this encoding \emph{vacuum-anchored} because the logical zero is invariant
under loss, whereas \(\ket N\) can decay to
\(\ket{N-1},\ldots,\ket0\).
Partial loss now populates the intermediate levels
\(\ket1,\ldots,\ket{N-1}\). We therefore use on $\cH_N$ one measurement to distinguish
vacuum from nonvacuum and another to probe the remaining
\(\ket0\leftrightarrow\ket N\) coherence:
\begin{equation}
Z_N=\ket0\!\bra0-\sum_{j=1}^{N}\ket j\!\bra j,
\qquad
X_N=\ket0\!\bra N+\ket N\!\bra0.
\label{eq:ZXN}
\end{equation}
On the kernel of \(X_N\), the associated binary POVM produces unbiased random
outputs.

Let $q_N=(1-\eta)^N$ be the probability that all \(N\) excitations are lost on one side. The
\(Z_N\) outcomes disagree only when complete loss occurs on exactly one side
of the \(\ket{NN}\) component. Therefore,
\begin{equation}
\langle Z_N\otimes Z_N\rangle
=
1-2q_N(1-q_N)
\equiv c_N(\eta).
\label{eq:cN}
\end{equation}
Meanwhile, the local coherence
\(\ket0\!\bra N\) is reduced by \(\eta^{N/2}\), giving
\begin{equation}
\langle X_N\otimes X_N\rangle=\eta^N,
\qquad
\langle Z_N\otimes X_N\rangle
=
\langle X_N\otimes Z_N\rangle=0.
\end{equation}

Taking \(A_0=Z_N\), \(A_1=X_N\), and the signed POVM observables
\begin{equation}
B_{0,1}
=
\frac{c_N(\eta)Z_N\pm\eta^N X_N}
{\sqrt{c_N(\eta)^2+\eta^{2N}}},
\end{equation}
gives the directly achievable value
\begin{equation}
S_N(\eta)
=
2\sqrt{c_N(\eta)^2+\eta^{2N}}.
\label{eq:SN}
\end{equation}
The condition \(S_N(\eta)>2\) is equivalent to
\begin{equation}
\eta^{2N}>
4q_N(1-q_N)(1-q_N+q_N^2).
\label{eq:golden-condition}
\end{equation}
For \(1/2\le\eta<1\), dividing by \(q_N\) gives
\begin{equation}
\left(\frac{\eta^2}{1-\eta}\right)^N
>
4(1-q_N)(1-q_N+q_N^2).
\label{eq:golden-comparison}
\end{equation}
This form makes the threshold transparent. If
\(1/2\le\eta\le\etaG\), where
$\etaG=\frac{\sqrt5-1}{2}$,
then \(\eta^2/(1-\eta)\le1\), so the LHS of
Eq.~\eqref{eq:golden-comparison} is at most \(1\). At the same time,
\(q_N\le1/2\), which implies
\[
4(1-q_N)(1-q_N+q_N^2)>1.
\]
The inequality therefore cannot be satisfied for any finite \(N\).
For \(\eta\le1/2\), this conclusion also follows from the universal
no-go result established above.
Conversely, if \(\eta>\etaG\), then
\(\eta^2/(1-\eta)>1\), and the left-hand side of
Eq.~\eqref{eq:golden-comparison} grows exponentially with \(N\).
Since its right-hand side is always smaller than \(4\), choosing a finite
\(N\) such that
\[
\left(\frac{\eta^2}{1-\eta}\right)^N>4
\]
guarantees \(S_N(\eta)>2\). Thus, some finite-\(N\) protocol violates
CHSH exactly when \(\eta>\etaG\).
The boundary is therefore the self-similarity condition
\begin{equation}
\frac{\eta}{(1-\eta)+\eta}=\frac{1-\eta}{\eta},
\end{equation}
whose positive root is
\(\etaG=(\sqrt5-1)/2=\varphi^{-1}\), with the \emph{golden ratio}
\(\varphi=(1+\sqrt5)/2\). Increasing \(N\) suppresses one-sided complete
erasure as \((1-\eta)^N\), but also the squared two-party coherence as
\(\eta^{2N}\). The inverse golden ratio marks where these exponential rates
exchange dominance: above \(\etaG\), complete erasure decays faster, allowing
a finite-\(N\) CHSH violation.
This inverse-golden-ratio threshold is exact for the family of states
superposing the vacuum with an \(N\)-excitation state in the pure-loss limit.
Moreover, any violation predicted in that limit is also achieved by the
corresponding finite-\(n\) PI protocol once \(n\) is sufficiently large; further interpretaion are in
Supplemental Material, Sec.~S4.1.
Figure~\ref{fig:threshold-hierarchy} compares this threshold with
\(\eta=1/\sqrt2\) and the universal boundary \(\eta=1/2\).

\begin{figure}[t]
\centering
\includegraphics[width=0.5\textwidth]{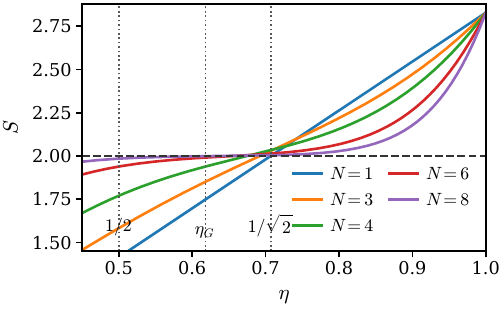}
\caption{Recovery-free CHSH violation under particle loss for the PI protocols. The dashed horizontal line is the local bound \(S=2\).
For the one-excitation encoding (\(N=1\)),
the threshold is
\(\eta=1/\sqrt{2}\).
For the large-\(n\) limit, increasing \(N\) lowers the onset of CHSH violation toward
\(\eta_G=(\sqrt5-1)/2\), the threshold obtained by
optimizing over finite \(N\). We find no protocol such that $S>2$ for $1/2\le \eta \le \eta_G$ (Supp. Mat. S4.4).}
\label{fig:threshold-hierarchy}
\end{figure}
\section{Sparse binomial PI redundancy protocol}
The preceding protocols protect nonlocality at a fixed survival probability by
encoding in a low-excitation sector. We now consider a complementary
finite-size regime. Binomial permutation-invariant (PI) codes use increasing
particle-number redundancy to tolerate a growing \emph{number} of deletions
\cite{Ouyang2014,Aydin2024,Shibayama2021}.

We use the GNU permutation-invariant binomial code~\cite{Ouyang2014,Ouyang2024}, specialized to symmetric Dicke states. Let \(t\) denote the number of deletions to be tolerated, set \(g=t+1\), and choose a code order \(\mathsf M>t\). In the GNU notation, \(g\) is the spacing between occupied Dicke weights, while \(\mathsf M\) determines how many weights
\(0,g,\dots,\mathsf M g\) enter the code and therefore fixes the block size
\(n=g\mathsf M\). The logical states are
\begin{align}
\ket{0_L}&=
\frac{1}{\sqrt{2^{\mathsf M-1}}}
\sum_{\ell\ {\rm even}}\sqrt{\binom{\mathsf M}{\ell}}\ket{D_{g\ell}^{n}},
\nonumber\\
\ket{1_L}&=
\frac{1}{\sqrt{2^{\mathsf M-1}}}
\sum_{\ell\ {\rm odd}}\sqrt{\binom{\mathsf M}{\ell}}\ket{D_{g\ell}^{n}}.
\label{eq:binomial_code_main}
\end{align}
Thus, the logical value is stored in the parity of the coarse index \(\ell\),
while adjacent Dicke components in the two logical superpositions differ by
\(g\) excitations.

Suppose that \(r\le t\) particles are deleted and that \(m=n-r\) particles
remain. In a branch in which \(a\le r<g\) excitations were deleted, a component
\(\ket{D_{g\ell}^{n}}\) contributes to the surviving Dicke weight
\(q=g\ell-a\). Since \(a<g\), the residue class of \(q\) modulo \(g\) uniquely
determines the offset, namely \(a=(-q)\bmod g\), without any physical
correction. More precisely, for
\[
a(q)=(-q)\bmod g,\qquad L(q)=\frac{q+a(q)}{g}.
\]
For every Dicke weight \(q\) arising from such a loss branch, the integer
\(L(q)\) recovers the original coarse Dicke index \(\ell\).
This allows us to define recovery-free measurements directly on the surviving \(m\)-particle
block.
The first measurement is the projective observable
\begin{equation}
Z_g^{(m)}
=
\sum_{q=0}^{m}(-1)^{L(q)}
\ket{D_q^m}\bra{D_q^m}.
\label{eq:Zg_main}
\end{equation}
It reads the parity of the reconstructed coarse index \(L(q)\), not the parity
of the surviving excitation number \(q\). Since \(L(q)=\ell\) on every valid
loss branch with \(a<g\), the logical value is unchanged even when an odd number
of excitations is lost. The
complementary measurement compares the surviving Dicke components that differ
by \(g\):
\begin{equation}
X_g^{(m)}
=
\sum_{\substack{0\le q\le m-g\\ L(q)\ {\rm even}}}
\left(
\ket{D_q^m}\bra{D_{q+g}^m}
+
\ket{D_{q+g}^m}\bra{D_q^m}
\right).
\label{eq:Xg_main}
\end{equation}
It asks whether each such pair has positive or negative relative phase. On
Dicke components not belonging to a displayed pair, \(X_g^{(m)}\) vanishes:
these components carry no useful answer to this coherence question. Since
\(\|X_g^{(m)}\|\le1\), it nevertheless defines a valid binary POVM, with
effects \((I\pm X_g^{(m)})/2\).
Writing \(Z_g=\bigoplus_m Z_g^{(m)}\) and
\(X_g=\bigoplus_m X_g^{(m)}\), Alice and Bob use the binary observables
\begin{equation}
A_0=Z_g,\quad A_1=X_g,
\qquad
B_{0,1}=\frac{Z_g\pm X_g}{\sqrt2}.
\end{equation}
These are well-defined because \(Z_g^{(m)}\) and \(X_g^{(m)}\) anticommute on
every paired subspace, while \(X_g^{(m)}=0\) elsewhere; hence
\(\|B_0\|,\|B_1\|\le1\). Bob's two settings combine the logical-value question
with the coherence question, with opposite signs for the latter. No syndrome
extraction, decoding, or recovery operation is applied.

The full finite-size bound is proved in Supplemental Material,
Secs.~S5. One explicit consequence follows by choosing an odd integer
\(\mathsf M\ge101\) and setting
$g=\mathsf M$, $t=\mathsf M-1$, $n=\mathsf M^2$.
This recovery-free protocol satisfies \(S>2\) whenever the particle survival probability obeys $\eta\ge1-\frac{1}{4 \,r}$, for $r\sim \mathcal{O}(\sqrt n)$ deleted particles.
In this family, adding more particles allows the protocol to withstand more
deletions: for \(n\) particles, it tolerates up to about \(\sqrt n\) deleted
particles. However, this does not mean that it tolerates a fixed fraction of
lost particles. The loss probability per particle must decrease as
\(1/\sqrt n\), so that the typical total number of deletions remains of order
\(r\sim\sqrt n\). This is therefore a sparse-loss redundancy result. It differs
from the low-excitation protocols discussed above, which can violate CHSH even when each particle has the same fixed survival probability \(\eta<1\) as
\(n\) increases.
%

Together, these protocols reveal complementary routes to recovery-free nonlocality: low-excitation encodings retain coherence at a constant loss rate, whereas the binomial construction uses redundancy to tolerate a growing number of deletions.
Sections~S4.2 and S4.3 of the Supplemental Material analyse imperfect-measurement models for guiding an experimental proposal.

\section{Experimental feasibility}
All protocols measure the post-loss state directly with different
coherent controls. The one-excitation and \(N\)-excitation schemes require a binary measurement that resolves and coherently mixes the vacuum with one chosen excitation sector. The binomial protocol instead requires coherent measurements between several pairs of Dicke states whose excitation numbers differ by \(g\).

The one-excitation protocol is therefore the most accessible route to a first recovery-free Bell test. It requires vacuum--one-excitation discrimination and control of their relative phase. Circuit and cavity QED offer number-selective
control and readout in this single-rail subspace~\cite{Heeres2015,Krastanov2015,Hu2019}. Photonics provides a complementary route through displacement-assisted photon counting or homodyne detection~\cite{Banaszek1999,Hessmo2004}. The higher-excitation and binomial protocols are more demanding: the former requires coherent control between \(\ket{0}\) and
\(\ket{N}\), while the latter requires this control across many Dicke-state pairs. Trapped-ion chains and multiphoton platforms have demonstrated the preparation of symmetric Dicke states and are natural candidates for exploring these more redundant encodings \cite{Hume2009,Prevedel2009}.

The central experimental limitation is imperfect phase-sensitive
measurement. In contrast, the vacuum/nonvacuum measurement can remain reliable after loss. We describe this imbalance by a contrast \(v\leq1\) that reduces the phase-sensitive component of a local measurement while leaving the population measurement unchanged. A CHSH violation then requires a
larger survival probability than in the ideal case. 
%
For the $N$-excitation family, a fixed nonzero contrast changes the required excitation number but preserves the asymptotic inverse-golden-ratio boundary (see also Supp. Mat. S6). 

\section{Conclusions}
Recovery-free CHSH nonlocality has no universal loss threshold: it depends on the measurements allowed on the post-loss state. With arbitrary measurements, any deterministic recovery can be absorbed into the final measurement. For flagged erasure, the sharp limit is \(\eta=1/2\).
For explicit protocols without recovery, the protocol closer to $\eta=1/2$ from above requires to increase the number of excitation and reaches \(\etaG=(\sqrt5-1)/2\simeq 0.618\).
A further route is to optimize the logical input state. We used $\ket{\Phi^+_L}$, but lossy Bell tests can benefit from an unbalanced state $\ket{\psi_{\lambda}}
=
\sqrt{\lambda}\ket{0_L0_L}
+
\sqrt{1-\lambda}\ket{1_L1_L}$: it reduces the loss-sensitive part while preserving the coherence needed for nonlocality. This is the idea behind Eberhard's strategy~\cite{Eberhard1993,Gigena2025}. 
The open question is whether these Eberhard-like encodings and other
measurements, can violate a Bell inequality for
$\frac12<\eta\le\eta_G$,
or whether the golden-ratio threshold remains a limit for
experimentally accessible measurements.

\section*{acknowledgments}
This work has received funding from the
European Union's Horizon Europe research and innovation programme under the Project ``Quantum Secure Networks Partnership'' (QSNP, Grant Agreement No.~101114043);
and from INFN (Istituto Nazionale di Fisica Nucleare) through the project “QUANTUM.”
OpenAI ChatGPT, using the GPT-5.6 Sol model with ultra reasoning mode and Lean4 formal-verification scripts, were used as assisting tools during manuscript preparation and computational checking\cite{scala_logic_BI}.
\bibliographystyle{apsrev4-2}
\bibliography{logic_BI}
\appendix
\setcounter{section}{0}
\renewcommand{\thesection}{S\arabic{section}}
\setcounter{equation}{0}
\numberwithin{equation}{section}

\begin{widetext}
\section{Operational models}
\label{supp:channels}

In this section, we fix the notation and rigorously support the conceptual results  with the technical details that answer our central question: \emph{how much CHSH nonlocality remains accessible when one operationally explicitly measures the damaged state directly, without first reconstructing the encoded state?}
%
Let us recall the symmetric spaces in which the PI encodings are defined.

\begin{definition}[Dicke states and symmetric survivor spaces]
\label{def:supp-dicke-state}
For \(0\le k\le n\), the \(n\)-qubit Dicke state with \(k\) excitations is
\begin{equation}
\ket{D_k^n}
=
\binom nk^{-1/2}
\sum_{x\in\{0,1\}^n:\, |x|=k}
\ket{x}.
\label{eq:supp-dicke-state}
\end{equation}
The symmetric subspace is
$\cH_{n,\sym}
=
\operatorname{span}\{\ket{D_k^n}:0\le k\le n\}$.
After unflagged deletion, a branch with \(m\) surviving particles lies in
\(\cH_{m,\sym}\).  Since the number of surviving particles can vary from run to run, the full PI output space is
$\cH_{\rm surv}^{\rm PI}
:=
\bigoplus_{m=0}^n \cH_{m,\sym}$.

\noindent For a non-symmetric system of distinguishable particles, the order of the surviving particles matters.  In that case, if \(m\) particles survive, the output lives in \(\cH^{\otimes m}\), and the full survivor-number space is
$\bigoplus_{m=0}^n \cH^{\otimes m}$.
\end{definition}

Let us recall in lemma \ref{lem:supp-dicke-split} the decomposition that separates the original Dicke state into a survivor block of \(n-r\) particles and a lost block of \(r\) particles.  In each term, \(\ket{D_{k-a}^{n-r}}\) describes the survivors, while \(\ket{D_a^r}\) describes the particles that are deleted.  The exact deletion channel is then obtained by tracing out this lost block.

\begin{lemma}[Dicke split]
\label{lem:supp-dicke-split}
Let \(0\le r\le n\).  Splitting \(n\) carriers into \(n-r\) survivors and \(r\) deleted carriers gives
\begin{equation}
\ket{D_k^n}
=
\sum_{a=0}^{r}
\left[
\frac{\binom{r}{a}\binom{n-r}{k-a}}
     {\binom nk}
\right]^{1/2}
\ket{D_{k-a}^{n-r}}\ket{D_a^r},
\label{eq:supp-dicke-split}
\end{equation}
where binomial coefficients outside their natural range are zero.
\end{lemma}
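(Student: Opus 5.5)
The plan is a direct combinatorial expansion of the definition of \(\ket{D_k^n}\) in Definition~\ref{def:supp-dicke-state}, followed by regrouping. Order the \(n\) carriers so that the first \(n-r\) are the survivors and the last \(r\) are the deleted ones. Write each bit string as a concatenation \(x=(y,z)\) with \(y\in\{0,1\}^{n-r}\) and \(z\in\{0,1\}^r\). The constraint \(|x|=k\) then becomes \(|y|+|z|=k\). We partition the sum defining \(\ket{D_k^n}\) according to \(a:=|z|\), the number of excitations carried by the deleted block, with \(0\le a\le r\). This gives
\[
\sum_{|x|=k}\ket{x}=\sum_{a=0}^{r}\Bigl(\sum_{|y|=k-a}\ket{y}\Bigr)\otimes\Bigl(\sum_{|z|=a}\ket{z}\Bigr).
\]
The regrouping is exact because the map \(x\mapsto(y,z)\) is a bijection between \(\{x:|x|=k\}\) and \(\bigsqcup_a\{y:|y|=k-a\}\times\{z:|z|=a\}\).

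Next, replace each inner sum by the corresponding normalized Dicke state, again via Definition~\ref{def:supp-dicke-state}:
\[
\sum_{|y|=k-a}\ket{y}=\tbinom{n-r}{k-a}^{1/2}\ket{D_{k-a}^{n-r}},
\qquad
\sum_{|z|=a}\ket{z}=\tbinom{r}{a}^{1/2}\ket{D_a^r}.
\]
Multiplying by the overall prefactor \(\binom nk^{-1/2}\) yields exactly the coefficients in Eq.~\eqref{eq:supp-dicke-split}.

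The only point needing care is the range of \(a\). Terms with \(k-a<0\) or \(k-a>n-r\) correspond to empty index sets. They contribute zero, which matches the convention that binomial coefficients outside their natural range vanish. Since these empty sums have no associated Dicke state, the convention also makes the undefined symbol \(\ket{D_{k-a}^{n-r}}\) harmless: its coefficient is zero.

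As a sanity check, I would verify normalization. The squared coefficients sum to one by the Vandermonde identity \(\sum_a\binom{r}{a}\binom{n-r}{k-a}=\binom nk\). This is the hypergeometric distribution of lost excitations, and it will be reused when tracing out the lost block to obtain the deletion channel.

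I do not expect a genuine obstacle here. The one step worth stating explicitly is that, by permutation invariance of \(\ket{D_k^n}\), the choice of which \(r\) carriers are deleted is immaterial. Taking the last \(r\) carriers therefore costs no generality.
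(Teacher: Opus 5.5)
Your proposal is correct and follows essentially the same route as the paper: expand $\ket{D_k^n}$ in the computational basis, group the strings by the number $a$ of excitations in the deleted block, identify each group (of size $\binom{n-r}{k-a}\binom{r}{a}$) with the unnormalized product $\ket{D_{k-a}^{n-r}}\ket{D_a^r}$, and compare normalizations. Your explicit handling of the empty index sets and the Vandermonde normalization check go slightly beyond the paper's brief proof, but they do not change the argument.
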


\begin{proof}
Expand \(\ket{D_k^n}\) in the computational basis and group the strings according to the number \(a\) of excitations in the deleted block.  For a fixed \(a\), there are
\(\binom{n-r}{k-a}\binom r a\) such strings.  Their normalized equal superposition factorizes as
\(\ket{D_{k-a}^{n-r}}\ket{D_a^r}\).  Comparing its normalization with the normalization
\(\binom nk^{-1/2}\) of \(\ket{D_k^n}\) gives the coefficient in Eq.~\eqref{eq:supp-dicke-split}.
\end{proof}

This decomposition is used on the loss models in the Bell experiment. The state prepared before loss is
\begin{equation}
\ket{\Phi_L}
=
\frac{
\ket{0_L}_A\ket{0_L}_B+
\ket{1_L}_A\ket{1_L}_B
}{\sqrt2},
\qquad
\rho_L=\ket{\Phi_L}\!\bra{\Phi_L},
\label{eq:supp-initial-logical-bell-state}
\end{equation}
where each logical state belongs to \(\cH_{n,\sym}\).
Let \(\cH\) be the Hilbert space of one particle and define
$\cH_\varnothing
:=
\cH\oplus\operatorname{span}\{\ket\varnothing\}$,
 with
$\ket\varnothing\perp\cH$.
For the permutationally invariant PI protocols, \(\cH=\mathbb C^2\). 

The single-particle flagged-erasure channel is
\begin{equation}
\cE_\eta:\cL(\cH)\to\cL(\cH_\varnothing),
\qquad
\cE_\eta(\rho)
=
\eta\rho+
(1-\eta)\Tr(\rho)\ket\varnothing\!\bra\varnothing.
\label{eq:supp-flagged-erasure-channel}
\end{equation}
Applied independently to the \(n\) particles in each laboratory, it gives the flagged output
\begin{equation}
\rho_{AB}^{\rm flag}
=
\bigl(
\cE_\eta^{\otimes n}\otimes
\cE_\eta^{\otimes n}
\bigr)(\rho_L).
\label{eq:supp-flagged-bell-output}
\end{equation}
This state records which physical particles were lost in Alice's and Bob's registers.  

Unflagged deletion removes this positional information.  
Let \([n]=\{1,\ldots,n\}\), and consider a survivor set
\(S=\{i_1<\cdots<i_m\}\subseteq[n]\).  Let \(Q_S\) be the projector onto the flag pattern in which precisely the particles in \(S\) survive.  On this branch, define the partial isometry
\begin{equation}
W_S:
Q_S\cH_\varnothing^{\otimes n}
\longrightarrow
\cH^{\otimes m}
\subset
\bigoplus_{j=0}^n\cH^{\otimes j},
\qquad
W_S\!\left(
\ket{\psi_1}_{i_1}\cdots\ket{\psi_m}_{i_m}
\otimes
\ket\varnothing_{[n]\setminus S}
\right)
=
\ket{\psi_1}\otimes\cdots\otimes\ket{\psi_m},
\label{eq:supp-packing-map}
\end{equation}
and extend it linearly.  Thus \(W_S\) removes the empty slots and preserves the order of the surviving particles.  Since
\begin{equation}
W_S^\dagger W_S=Q_S,
\qquad
\sum_{S\subseteq[n]}Q_S=I_{\cH_\varnothing^{\otimes n}},
\end{equation}
the packing channel
\begin{equation}
\cP_{\rm pack}(\omega)
=
\sum_{S\subseteq[n]}W_S\,\omega\,W_S^\dagger
\label{eq:supp-packing-channel}
\end{equation}
is completely positive and trace preserving.
The one-party unflagged deletion channel is therefore
$\cD_{n,\eta}^{\rm unflag}
=
\cP_{\rm pack}\circ\cE_\eta^{\otimes n}$.
Accordingly, the state that reaches Alice's and Bob's laboratories after unflagged particle loss is
\begin{align}
\rho_{AB}^{\rm unflag}
=
\bigl(
\cD_{n,\eta}^{\rm unflag}
\otimes
\cD_{n,\eta}^{\rm unflag}
\bigr)(\rho_L)
=
\bigl(
\cP_{\rm pack}\otimes\cP_{\rm pack}
\bigr)
\bigl(\rho_{AB}^{\rm flag}\bigr).
\label{eq:supp-unflagged-bell-output}
\end{align}
Packing is only a local processing of the output state.  It removes the information about which particles were lost, hence it does not increase entanglement.  Equivalently, every measurement that Alice or Bob performs after packing can also be regarded as a measurement on the flagged output, with the packing map included as part of the local measurement procedure.  Therefore, if the flagged output $\rho_{AB}^{\rm flag}$ is CHSH-local, then the unflagged output $\rho_{AB}^{\rm unflag}$ obtained from it by packing is also CHSH-local. This relation will be used in the no-go result of Sec.~\ref{supp:no-go}.  

We now specialize this description to PI states.  Before this specialization, the final state may depend on which particles survived, because the surviving particles keep their order in the shorter register.  For PI states this dependence disappears.  Since a PI state is invariant under permutations of the particles, all loss patterns with the same number \(r\) of deleted particles give the same reduced channel.  Thus the relevant information is only how many particles were deleted, not which ones.  If \(r\) particles are deleted, then \(m=n-r\) particles survive and the output lies in \(\cH_{m,\sym}\).  The Dicke split above gives the corresponding exact-\(r\) deletion channel.
\begin{lemma}[Exact-\(r\) PI deletion Kraus form]
\label{lem:supp-exact-deletion-kraus}
On \(\cH_{n,\sym}\), exact deletion of \(r\) particles is described by Kraus maps
\begin{equation}
K_{a}^{(r)}:\cH_{n,\sym}\to\cH_{n-r,\sym},
\qquad
K_a^{(r)}\ket{D_k^n}
=
\sqrt{q_{k,a}^{(r)}}\ket{D_{k-a}^{n-r}},
\qquad
q_{k,a}^{(r)}
=
\frac{\binom{k}{a}\binom{n-k}{r-a}}{\binom n r}.
\label{eq:supp-exact-deletion-kraus}
\end{equation}
Equivalently, the channel is
$\Del_n^{(r)}(\rho)
=
\sum_{a=0}^r
K_a^{(r)}\rho K_a^{(r)\dagger}$ ($n$ dependence is hidden in the domain of $K_a^{(r)}$).
Applying this expression to a Dicke matrix unit gives
\begin{align}
\Del_n^{(r)}
\!\left(
\ket{D_k^n}\!\bra{D_\ell^n}
\right)
=
\sum_{a=0}^r
K_a^{(r)}
\ket{D_k^n}\!\bra{D_\ell^n}
K_a^{(r)\dagger}
=
\sum_{a=0}^{r}
\sqrt{q_{k,a}^{(r)}q_{\ell,a}^{(r)}}
\ket{D_{k-a}^{n-r}}\!\bra{D_{\ell-a}^{n-r}}.
\label{eq:supp-exact-deletion-offdiag}
\end{align}
\end{lemma}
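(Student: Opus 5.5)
Since a PI input lives in \(\cH_{n,\sym}\), the plan is to fix a survivor set \(S\) with \(|S|=n-r\) and lost set \(T=[n]\setminus S\), and compute the packed branch \(W_S(\cdot)W_S^\dagger\) after tracing out \(T\). By permutation invariance of \(\ket{D_k^n}\), relabelling the particles so that \(T\) occupies the last \(r\) slots does not change the state. Lemma~\ref{lem:supp-dicke-split} then applies directly. For a basis element we have
\[
\ket{D_k^n}=\sum_a c_{k,a}\ket{D_{k-a}^{n-r}}\ket{D_a^r},
\qquad
c_{k,a}=\Bigl[\tbinom ra\tbinom{n-r}{k-a}\big/\tbinom nk\Bigr]^{1/2}.
\]
Tracing out the lost block in the orthonormal Dicke basis \(\{\ket{D_a^r}\}\) is done by the operators \(\bra{D_a^r}\) on \(T\). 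The components of the lost block outside the symmetric subspace give zero contribution, because the input is symmetric. So exact deletion of \(r\) particles has Kraus operators
\[
\widetilde K_a\ket{D_k^n}=c_{k,a}\ket{D_{k-a}^{n-r}}.
\]
This map does not depend on \(S\), so the average over which \(r\) particles are lost (conditioned on exactly \(r\)) gives the same channel.

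The remaining step is the combinatorial identity \(c_{k,a}^2=q^{(r)}_{k,a}\), that is,
\[
\frac{\binom ra\binom{n-r}{k-a}}{\binom nk}=\frac{\binom ka\binom{n-k}{r-a}}{\binom nr}.
\]
I would prove it by expanding into factorials. Both sides equal
\[
\frac{r!\,(n-r)!\,k!\,(n-k)!}{n!\,a!\,(r-a)!\,(k-a)!\,(n-r-k+a)!},
\]
which is the symmetry of the hypergeometric distribution. Out-of-range cases vanish on both sides simultaneously. With this, \(K_a^{(r)}=\widetilde K_a\).

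The matrix-unit formula \eqref{eq:supp-exact-deletion-offdiag} then follows by linearity: \(K_a\ket{D_k^n}\bra{D_\ell^n}K_a^\dagger\) equals \(\sqrt{q_{k,a}q_{\ell,a}}\ket{D_{k-a}^{n-r}}\bra{D_{\ell-a}^{n-r}}\). As a check on trace preservation, \(\sum_a K_a^\dagger K_a\) is diagonal in the Dicke basis with entries \(\sum_a q_{k,a}=1\) (Vandermonde). Off-diagonal terms \(\langle D_\ell|K_a^\dagger K_a|D_k\rangle\) vanish for \(k\ne\ell\), because the images \(\ket{D_{k-a}}\) and \(\ket{D_{\ell-a}}\) are orthogonal.

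The main subtlety is the first step. One must justify that the partial trace over \(T\) can be restricted to symmetric lost-block states, and that packing preserves the symmetric structure of the survivors so that the output lies in \(\cH_{n-r,\sym}\). Both follow from the form of the Dicke split, because every term is a product of Dicke states on the two blocks.
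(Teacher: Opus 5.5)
Your proposal is correct and follows essentially the same route as the paper: apply the Dicke split, trace out the deleted block using orthogonality of the states \(\ket{D_a^r}\) (which forces the same \(a\) on ket and bra), rewrite the coefficient via the hypergeometric symmetry \(\binom ra\binom{n-r}{k-a}/\binom nk=\binom ka\binom{n-k}{r-a}/\binom nr\), and check completeness by Vandermonde. Your extra details make explicit what the paper's short proof leaves implicit: independence from the survivor set \(S\) by permutation invariance, the factorial verification with matching out-of-range cases, and the vanishing off-diagonal entries of \(\sum_a K_a^{(r)\dagger}K_a^{(r)}\).
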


\begin{proof}
Apply Lemma~\ref{lem:supp-dicke-split} and trace out the deleted subsystem.  The deleted Dicke states are orthogonal, so only terms with the same number \(a\) of deleted excitations remain.  The coefficient in Eq.~\eqref{eq:supp-dicke-split} can be rewritten as
$\binom r a\binom{n-r}{k-a}\big/\binom n k
=
\binom k a\binom{n-k}{r-a}\big/\binom n r$.
This gives Eqs.~\eqref{eq:supp-exact-deletion-kraus}--\eqref{eq:supp-exact-deletion-offdiag}.  For fixed \(k\),
\(\sum_a q_{k,a}^{(r)}=1\), so the Kraus operators are complete.
\end{proof}

So far, the deletion number \(r\) has been fixed.  Under independent particle loss, it fluctuates from run to run.  The full PI channel is obtained by combining the exact-\(r\) channels with their binomial probabilities.  We keep the different survivor-number sectors as a direct sum because the number \(m=n-r\) is available from the length of the packed output register.

\begin{definition}[Stochastic PI deletion]
\label{def:supp-stochastic-pi-deletion}
The stochastic independent-deletion channel on the PI subspace is
\begin{equation}
\Del_{n,\eta}^{\rm PI}
=
\bigoplus_{r=0}^n
\Pr[\Rdel=r]\Del_n^{(r)},
\qquad
\Pr[\Rdel=r]
=
\binom nr (1-\eta)^r\eta^{n-r},
\label{eq:supp-stochastic-pi-deletion}
\end{equation}
where \(\Rdel\sim\Bin(n,1-\eta)\) and
\(\Jsurv=n-\Rdel\sim\Bin(n,\eta)\).  The branch with \(r\) deletions belongs to the survivor sector with \(m=n-r\) particles.
\end{definition}
We have now clarified the two loss models: flagged and unflagged particle loss.
For the sake of the rigor, we next clarify how deterministic recovery should be interpreted in a Bell test.  
%
%

Let
\(
\cR:\cL(\mathcal K)\to\cL(\mathcal H)
\)
be a deterministic quantum channel, where \(\mathcal K\) is the post-loss output space and \(\mathcal H\) is the recovered logical space.  Deterministic means completely positive and trace preserving: no outcome is selected and no run is discarded.  The adjoint map \(\cR^\dagger:\cL(\mathcal H)\to\cL(\mathcal K)\) is defined by
\begin{equation}
    \Tr\!\left[X\,\cR(\rho)\right]
=
\Tr\!\left[\cR^\dagger(X)\rho\right]
\label{eq:supp-adjoint-definition}
\end{equation}
for all states \(\rho\) on \(\mathcal K\) and all operators \(X\) on \(\mathcal H\).  This is the Hilbert--Schmidt adjoint, because it is the adjoint with respect to the inner product
\(
\langle X,Y\rangle_{\rm HS}=\Tr(X^\dagger Y)
\). Hence
\begin{equation}
\cR(\rho)=\sum_\alpha R_\alpha\rho R_\alpha^\dagger
\quad \Longrightarrow \quad
\cR^\dagger(X)
=
\sum_\alpha R_\alpha^\dagger X R_\alpha.
\label{eq:supp-adjoint-kraus}
\end{equation}

\begin{proposition}[Deterministic recovery can be included in the measurement]
\label{prop:supp-recovery-as-measurement}
Let \(\rho_{\rm loss}\) be any bipartite post-loss state on
\(\mathcal K_A\otimes\mathcal K_B\).  Let
\(
\cR_A:\cL(\mathcal K_A)\to\cL(\mathcal H_A),
\) and \(
\cR_B:\cL(\mathcal K_B)\to\cL(\mathcal H_B)
\)
be deterministic local recovery channels.  Suppose that Alice and Bob then perform POVMs
\(\{M^A_{a|x}\}_a\) and \(\{M^B_{b|y}\}_b\) on the recovered logical spaces.  Define
$\widetilde M^A_{a|x}
=
\cR_A^\dagger(M^A_{a|x})$,
and
$\widetilde M^B_{b|y}
=
\cR_B^\dagger(M^B_{b|y})$.
Then \(\{\widetilde M^A_{a|x}\}_a\) and
\(\{\widetilde M^B_{b|y}\}_b\) are valid POVMs on the post-loss spaces, and
\begin{align}\Tr\!\left[
(M^A_{a|x}\otimes M^B_{b|y})
(\cR_A\otimes\cR_B)(\rho_{\rm loss})
\right]
=
\Tr\!\left[
(\widetilde M^A_{a|x}\otimes \widetilde M^B_{b|y})
\rho_{\rm loss}
\right].
\end{align}
Therefore the CHSH value is unchanged.
\end{proposition}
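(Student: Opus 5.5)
The plan is to prove the three claims in turn: the transformed operators are POVM effects, they reproduce the recovered statistics, and the CHSH value is therefore unchanged. Everything rests on the Kraus form of the adjoint in Eq.~\eqref{eq:supp-adjoint-kraus} and the defining duality in Eq.~\eqref{eq:supp-adjoint-definition}.

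\emph{Validity of the POVMs.} Write $\cR_A(\rho)=\sum_\alpha R_\alpha\rho R_\alpha^\dagger$. Then $\widetilde M^A_{a|x}=\sum_\alpha R_\alpha^\dagger M^A_{a|x}R_\alpha$.
\begin{itemize}
\item Positivity: for any vector $\ket\psi$, $\bra\psi\widetilde M^A_{a|x}\ket\psi=\sum_\alpha\bra{\psi}R_\alpha^\dagger M^A_{a|x}R_\alpha\ket\psi\ge0$.
\item Completeness: linearity in $X$ gives $\sum_a\widetilde M^A_{a|x}=\cR_A^\dagger(I_{\mathcal H_A})=\sum_\alpha R_\alpha^\dagger R_\alpha=I_{\mathcal K_A}$, since trace preservation is exactly $\sum_\alpha R_\alpha^\dagger R_\alpha=I$.
\end{itemize}
Bob's effects are handled identically. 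In the infinite-dimensional setting one would use normality, but all spaces here are finite-dimensional.

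\emph{Equality of statistics.} Note that $\cR_A\otimes\cR_B$ has Kraus operators $R_\alpha\otimes R'_\beta$. Hence its Hilbert--Schmidt adjoint is $\cR_A^\dagger\otimes\cR_B^\dagger$, which acts on product operators as $X\otimes Y\mapsto\cR_A^\dagger(X)\otimes\cR_B^\dagger(Y)$. Applying the duality Eq.~\eqref{eq:supp-adjoint-definition} to the product channel, with $X=M^A_{a|x}\otimes M^B_{b|y}$ and $\rho=\rho_{\rm loss}$, gives the displayed identity. The only step needing care is that the duality extends from states to arbitrary bipartite operators, including entangled $\rho_{\rm loss}$. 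This follows directly from cyclicity of the trace term by term:
\[
\Tr[(M\otimes N)(R_\alpha\otimes R'_\beta)\rho(R_\alpha\otimes R'_\beta)^\dagger]=\Tr[(R_\alpha^\dagger MR_\alpha\otimes R_\beta'^\dagger NR'_\beta)\rho],
\]
and then summing over $\alpha,\beta$.

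\emph{Conclusion for CHSH.} Since every joint probability $p(a,b|x,y)$ coincides, each correlator $E_{xy}=\sum_{a,b}(-1)^{a+b}p(a,b|x,y)$ coincides, and so does $S$. No step is a genuine obstacle. The point to state explicitly is that determinism, i.e.\ trace preservation, is what makes completeness hold. For a trace-nonincreasing (postselected) recovery, $\sum_a\widetilde M_{a|x}\le I$ only, and the identity would hold only for unnormalized statistics. This is why the proposition excludes postselection.
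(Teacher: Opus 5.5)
Your proposal is correct and follows the paper's proof essentially step for step: the Kraus form of the adjoint gives positivity, trace preservation gives $\cR^\dagger(I)=I$, and Hilbert--Schmidt duality gives equality of the statistics. Your Kraus-level check that $(\cR_A\otimes\cR_B)^\dagger=\cR_A^\dagger\otimes\cR_B^\dagger$ holds on entangled inputs just makes explicit what the paper summarizes as applying the defining identity once on Alice's side and once on Bob's side.
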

\begin{proof}
Simply, since \(\cR\) is trace preserving, its Kraus operators satisfy
\(
\sum_\alpha R_\alpha^\dagger R_\alpha=I_{\mathcal K}
\). Therefore Eq.~\eqref{eq:supp-adjoint-kraus} gives
\(
\cR^\dagger(I_{\mathcal H})=I_{\mathcal K}
\). Moreover, if \(X\ge0\), then \(\cR^\dagger(X)\ge0\).  Hence a POVM
\(\{M_b\}_b\), with \(M_b\ge0\) and \(\sum_b M_b=I_{\mathcal H}\), is sent to another POVM:
\(
\cR^\dagger(M_b)\ge0\),
 \(\sum_b\cR^\dagger(M_b)=I_{\mathcal K}
\). The equality of probabilities follows directly from the defining identity
\eqref{eq:supp-adjoint-definition}, applied once on Alice's side and once on Bob's side.

For a binary measurement, the signed observable
\(
A=M_+-M_-
\)
is transformed into
\(
\widetilde A=\cR^\dagger(A)
\). It is Hermitian, and
\(
I_{\mathcal K}\pm\widetilde A
=
2\cR^\dagger(M_\pm)\ge0.
\)
Thus \(-I_{\mathcal K}\le\widetilde A\le I_{\mathcal K}\).  The effective observable is therefore a valid binary observable, although it need not be projective or experimentally simple.
\end{proof}
Going back to our central question, proposition~\ref{prop:supp-recovery-as-measurement} 
focuses on the meaning of measuring the damaged state ``without first reconstructing the encoded state''.  At the level of measurement statistics, a deterministic recovery followed by a measurement is equivalent to a single local POVM performed directly on the post-loss state.  This observation does not, however, make the problem easier: the effective POVM may be just as difficult to implement as the recovery itself.  Moreover, the proposition neither constructs a recovery channel nor tells us how much Bell nonlocality survives the loss.  Not only that, answering our central question still requires us to examine what information remains available in each loss model and whether reliable recovery is guaranteed.  
%
%

\section{Universal no-go for \(\eta\le1/2\)}
\label{supp:no-go}

This section focuses on how much Bell nonlocality can remain after loss, independently of the encoding and of the chosen measurements.  We first prove in detail that, when the survival probability satisfies \(\eta\le1/2\), CHSH violation is impossible even if Alice and Bob are allowed arbitrary local POVMs on the post-loss state as in Proposition~\ref{prop:supp-recovery-as-measurement}.  The proof is given for flagged erasure and then transferred to unflagged deletion using the local packing relation established in Sec.~\ref{supp:channels}.  

For \(\eta>1/2\), the situation changes.  The flagged-erasure channel has positive quantum capacity, which allows arbitrary local POVMs on the flagged outputs to attain CHSH values approaching asymptotically \(2\sqrt{2}\).  However, we do not specify the operational measurements. Still, it remains unclear whether $\eta=1/2$ is the lowest threshold at which accessible nonlocality persists for unflagged deletion.

The origin of the boundary \(\eta=1/2\) is most transparent at the level of a single use of the channel.  When \(\eta\leq 1/2\), the input can be routed (\emph{symmetric extension}) to either of two receiver registers with probability \(\eta\), while both registers receive an erasure flag with the remaining probability.  

\begin{lemma}[Symmetric extension of the flagged-erasure channel]
\label{lem:supp-flagged-erasure-two-extension}
Take $\mathcal{H}_{\varnothing}
=
\mathcal{H}\oplus\operatorname{span}\!\left\{\ket{\varnothing}\right\}$,
where \(\ket{\varnothing}\perp\mathcal{H}\).  For
\(0\leq\eta\leq 1/2\), the single-particle flagged-erasure channel of Eq.~\eqref{eq:supp-flagged-erasure-channel}
admits the symmetric two-output extension
\begin{align}
\widetilde{\mathcal{E}}_{\eta}^{(2)}(\rho)
=
\eta\,\rho_{B_1}\otimes
\ket{\varnothing}\!\bra{\varnothing}_{B_2}
+
\eta\,\ket{\varnothing}\!\bra{\varnothing}_{B_1}
\otimes\rho_{B_2}
+
(1-2\eta)\operatorname{Tr}(\rho)
\ket{\varnothing\varnothing}\!
\bra{\varnothing\varnothing}_{B_1B_2}.
\label{eq:supp-two-extension-channel}
\end{align}
The joint output is invariant under the exchange
\(B_1\leftrightarrow B_2\), and each marginal equals
\(\mathcal{E}_{\eta}(\rho)\).
\end{lemma}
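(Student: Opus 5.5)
The plan is to verify directly the three properties claimed for the map $\widetilde{\mathcal E}_\eta^{(2)}$ in Eq.~\eqref{eq:supp-two-extension-channel}: it is a valid channel, it is exchange-symmetric, and both of its marginals equal $\mathcal E_\eta$.

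\textbf{Validity.} First I show that $\widetilde{\mathcal E}_\eta^{(2)}$ is completely positive and trace preserving from $\cL(\cH)$ to $\cL(\cH_\varnothing\otimes\cH_\varnothing)$. I will write it as a convex combination of three CPTP maps with weights $\eta,\eta,1-2\eta$:
\begin{itemize}
\item the isometric embedding $\rho\mapsto\rho_{B_1}\otimes\ket\varnothing\!\bra\varnothing_{B_2}$, with Kraus operator $V_1=\iota_{B_1}\otimes\ket\varnothing_{B_2}$;
\item its mirror image, with Kraus operator $V_2=\ket\varnothing_{B_1}\otimes\iota_{B_2}$;
\item the replacement channel $\rho\mapsto\Tr(\rho)\ket{\varnothing\varnothing}\!\bra{\varnothing\varnothing}$.
\end{itemize}
Here $\iota$ is the inclusion $\cH\hookrightarrow\cH_\varnothing$. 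The hypothesis $\eta\le1/2$ is used exactly here: it guarantees $1-2\eta\ge0$, so the combination is convex and CP. Since each piece preserves trace and the weights sum to one, the map is trace preserving.

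\textbf{Symmetry.} Let $F$ be the swap on $B_1B_2$. Conjugating by $F$ exchanges the first two terms and fixes the third, because $\ket{\varnothing\varnothing}$ is swap-invariant. This holds for every input $\rho$, and also after tensoring with a reference system $A$. Consequently, for any $\rho_{AB_0}$ the state $\omega_{AB_1B_2}=(\id_A\otimes\widetilde{\mathcal E}_\eta^{(2)})(\rho_{AB_0})$ is invariant under $B_1\leftrightarrow B_2$.

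\textbf{Marginals.} I trace out $B_2$ term by term:
\begin{itemize}
\item the first term gives $\eta\rho$;
\item the second gives $\eta\Tr(\rho)\ket\varnothing\!\bra\varnothing$;
\item the third gives $(1-2\eta)\Tr(\rho)\ket\varnothing\!\bra\varnothing$.
\end{itemize}
The sum is $\eta\rho+(1-\eta)\Tr(\rho)\ket\varnothing\!\bra\varnothing=\mathcal E_\eta(\rho)$, matching Eq.~\eqref{eq:supp-flagged-erasure-channel}. By the swap symmetry just established, the $B_2$ marginal is the same. Because partial trace over $B_2$ commutes with $\id_A\otimes(\cdot)$, this yields $\omega_{AB_1}=\omega_{AB_2}=(\id_A\otimes\mathcal E_\eta)(\rho_{AB_0})$, which is the two-extendibility statement needed later.

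\textbf{Main difficulty.} There is no real obstacle. The only point needing care is that the identity $\Tr_{B_2}\circ\widetilde{\mathcal E}^{(2)}_\eta=\mathcal E_\eta$ holds as maps, and not just on product inputs, so that it survives tensoring with Alice's system. This follows from linearity, since the formula is linear in $\rho$ and $\Tr(\rho)$. I would also note the role of $\eta\le1/2$: for $\eta>1/2$ the third weight is negative and the construction fails, consistent with the positive quantum capacity discussed above.
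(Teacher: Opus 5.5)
Your proposal is correct and follows the same route as the paper. The nonnegative weights $\eta,\eta,1-2\eta$ give complete positivity (this is where $\eta\le1/2$ enters), the weights sum to one, the formula is manifestly swap-symmetric, and tracing out $B_2$ term by term recovers $\mathcal{E}_\eta(\rho)$. Your explicit Kraus operators, and your remark that the marginal identity holds as an identity of maps and therefore survives tensoring with Alice's system, usefully spell out steps the paper leaves implicit.
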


\begin{proof}
The three coefficients in Eq.~\eqref{eq:supp-two-extension-channel} are nonnegative when \(\eta\le1/2\).  The map is therefore completely positive.  It is also trace preserving: $\eta\Tr\rho+\eta\Tr\rho+(1-2\eta)\Tr\rho=\Tr\rho$. The formula is symmetric under \(B_1\leftrightarrow B_2\).  Taking the partial trace over \(B_2\) (identical on \(B_1\)) gives
\begin{align}
\Tr_{B_2}\widetilde\cE_\eta^{(2)}(\rho)
&=\eta\rho_{B_1}
+\eta\Tr(\rho)\ket\varnothing\!\bra\varnothing_{B_1}
+(1-2\eta)\Tr(\rho)\ket\varnothing\!\bra\varnothing_{B_1}
\nonumber\\
&=\eta\rho_{B_1}+(1-\eta)\Tr(\rho)\ket\varnothing\!\bra\varnothing_{B_1}
=\cE_\eta(\rho).
\end{align}
\end{proof}
For \(n\) independently transmitted particles, the extension is
\((\widetilde\cE_\eta^{(2)})^{\otimes n}\).  It is symmetric under exchange of the two complete receiver registers, and both marginals equal \(\cE_\eta^{\otimes n}\).

The channel extension must now be connected to Bell locality.  Because CHSH
contains only two settings for Bob, a two-extension allows those two
measurements to be represented simultaneously on different copies of Bob's
register.  Their outcomes then form a classical hidden variable, as made
explicit below.

\begin{lemma}[Locality of two-extendible states for two Bob settings]
\label{lem:supp-two-extendible-two-setting-local}
Let \(\rho_{AB}\) admit a symmetric two-extension on Bob's side.  Then
\(\rho_{AB}\) has a local hidden-variable model for every Bell experiment in
which Bob has at most two measurement settings.  In particular,
\(\rho_{AB}\) cannot violate CHSH.
\end{lemma}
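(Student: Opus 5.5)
The plan is to build the hidden-variable model explicitly from the extension. Let \(\omega_{AB_1B_2}\) be a state with \(\omega_{AB_1}=\omega_{AB_2}=\rho_{AB}\), invariant under \(B_1\leftrightarrow B_2\). Fix Bob's two settings \(y\in\{0,1\}\) with POVMs \(\{N_{b|y}\}_b\), and let Alice use arbitrary POVMs \(\{M_{a|x}\}_a\) with any number of settings. The key idea is that Bob never has to choose a setting: he measures setting \(0\) on copy \(B_1\) and setting \(1\) on copy \(B_2\). These act on different tensor factors, so they commute and define a joint POVM
\[
G_{b_0b_1}=N_{b_0|0}\otimes N_{b_1|1}\quad\text{on } B_1B_2.
\]
Its outcome pair \(\lambda=(b_0,b_1)\) will serve as the hidden variable.

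\textbf{Defining the model.} Set \(p(\lambda)=\Tr[(I_A\otimes G_{b_0b_1})\omega]\). Let \(\omega_A^{\lambda}\) be Alice's conditional state, i.e. the partial trace of \((I_A\otimes G_\lambda)\omega\) over \(B_1B_2\), normalized by \(p(\lambda)\) when \(p(\lambda)>0\). Define the response functions
\[
p(a|x,\lambda)=\Tr[M_{a|x}\,\omega_A^{\lambda}],
\qquad
p(b|y,\lambda)=\delta_{b,b_y}.
\]
Both are valid probability distributions that depend only on local data, so the model is local. Bob's response is in fact deterministic.

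\textbf{Checking the statistics.} Fix \(y=0\). Summing \(p(\lambda)p(a|x,\lambda)\delta_{b,b_0}\) over \(b_1\) uses \(\sum_{b_1}N_{b_1|1}=I_{B_2}\), so the \(B_2\) measurement disappears. What remains is \(\Tr[(M_{a|x}\otimes N_{b|0})\omega_{AB_1}]\), which equals \(\Tr[(M_{a|x}\otimes N_{b|0})\rho_{AB}]\). For \(y=1\) the same computation marginalizes \(B_1\) and yields \(\omega_{AB_2}=\rho_{AB}\). Only equality of the two marginals is needed here; exchange symmetry is not. The model therefore reproduces all correlations \(p(ab|xy)\), so the CHSH inequality \(|S|\le2\) holds because it holds for every local model.

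\textbf{Main obstacle.} The only delicate point is generality: Bob's measurements may be arbitrary POVMs rather than projective, and the two-extension may involve extended output spaces, as with flagged outputs. Neither causes trouble, because the tensor-product construction above needs no commutation assumption and no projectivity. One consequence should be stated explicitly: the argument does not extend to three Bob settings, which would require a three-extension. As a consistency check, the CHSH monogamy relation \(S_{AB}^2+S_{AB'}^2\le8\) of~\cite{Toner2006} together with \(S_{AB}=S_{AB'}\) gives the CHSH corollary independently.
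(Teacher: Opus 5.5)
Your proof is correct and follows essentially the same route as the paper: you measure Bob's two settings simultaneously on the two copies $B_1,B_2$ of the extension and take the outcome pair $\lambda=(b_0,b_1)$ as the hidden variable, with Bob responding deterministically via $\delta_{b,b_y}$ and Alice responding with her conditional statistics (your $\Tr[M_{a|x}\omega_A^{\lambda}]$ is exactly the paper's $P(a,b_0,b_1|x)/P(b_0,b_1)$), and you recover the statistics of $\rho_{AB}$ by marginalizing the unused copy. Your remark that only the marginal equality $\omega_{AB_1}=\omega_{AB_2}=\rho_{AB}$ is used, and not the exchange symmetry, is accurate and applies equally to the paper's argument.
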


\begin{proof} Let \(\omega_{AB_0B_1}\) be a two-extension of \(\rho_{AB}\), so that its two Alice--Bob marginals satisfy \( \omega_{AB_0}=\omega_{AB_1}=\rho_{AB}, \) up to the natural identification of \(B_0\) and \(B_1\) with Bob's original system. Let \(\{M_{a|x}\}_a\) be Alice's POVM for setting \(x\), and let \(\{N_{b|0}\}_b\) and \(\{N_{b|1}\}_b\) be Bob's two POVMs. Since \(B_0\) and \(B_1\) are distinct systems, Bob's setting-\(0\) measurement can be applied to \(B_0\) and Bob's setting-\(1\) measurement to \(B_1\) simultaneously. Together with Alice's measurement, this defines the joint distribution 
\begin{equation} 
P(a,b_0,b_1|x) = \Tr\!\left[ \bigl( M_{a|x}\otimes N_{b_0|0}\otimes N_{b_1|1} \bigr) \omega_{AB_0B_1} \right]. 
\end{equation} 
After summing over Alice's outcome, we obtain $
P(b_0,b_1) = \sum_a P(a,b_0,b_1|x)$.  
Because \(\sum_a M_{a|x}=I_A\), this marginal is independent of Alice's setting \(x\). We now take the hidden variable to be the pair \(\lambda=(b_0,b_1)\).  Operationally, \(\lambda\) stores one potential output for each of Bob's two settings: \(b_0\) for setting \(0\) and \(b_1\) for setting \(1\). Its probability distribution is \( P(\lambda)=P(b_0,b_1)\). For every \(\lambda\) with \(P(\lambda)>0\), define Alice and Bob's local response by \begin{equation} 
P_A(a|x,\lambda) = \frac{P(a,b_0,b_1|x)}{P(b_0,b_1)},
\qquad
P_B(b|y,\lambda) = \delta_{b,\lambda_y} = 
\begin{cases} 
\delta_{b,b_0}, & y=0,\\ \delta_{b,b_1}, & y=1, 
\end{cases} 
\label{eq:supp-two-setting-lhv-model}
\end{equation} 
For zero-probability values of \(\lambda\), \(P_A\) may be defined arbitrarily. Bob's response is deterministic once \(y\) and \(\lambda\) are specified. \(\lambda_y=b_y\) denotes the component of \(\lambda\) corresponding to Bob's chosen setting. For \(y=0\), the local model gives 
\begin{align} 
\sum_\lambda P(\lambda)P_A(a|x,\lambda)P_B(b|0,\lambda) &= \sum_{b_0,b_1} P(a,b_0,b_1|x)\delta_{b,b_0} = \sum_{b_1}P(a,b,b_1|x) \nonumber\\ &= \Tr\!\left[ (M_{a|x}\otimes N_{b|0})\omega_{AB_0} \right] = \Tr\!\left[ (M_{a|x}\otimes N_{b|0})\rho_{AB} \right]. \end{align} 
The same calculation using the \(AB_1\) marginal applies for \(y=1\).  The
local model therefore reproduces all probabilities in the Bell experiment,
and CHSH cannot be violated.
\end{proof}

This construction is the symmetric-extension version of the standard connection between shareability and local hidden-variable models \cite{Terhal2003}. For CHSH, one may equivalently use the usual monogamy relation \cite{Toner2006}.

\begin{theorem}[Universal CHSH no-go for \(\eta\leq 1/2\)]
\label{thm:supp-half-loss-no-go}
For \(\eta\leq 1/2\), deterministic recovery before measurement gives  $S\leq 2$ for flagged and unflagged deletions.
\end{theorem}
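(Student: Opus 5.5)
The plan is to chain three earlier results: the symmetric extension of the erasure channel (Lemma~\ref{lem:supp-flagged-erasure-two-extension}), the locality of two-extendible states (Lemma~\ref{lem:supp-two-extendible-two-setting-local}), and the absorption of recovery into measurement (Proposition~\ref{prop:supp-recovery-as-measurement}).

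\emph{Step 1: reduce recovery to measurement.} By Proposition~\ref{prop:supp-recovery-as-measurement}, any deterministic local recovery $\cR_A\otimes\cR_B$ followed by POVMs gives exactly the statistics of the POVMs $\cR_A^\dagger(M^A_{a|x})$ and $\cR_B^\dagger(M^B_{b|y})$ acting directly on the post-loss state. It therefore suffices to show that $\rho^{\rm flag}_{AB}$ and $\rho^{\rm unflag}_{AB}$ give $S\le2$ for arbitrary local POVMs.

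\emph{Step 2: build a two-extension on Bob's side.} Start from an arbitrary initial state $\rho_L$ (the argument does not use the encoding). Apply $\cE_\eta^{\otimes n}$ on Alice's side and $(\widetilde\cE^{(2)}_\eta)^{\otimes n}$ on Bob's side, producing registers $B=B_1^{(1)}\cdots B_1^{(n)}$ and $B'=B_2^{(1)}\cdots B_2^{(n)}$. Call the result $\omega_{ABB'}$.
\begin{itemize}
\item \emph{Exchange symmetry.} Lemma~\ref{lem:supp-flagged-erasure-two-extension} gives invariance of each factor under $B_1^{(i)}\leftrightarrow B_2^{(i)}$. Performing all these swaps at once is exactly the swap $B\leftrightarrow B'$, so $\omega_{ABB'}$ is symmetric.
\item \emph{Marginals.} Tracing out $B'$ factor by factor turns each $\widetilde\cE^{(2)}_\eta$ into $\cE_\eta$. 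Hence $\omega_{AB}=\rho^{\rm flag}_{AB}$, and likewise $\omega_{AB'}=\rho^{\rm flag}_{AB}$.
\end{itemize}

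\emph{Step 3: transfer to unflagged deletion.} Apply $\cP_{\rm pack}$ to $A$, to $B$ and to $B'$. Since the same map acts on $B$ and $B'$, the operation commutes with the swap, so symmetry is kept. Local maps also commute with partial traces, so each marginal becomes $(\cP_{\rm pack}\otimes\cP_{\rm pack})(\rho^{\rm flag}_{AB})=\rho^{\rm unflag}_{AB}$ by Eq.~\eqref{eq:supp-unflagged-bell-output}. Thus both loss models yield states that are two-extendible on Bob's side. Lemma~\ref{lem:supp-two-extendible-two-setting-local} then supplies a local model for Bob's two CHSH settings, giving $|S|\le2$ for all POVMs. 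Combined with Step 1, this covers recovery-assisted strategies as well.

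\emph{Expected obstacle.} The delicate point is bookkeeping rather than analysis: one must check that the $n$-fold product of single-particle extensions is symmetric under the \emph{global} register swap. One must also handle the variable-length direct-sum output of $\cP_{\rm pack}$ on $B$ and $B'$ without breaking the identification of $B$ with $B'$. I would treat both by writing the swap as a product of local swaps and noting that $\cP_{\rm pack}\otimes\cP_{\rm pack}$ commutes with the swap, since both output spaces are $\bigoplus_j\cH^{\otimes j}$.
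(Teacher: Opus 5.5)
Your proof is correct and follows the paper's argument: you tensor the single-particle symmetric extensions on Bob's side, invoke the two-setting locality lemma, and absorb any deterministic recovery into the POVMs. The only variation is in the unflagged step. The supplementary proof pulls $\cP_{\rm pack}$ back into the local measurement via Proposition~\ref{prop:supp-recovery-as-measurement}, whereas you push the extension forward by applying $\cP_{\rm pack}$ to both $B$ and $B'$; this is exactly the main-text argument and is equally valid.
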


\begin{proof}
By Lemma~\ref{lem:supp-flagged-erasure-two-extension}, each flagged-erasure channel on Bob's side has a symmetric two-output extension when \(\eta\leq 1/2\).  Applying the tensor product of these extensions to an arbitrary bipartite input produces a symmetric extension of the complete post-loss state on Bob's side.
Lemma~\ref{lem:supp-two-extendible-two-setting-local} therefore implies that all CHSH correlations obtained from the flagged output are local.

The unflagged output is obtained by applying the packing channel
\(\mathcal{P}_{\mathrm{pack}}\) of Eq.~\eqref{eq:supp-packing-channel} locally to the flagged output.  Any measurement performed after packing can also be regarded as a measurement on the flagged state, with packing included in the local measurement procedure as shown in Proposition~\ref{prop:supp-recovery-as-measurement}.
Packing therefore cannot turn a CHSH-local flagged state into a CHSH-nonlocal unflagged state.  Additional loss or other local processing on Alice's side cannot restore a violation.
\end{proof}

For \(\eta>1/2\), we now show how the quantum coding theorem for flagged erasure provides encodings and deterministic decoders whose effective logical channels approach the identity.  The following continuity estimate translates this channel-fidelity statement into a CHSH statement.

We use the trace distance
$D(\rho,\tau)
:=
\lVert\rho-\tau\rVert_1/2$,
which controls how much measurement statistics can change when one state is
replaced by another.

\begin{lemma}[Bell-state fidelity guarantees a large CHSH value]
\label{lem:supp-chsh-continuity}
Let
$\Phi=\ket{\Phi^+}\!\bra{\Phi^+}$
be a two-qubit Bell state, and let \(\sigma\) be any two-qubit state.  Fix a
CHSH operator \(\mathcal{B}_{\Phi}\) corresponding to measurements that attain
the Tsirelson value on \(\Phi\):
$\operatorname{Tr}(\mathcal{B}_{\Phi}\Phi)=2\sqrt{2}$.
If $F(\sigma,\Phi):=\bra{\Phi^+}\sigma\ket{\Phi^+}
\geq 1-\epsilon$, then the same measurements give
$\operatorname{Tr}(\mathcal{B}_{\Phi}\sigma)
\geq
2\sqrt{2}-4\sqrt{2\epsilon}$.
\end{lemma}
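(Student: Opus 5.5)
Our plan is to bound the change in the CHSH expectation by the operator norm of the Bell operator times the trace distance, and then to bound the trace distance by the fidelity.

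\emph{Step 1: operator norm.} The optimal CHSH operator is
\[
\mathcal{B}_{\Phi}=A_0\otimes(B_0+B_1)+A_1\otimes(B_0-B_1),
\]
with $\pm1$-valued observables. Its operator norm is at most $2\sqrt2$. This follows from Tsirelson's bound, $\mathcal{B}_\Phi^2=4I-[A_0,A_1]\otimes[B_0,B_1]$, together with $\|[A_0,A_1]\|\le2$ and $\|[B_0,B_1]\|\le 2$. A cruder bound, $\|\mathcal{B}_\Phi\|\le4$ from the triangle inequality, also suffices for the stated constant.

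\emph{Step 2: Hölder.} By Hölder's inequality,
\[
|\Tr(\mathcal{B}_\Phi\sigma)-\Tr(\mathcal{B}_\Phi\Phi)|\le\|\mathcal{B}_\Phi\|_\infty\,\|\sigma-\Phi\|_1=2\|\mathcal{B}_\Phi\|_\infty D(\sigma,\Phi).
\]

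\emph{Step 3: Fuchs--van de Graaf.} Since $\Phi$ is pure, $F(\sigma,\Phi)=\bra{\Phi^+}\sigma\ket{\Phi^+}$ is the squared Uhlmann fidelity. The Fuchs--van de Graaf inequality then gives $D(\sigma,\Phi)\le\sqrt{1-F}\le\sqrt{\epsilon}$.

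\emph{Combining.} Using $\|\mathcal B_\Phi\|\le 2\sqrt2$, we get
\[
\Tr(\mathcal{B}_\Phi\sigma)\ge 2\sqrt2-2\cdot2\sqrt2\sqrt\epsilon=2\sqrt2-4\sqrt{2\epsilon}.
\]
This is exactly the claim.

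\emph{Main point of care.} The only delicate point is getting the constant right. This requires using Tsirelson's norm bound $2\sqrt2$ rather than $4$, which would give $8\sqrt\epsilon$ instead of $4\sqrt{2\epsilon}$. It also requires using the pure-state form of Fuchs--van de Graaf with the paper's convention $F=\bra{\Phi^+}\sigma\ket{\Phi^+}$, not its square root. If one prefers to avoid quoting Tsirelson's norm bound, a direct alternative is available: diagonalize $\mathcal{B}_\Phi$ in the Bell basis, where its eigenvalues are $\pm2\sqrt2$ and $0$. Then $\Tr(\mathcal{B}_\Phi\sigma)\ge2\sqrt2F-2\sqrt2(1-F)\ge 2\sqrt2-4\sqrt2\epsilon$, which is even stronger, since $\epsilon\le\sqrt\epsilon$ for $\epsilon\le1$. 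This spectral argument applies to the standard optimal settings, and by local unitary equivalence it extends to all Tsirelson-attaining settings on $\Phi$.
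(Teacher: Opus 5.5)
Your main line is the paper's proof exactly. Fuchs--van de Graaf for the pure state $\Phi$ gives $D(\sigma,\Phi)\le\sqrt{\epsilon}$, and H\"older with $\|\mathcal{B}_\Phi\|_\infty\le 2\sqrt2$ then costs at most $2\sqrt2\cdot 2\sqrt{\epsilon}=4\sqrt{2\epsilon}$. Your Tsirelson-identity justification of the norm bound and the Bell-basis spectral alternative, which even gives $2\sqrt2-4\sqrt2\,\epsilon$, are both correct additions.

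The one slip is the parenthetical in Step~1 claiming that the crude bound $\|\mathcal{B}_\Phi\|\le 4$ ``also suffices for the stated constant.'' It only yields $2\sqrt2-8\sqrt{\epsilon}$, which is weaker than $2\sqrt2-4\sqrt{2\epsilon}$. You note this yourself at the end, so delete that remark.
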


\begin{proof}
Because \(\Phi\) is pure, the Fuchs--van de Graaf inequality gives
$D(\sigma,\Phi)
\leq
\sqrt{1-F(\sigma,\Phi)}
\leq
\sqrt{\epsilon}$.
Equivalently, \(\lVert\sigma-\Phi\rVert_1\leq 2\sqrt{\epsilon}\).  The change
in the expectation value of \(\mathcal{B}_{\Phi}\) satisfies
$\left|
\operatorname{Tr}\!\left[
\mathcal{B}_{\Phi}(\sigma-\Phi)
\right]
\right|
\leq
\lVert\mathcal{B}_{\Phi}\rVert_{\infty}
\lVert\sigma-\Phi\rVert_1$.
For a CHSH operator,
\(\lVert\mathcal{B}_{\Phi}\rVert_{\infty}\leq 2\sqrt{2}\).  It follows that
\begin{align}
\operatorname{Tr}(\mathcal{B}_{\Phi}\sigma)
=
\operatorname{Tr}(\mathcal{B}_{\Phi}\Phi)
+
\operatorname{Tr}\!\left[
\mathcal{B}_{\Phi}(\sigma-\Phi)
\right]
\geq
2\sqrt{2}-(2\sqrt{2})(2\sqrt{\epsilon})
=
2\sqrt{2}-4\sqrt{2\epsilon}.
\end{align}
The optimized CHSH value cannot be smaller than the value obtained with this
fixed choice of measurements.
\end{proof}

The capacity theorem controls the two local transmission procedures
separately.  The next corollary combines their errors and bounds the CHSH value
when both halves of the Bell state pass through their respective effective
logical channels.
\begin{corollary}[Combining the two local fidelity guarantees]
\label{cor:supp-two-sided-continuity}
Let \(\Lambda_A\) and \(\Lambda_B\) be deterministic channels acting on
Alice's and Bob's logical qubits, respectively.  Suppose that
$\operatorname{Tr}\!\left[
\Phi(\Lambda_A\otimes\operatorname{id})(\Phi)
\right]
\geq 1-\epsilon_A$,
$\operatorname{Tr}\!\left[
\Phi(\operatorname{id}\otimes\Lambda_B)(\Phi)
\right]
\geq 1-\epsilon_B$.
Then
$\sigma=(\Lambda_A\otimes\Lambda_B)(\Phi)$
satisfies
\begin{equation}
D(\sigma,\Phi)
\leq
\sqrt{\epsilon_A}+\sqrt{\epsilon_B}.
\label{eq:supp-two-sided-trace-distance}
\end{equation}
The Bell-state CHSH measurements therefore give
$\operatorname{Tr}(\mathcal{B}_{\Phi}\sigma)
\geq
2\sqrt{2}
-
4\sqrt{2}
\left(
\sqrt{\epsilon_A}+\sqrt{\epsilon_B}
\right)$.
In particular, if \(\epsilon_A,\epsilon_B\leq\epsilon\), then
\begin{equation}
\operatorname{Tr}(\mathcal{B}_{\Phi}\sigma)
\geq
2\sqrt{2}-8\sqrt{2\epsilon}.
\label{eq:supp-two-sided-chsh-equal}
\end{equation}
The optimized CHSH value satisfies the same lower bounds.
\end{corollary}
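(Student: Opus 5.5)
The plan is to reduce the two-sided statement to two one-sided trace-distance bounds and combine them with the triangle inequality. Lemma~\ref{lem:supp-chsh-continuity} then converts the resulting trace distance into a CHSH bound. Concretely, I would write
\[
\sigma-\Phi=(\Lambda_A\otimes\Lambda_B)(\Phi)-(\id\otimes\Lambda_B)(\Phi)+(\id\otimes\Lambda_B)(\Phi)-\Phi ,
\]
so that
\[
D(\sigma,\Phi)\le D\bigl((\Lambda_A\otimes\Lambda_B)(\Phi),(\id\otimes\Lambda_B)(\Phi)\bigr)+D\bigl((\id\otimes\Lambda_B)(\Phi),\Phi\bigr).
\]
The second term is controlled directly by the hypothesis on $\Lambda_B$. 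The pure-state Fuchs--van de Graaf inequality, exactly as used in the proof of Lemma~\ref{lem:supp-chsh-continuity}, gives $D\le\sqrt{1-F}\le\sqrt{\epsilon_B}$.

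For the first term, the key observation is that both states are images of the same channel $\id\otimes\Lambda_B$. The first is $(\id\otimes\Lambda_B)\bigl((\Lambda_A\otimes\id)(\Phi)\bigr)$ and the second is $(\id\otimes\Lambda_B)(\Phi)$, since $\Lambda_A\otimes\Lambda_B=(\id\otimes\Lambda_B)\circ(\Lambda_A\otimes\id)$. Trace distance is contractive under CPTP maps (data processing), so the first term is at most $D\bigl((\Lambda_A\otimes\id)(\Phi),\Phi\bigr)$. The same Fuchs--van de Graaf bound, with the hypothesis on $\Lambda_A$, then gives at most $\sqrt{\epsilon_A}$. This establishes Eq.~\eqref{eq:supp-two-sided-trace-distance}.

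For the CHSH bound, I would repeat the Hölder step from the proof of Lemma~\ref{lem:supp-chsh-continuity}:
\[
\bigl|\Tr[\mathcal B_\Phi(\sigma-\Phi)]\bigr|\le\|\mathcal B_\Phi\|_\infty\|\sigma-\Phi\|_1\le 2\sqrt2\cdot 2\bigl(\sqrt{\epsilon_A}+\sqrt{\epsilon_B}\bigr).
\]
Subtracting this from $2\sqrt2$ gives the stated bound $2\sqrt2-4\sqrt2(\sqrt{\epsilon_A}+\sqrt{\epsilon_B})$. Specializing to $\epsilon_A,\epsilon_B\le\epsilon$ gives $2\sqrt2-8\sqrt{2\epsilon}$, and the optimized CHSH value is at least the value obtained with these fixed measurements.

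There is no deep obstacle here. The only point requiring care is the factorization of the two-sided channel, so that data processing is applied with the channel common to both states, which is valid because the local channels commute. I would also check that the fidelity hypotheses are exactly the pure-state overlaps $\bra{\Phi^+}\cdot\ket{\Phi^+}$, so that Fuchs--van de Graaf applies in the form $D\le\sqrt{1-F}$.
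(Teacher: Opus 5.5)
Your proposal is correct and follows the paper's argument: triangle inequality through a one-sided intermediate state, contractivity of trace distance under the remaining local channel, pure-state Fuchs--van de Graaf for each one-sided term, and then the H\"older step with $\|\mathcal B_\Phi\|_\infty\le 2\sqrt2$. The only difference is that you pass through $(\id\otimes\Lambda_B)(\Phi)$, whereas the paper uses the mirror-image intermediate $\sigma_A=(\Lambda_A\otimes\id)(\Phi)$ and writes $\sigma=(\Lambda_A\otimes\id)(\sigma_B)$.
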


\begin{proof}
Define the one-sided outputs
$\sigma_A=(\Lambda_A\otimes\operatorname{id})(\Phi)$,
$\sigma_B=(\operatorname{id}\otimes\Lambda_B)(\Phi)$.
The two fidelity assumptions and the Fuchs--van de Graaf inequality imply
$D(\sigma_A,\Phi)\leq\sqrt{\epsilon_A}$,
$D(\sigma_B,\Phi)\leq\sqrt{\epsilon_B}$.
Since
$\sigma=(\Lambda_A\otimes\operatorname{id})(\sigma_B)$,
$\sigma_A=(\Lambda_A\otimes\operatorname{id})(\Phi)$,
contractivity of trace distance under deterministic channels gives
$D(\sigma,\sigma_A)
\leq
D(\sigma_B,\Phi)
\leq
\sqrt{\epsilon_B}$.
The triangle inequality then yields
$D(\sigma,\Phi)
\leq
D(\sigma,\sigma_A)+D(\sigma_A,\Phi)
\leq
\sqrt{\epsilon_A}+\sqrt{\epsilon_B}$,
which proves Eq.~\eqref{eq:supp-two-sided-trace-distance}.
Finally,
\begin{align}
\operatorname{Tr}(\mathcal{B}_{\Phi}\sigma)
\geq
\operatorname{Tr}(\mathcal{B}_{\Phi}\Phi)
-
\lVert\mathcal{B}_{\Phi}\rVert_{\infty}
\lVert\sigma-\Phi\rVert_1
\geq
2\sqrt{2}
-
4\sqrt{2}
\left(
\sqrt{\epsilon_A}+\sqrt{\epsilon_B}
\right),
\end{align}
where we used
\(\lVert\sigma-\Phi\rVert_1=2D(\sigma,\Phi)\) and
\(\lVert\mathcal{B}_{\Phi}\rVert_{\infty}\leq 2\sqrt{2}\).
\end{proof}

\begin{proposition}[Flagged-erasure for $\eta>1/2$]
\label{prop:supp-flagged-erasure-benchmark}
For every \(\eta>1/2\) and every \(\delta>0\), there exist a block length
\(n\), local encodings of the two halves of a Bell state into \(n\) uses of
the flagged-erasure channel, and local POVMs on the flagged post-loss outputs
such that
\begin{equation}
S\geq 2\sqrt{2}-\delta.
\label{eq:supp-erasure-achievability}
\end{equation}
The recovery need not be implemented as a separate physical step, although the
resulting POVMs may contain the full complexity of the decoder.
\end{proposition}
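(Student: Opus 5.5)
The proof goes through the quantum capacity of the flagged-erasure channel, Corollary~\ref{cor:supp-two-sided-continuity}, and Proposition~\ref{prop:supp-recovery-as-measurement}.

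\textbf{Step 1: capacity input.} For $\eta>1/2$ the erasure channel $\cE_\eta$ has quantum capacity $Q(\cE_\eta)=(2\eta-1)\log_2 d>0$, where $d=\dim\cH$. We invoke the quantum coding theorem in its entanglement-fidelity form. It gives, for every $\epsilon>0$ and all sufficiently large $n$, an encoding isometry (or channel) $\mathcal C_n$ from a qubit into $\cH^{\otimes n}$ and a deterministic CPTP decoder $\cR_n$ on $\cH_\varnothing^{\otimes n}$ such that the effective channel $\Lambda=\cR_n\circ\cE_\eta^{\otimes n}\circ\mathcal C_n$ satisfies
\[
\Tr\!\left[\Phi(\id\otimes\Lambda)(\Phi)\right]\ge1-\epsilon .
\]
The only care needed here is to pass from the rate statement to a single logical qubit. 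Since the capacity is positive, codes of rate $R\in(0,Q)$ carry $\lfloor nR\rfloor\ge1$ qubits for large $n$. Restricting such a code to one logical qubit, with the remaining logical qubits fixed in a pure state, keeps the entanglement fidelity at least as high, because the fidelity of a subsystem Bell pair is lower-bounded by the global entanglement fidelity via monotonicity of fidelity under partial trace. For erasure one can also cite explicit constructions achieving vanishing error, but the general coding theorem suffices.

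\textbf{Step 2: two-sided continuity.} Alice and Bob each encode their half of $\ket{\Phi^+}$ with $\mathcal C_n$; the flagged losses act independently, giving $\rho^{\rm flag}_{AB}$. Apply $\cR_n$ on each side, so $\Lambda_A=\Lambda_B=\Lambda$ with $\epsilon_A,\epsilon_B\le\epsilon$. Corollary~\ref{cor:supp-two-sided-continuity} then yields
\[
\Tr(\mathcal B_\Phi\sigma)\ge2\sqrt2-8\sqrt{2\epsilon}.
\]
Choosing $\epsilon\le\delta^2/128$ gives $S\ge2\sqrt2-\delta$ for the recovered strategy.

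\textbf{Step 3: removing the recovery.} Apply Proposition~\ref{prop:supp-recovery-as-measurement} with $\cR_A=\cR_B=\cR_n$ and the Tsirelson-optimal binary POVMs. The pulled-back POVMs $\cR_n^\dagger(M^A_{a|x})$ and $\cR_n^\dagger(M^B_{b|y})$ are valid POVMs on the flagged outputs and reproduce the same statistics, hence the same $S$. This also justifies the closing remark of the statement: the effective POVMs inherit the full complexity of the decoder, even though no separate recovery step is performed.

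The main obstacle is Step~1, specifically stating the coding theorem in exactly the form needed. It must hold for a fixed single logical qubit, with \emph{entanglement} fidelity rather than average or minimum fidelity, and with a CPTP decoder. If only a minimum-fidelity statement is available, I would convert it using the standard relation between minimum/average fidelity and entanglement fidelity, $F_e\ge1-\tfrac{3}{2}(1-F_{\rm avg})$ for a qubit, and adjust $\epsilon$ accordingly. The remaining steps are direct applications of the earlier results.
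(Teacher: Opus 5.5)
Your proposal is correct and follows the paper's proof step for step. You use positive erasure capacity for $\eta>1/2$, then the two-sided continuity corollary with $\epsilon\le\delta^2/128=(\delta/(8\sqrt2))^2$, then absorb the decoders into the POVMs via the adjoint map.

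One detail in your extra reduction from a rate-$R$ code to a single logical qubit (a step the paper does not spell out) needs fixing. It works if you pad the unused logical qubits with the maximally mixed state, because then monotonicity of fidelity under partial trace applies. It also works with a pure basis state chosen by averaging, since entanglement fidelity is convex in the input so some basis state is at least as good as the average. It does not work, as you wrote it, for an arbitrary fixed pure state.
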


\begin{proof}
The quantum capacity of the qubit flagged-erasure channel is
$Q(\mathcal{E}_{\eta})
=
\max\{0,2\eta-1\}$.
It is therefore positive precisely when \(\eta>1/2\) \cite{Bennett1997}.  The
quantum channel coding theorem then provides, for a sufficiently large block
length \(n\), encoding channels \(\mathcal{V}_{A,n}\),
\(\mathcal{V}_{B,n}\) and deterministic recovery channels
\(\mathcal{R}_{A,n}\), \(\mathcal{R}_{B,n}\) such that the effective logical
channels
$\Lambda_{A,n}
=
\mathcal{R}_{A,n}\circ
\mathcal{E}_{\eta}^{\otimes n}\circ
\mathcal{V}_{A,n}$, $\Lambda_{B,n}
=
\mathcal{R}_{B,n}\circ
\mathcal{E}_{\eta}^{\otimes n}\circ
\mathcal{V}_{B,n}$
have entanglement fidelities arbitrarily close to one
\cite{Schumacher1996,Lloyd1997,Devetak2005,Barnum2000}.  In particular, the
codes can be chosen so that
$\operatorname{Tr}\!\left[
\Phi(\Lambda_{A,n}\otimes\operatorname{id})(\Phi)
\right]
\geq 1-\epsilon$,
$\operatorname{Tr}\!\left[
\Phi(\operatorname{id}\otimes\Lambda_{B,n})(\Phi)
\right]
\geq 1-\epsilon$.
Apply these encodings to the two halves of \(\Phi\), let both encoded blocks
pass through their local flagged-erasure channels, and then apply the two
recovery channels.  The resulting logical state is
$\sigma_n
=
(\Lambda_{A,n}\otimes\Lambda_{B,n})(\Phi)$.
Corollary~\ref{cor:supp-two-sided-continuity} shows that the Bell-state CHSH
measurements on \(\sigma_n\) give
$S(\sigma_n)
\geq
2\sqrt{2}-8\sqrt{2\epsilon}$.
Choosing
$\epsilon
\leq
\left(\frac{\delta}{8\sqrt{2}}\right)^2$
gives Eq.~\eqref{eq:supp-erasure-achievability}.

Finally, Proposition~\ref{prop:supp-recovery-as-measurement} includes each
deterministic recovery in the corresponding local measurement.  It therefore
produces valid POVMs on the flagged post-loss outputs with exactly the same
probabilities and the same CHSH value.
\end{proof}

Theorem~\ref{thm:supp-half-loss-no-go} and
Proposition~\ref{prop:supp-flagged-erasure-benchmark} characterize the
asymptotic flagged-erasure problem when arbitrary local POVMs are allowed:
\begin{align}
\eta\leq\frac{1}{2}
\quad\Longrightarrow\quad
S\leq 2;\;\;
\qquad
\eta>\frac{1}{2}
\quad\Longrightarrow\quad
S\longrightarrow 2\sqrt{2}
\quad
\text{for suitable block codes}.
\end{align}
The second implication is an asymptotic, unrestricted-measurement benchmark.
It does not determine the exact finite-block optimum or provide an
experimentally simple measurement.  The effective POVMs are defined through
the recovery channels and may be as difficult to implement as the recoveries
themselves. Going back to our central question: both the loss models does not have an operational procedure and does not exist CHSH violation for $\eta\le 1/2$. But for the flagged loss model, we showed that $\eta=1/2$ is the lowest possible threshold where nonlocality becomes unaccessible.

\paragraph{Why the flagged benchmark does not transfer to unflagged deletion.}

As shown in Sec.~\ref{supp:channels}, unflagged deletion is obtained from flagged erasure by discarding the erasure positions and packing the surviving particles:
\begin{equation}
\mathcal{D}_{n,\eta}^{\mathrm{unflag}}
=
\mathcal{P}_{\mathrm{pack}}
\circ
\mathcal{E}_{\eta}^{\otimes n}.
\label{eq:supp-deletion-as-postprocessed-erasure}
\end{equation}
Thus, every strategy for the unflagged output can also be applied to the flagged output by first ignoring the flags.  Consequently, the no-go result for flagged erasure transfers directly to unflagged deletion: discarding local information cannot create a CHSH violation.

The converse does not hold.  A flagged decoder may condition its action on the erasure pattern, which is unavailable after packing.  Such a decoder \(\mathcal{R}_{n}^{\mathrm{flag}}\) can be implemented on the unflagged output only if, on the relevant post-loss states, it factors as
\begin{equation}
\mathcal{R}_{n}^{\mathrm{flag}}
=
\mathcal{R}_{n}^{\mathrm{unflag}}
\circ
\mathcal{P}_{\mathrm{pack}}
\label{eq:supp-recovery-factorization}
\end{equation}
for some unflagged decoder \(\mathcal{R}_{n}^{\mathrm{unflag}}\).  Equivalently, after absorbing recovery into the measurement, each flagged effect must satisfy
\begin{equation}
\widetilde{M}_{b}^{\mathrm{flag}}
=
\mathcal{P}_{\mathrm{pack}}^{\dagger}
\left(M_{b}^{\mathrm{unflag}}\right)
\label{eq:supp-measurement-factorization}
\end{equation}
for some effect \(M_{b}^{\mathrm{unflag}}\) on the packed output.  The quantum-capacity theorem used in Proposition~\ref{prop:supp-flagged-erasure-benchmark} guarantees reliable decoding when the erasure positions are known, but it guarantees neither of these factorizations.  Positive capacity of the flagged channel therefore gives no achievability result for unflagged deletion.

Permutation-invariant encodings are naturally suited to this problem: after relabeling the survivors, their reduced state depends only on the number of deleted particles, not on their positions.  The general flagged-erasure capacity theorem, however, does not establish that capacity-achieving codes can be chosen permutation invariant or decoded using only the packed survivor state.

Hence the implication is strictly one-way: a code for unflagged deletion also works for flagged erasure, whereas a flagged code need not survive the removal of the flags.  Proposition~\ref{prop:supp-flagged-erasure-benchmark} therefore leaves the unflagged problem unresolved for \(\eta>1/2\).  This motivates the explicit PI encodings and survivor-sector measurements developed below, which produce recovery-free CHSH violations without access to the loss positions.

\section{One-excitation PI protocol}
\label{supp:one-excitation}

This section proves the finite-\(n\) one-excitation CHSH formula and its asymptotic threshold.  The input code is
$\ket{0_L}=\ket{D_0^n}$,
$\ket{1_L}=\ket{D_1^n}$.
For a survivor block of size \(m\), we identify
$\ket0_m:=\ket{D_0^m}$,
$\ket1_m:=\ket{D_1^m}$.
When no confusion can arise, the subscript \(m\) is omitted.

\begin{proposition}[One-excitation deletion as logical amplitude damping]
\label{prop:supp-one-ex-amplitude-damping}
Conditioned on the survivor number \(m=n-r\), exact PI deletion acts on the one-excitation logical matrix units as
\begin{align}
\ket0\!\bra0&\longmapsto \ket0_m\!\bra0_m,\quad
\ket1\!\bra1\longmapsto \frac{m}{n}\ket1_m\!\bra1_m+\left(1-\frac{m}{n}\right)\ket0_m\!\bra0_m,\quad
\ket0\!\bra1\longmapsto \sqrt{\frac{m}{n}}\,\ket0_m\!\bra1_m,
\label{eq:supp-one-ex-damping}
\end{align}
\end{proposition}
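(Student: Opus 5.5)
The plan is to specialize Lemma~\ref{lem:supp-exact-deletion-kraus} to the two Dicke weights $k\in\{0,1\}$ with $r=n-m$ deletions. All matrix units in Eq.~\eqref{eq:supp-one-ex-damping} then follow from Eq.~\eqref{eq:supp-exact-deletion-offdiag}. Since the input lies in $\operatorname{span}\{\ket{D_0^n},\ket{D_1^n}\}$, the Kraus maps $K_a^{(r)}$ reduce to $a\in\{0,1\}$. No excitation can be lost from $\ket{D_0^n}$, and at most one from $\ket{D_1^n}$.

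The first step is to compute the weights $q_{k,a}^{(r)}=\binom{k}{a}\binom{n-k}{r-a}/\binom{n}{r}$:
\begin{equation*}
q_{0,0}^{(r)}=\frac{\binom{n}{r}}{\binom n r}=1,\qquad
q_{0,1}^{(r)}=0,
\end{equation*}
\begin{equation*}
q_{1,0}^{(r)}=\frac{\binom{n-1}{r}}{\binom n r}=\frac{n-r}{n}=\frac{m}{n},\qquad
q_{1,1}^{(r)}=\frac{\binom{n-1}{r-1}}{\binom n r}=\frac{r}{n}=1-\frac{m}{n}.
\end{equation*}
These identities are elementary ratios of binomial coefficients. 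The boundary cases $r=0$ and $r=n$ are handled by the stated convention that out-of-range binomials vanish: for $r=n$, $\binom{n-1}{n}=0$ gives $m/n=0$.

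The second step substitutes these weights into Eq.~\eqref{eq:supp-exact-deletion-offdiag}.

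\emph{Vacuum term.} For $k=\ell=0$, only $a=0$ contributes with weight $1$, which gives $\ket0_m\!\bra0_m$.

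\emph{Excited-state population.} For $k=\ell=1$, the $a=0$ term gives $\tfrac mn\ket{D_1^m}\!\bra{D_1^m}$. The $a=1$ term gives $(1-\tfrac mn)\ket{D_0^m}\!\bra{D_0^m}$.

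\emph{Coherence.} For $k=0$ and $\ell=1$, the $a=1$ term is killed because $q_{0,1}^{(r)}=0$. Only $a=0$ survives, with coefficient $\sqrt{q_{0,0}q_{1,0}}=\sqrt{m/n}$, producing $\sqrt{m/n}\,\ket0_m\!\bra1_m$. The $\ket1\!\bra0$ unit follows by Hermitian conjugation.

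Comparing with the standard amplitude-damping Kraus pair $\{\ket0\!\bra0+\sqrt{\gamma'}\ket1\!\bra1,\ \sqrt{1-\gamma'}\ket0\!\bra1\}$ with $\gamma'=m/n$ identifies the channel as $\mathcal A_{m/n}$, consistent with the main text.

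I do not expect a real obstacle here. The only care needed is bookkeeping. First, $\ket{D_{k-a}^{m}}$ with $k-a<0$ must be excluded. Second, the output register size is $m$, so the identification $\ket j_m=\ket{D_j^m}$ must be used consistently. Third, the edge case $m=0$ must be checked: there $\ket{D_1^0}$ does not exist, but its coefficient $m/n$ vanishes, and $\ket{D_0^0}$ is the empty-register state. I would verify the two binomial ratios explicitly and note that for fixed $k$ we have $\sum_a q_{k,a}^{(r)}=1$, which confirms trace preservation, as in the proof of Lemma~\ref{lem:supp-exact-deletion-kraus}.
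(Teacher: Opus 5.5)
Your proposal is correct and matches the paper's proof: you specialize the exact-$r$ Kraus coefficients $q_{k,a}^{(r)}$ to $k\in\{0,1\}$, obtain $q_{0,0}^{(r)}=1$, $q_{1,0}^{(r)}=m/n$ and $q_{1,1}^{(r)}=1-m/n$, and note that the coherence survives only through $a=0$ with coefficient $\sqrt{q_{0,0}^{(r)}q_{1,0}^{(r)}}=\sqrt{m/n}$. Your edge-case checks ($r=0$, $r=n$, $m=0$) and the explicit match to the amplitude-damping Kraus pair are harmless additions that the paper leaves implicit.
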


\begin{proof}
Use the exact-\(r\) PI deletion Kraus coefficients
$q_{k,a}^{(r)}=\frac{\binom{k}{a}\binom{n-k}{r-a}}{\binom nr}$.
For \(k=0\), only \(a=0\) contributes and \(q_{0,0}^{(r)}=1\), giving \(\ket0\!\bra0\mapsto\ket0_m\!\bra0_m\).  For \(k=1\), the two possible lost-excitation numbers are \(a=0,1\), with
\begin{equation}
q_{1,0}^{(r)}=\frac{\binom{n-1}{r}}{\binom nr}=\frac{n-r}{n}=\frac{m}{n},
\qquad
q_{1,1}^{(r)}=\frac{\binom{n-1}{r-1}}{\binom nr}=\frac{r}{n}=1-\frac{m}{n}.
\end{equation}
This gives the stated population map.  For the off-diagonal matrix unit, the channel action is a sum over the same lost-excitation number on ket and bra.  Since \(q_{0,a}^{(r)}\) is nonzero only for \(a=0\), one obtains the coefficient \(\sqrt{q_{0,0}^{(r)}q_{1,0}^{(r)}}=\sqrt{m/n}\).
\end{proof}
For each survivor block define the single-rail signed observables
\begin{align}
X_m=\ket0_m\!\bra1_m+\ket1_m\!\bra0_m,\qquad
Y_m=-i\ket0_m\!\bra1_m+i\ket1_m\!\bra0_m.
\label{eq:supp-one-ex-observables}
\end{align}
The post-deletion state has support only on this two-dimensional subspace in every survivor block. We set the signed observables to zero on its orthogonal complement.

\begin{lemma}[Fixed-block correlators]
\label{lem:supp-one-ex-fixed-correlators}
Let Alice and Bob have survivor numbers \(m_A,m_B\).  Set \(\eta_A=m_A/n\) and \(\eta_B=m_B/n\).  For the locally deleted encoded Bell state
$\ket{\Phi_L}=\frac{\ket{0_L0_L}+\ket{1_L1_L}}{\sqrt2},$
the blockwise observables in Eq.~\eqref{eq:supp-one-ex-observables} satisfy
\begin{equation}
\langle X\otimes X\rangle=\sqrt{\eta_A\eta_B},
\qquad
\langle Y\otimes Y\rangle=-\sqrt{\eta_A\eta_B},
\qquad
\langle X\otimes Y\rangle=\langle Y\otimes X\rangle=0.
\label{eq:supp-one-ex-fixed-correlators}
\end{equation}
Consequently the equatorial CHSH settings
\begin{equation}
A_0=X,
\quad
A_1=Y,
\qquad
B_0=\frac{X-Y}{\sqrt2},
\quad
B_1=\frac{X+Y}{\sqrt2}
\end{equation}
give
\begin{equation}
S(m_A,m_B)=2\sqrt2\,\frac{\sqrt{m_A m_B}}{n}.
\label{eq:supp-one-ex-fixed-S}
\end{equation}
\end{lemma}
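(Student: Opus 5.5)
The plan is to compute the post-deletion state in a fixed survivor block exactly, and then read off the four correlators from its matrix elements. By Proposition~\ref{prop:supp-one-ex-amplitude-damping}, applied independently on Alice's side with survivor number \(m_A\) and on Bob's side with \(m_B\), the local maps act as amplitude damping with parameters \(\eta_A=m_A/n\) and \(\eta_B=m_B/n\). I would expand
\[
\ket{\Phi_L}\!\bra{\Phi_L}=\tfrac12\bigl(\ket{00}\!\bra{00}+\ket{00}\!\bra{11}+\ket{11}\!\bra{00}+\ket{11}\!\bra{11}\bigr)
\]
and push each matrix unit through the tensor product of the two local maps term by term.

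The diagonal terms \(\ket{00}\!\bra{00}\) and \(\ket{11}\!\bra{11}\) are sent to mixtures of diagonal units \(\ket{jk}_{m_A m_B}\!\bra{jk}\). This holds because the population map in Eq.~\eqref{eq:supp-one-ex-damping} creates no coherence. The cross term \(\ket{00}\!\bra{11}=\ket0\!\bra1\otimes\ket0\!\bra1\) is mapped to \(\sqrt{\eta_A\eta_B}\,\ket{00}\!\bra{11}\), and its adjoint is mapped correspondingly. The output therefore lies in the span of \(\ket0_{m},\ket1_{m}\) on each side, so the convention of setting the observables to zero on the orthogonal complement plays no role. The same convention covers the degenerate block \(m=0\), where both sides of Eq.~\eqref{eq:supp-one-ex-fixed-S} vanish.

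Next I would evaluate \(\Tr[(P\otimes P')\rho]\) for \(P,P'\in\{X,Y\}\). Both \(X\) and \(Y\) are purely off-diagonal in the \(\{\ket0,\ket1\}\) basis, so every diagonal term contributes zero. Only the coherence \(\tfrac12\sqrt{\eta_A\eta_B}(\ket{00}\!\bra{11}+\mathrm{h.c.})\) survives, and the correlator equals \(\sqrt{\eta_A\eta_B}\,\mathrm{Re}\bra{11}P\otimes P'\ket{00}\). From the definitions, \(\bra1X\ket0=1\) and \(\bra1Y\ket0=i\). This gives the products \(1\) for \(XX\), \(i\cdot i=-1\) for \(YY\), and \(i\) for \(XY\) and \(YX\). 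The last two have zero real part, which yields Eq.~\eqref{eq:supp-one-ex-fixed-correlators}.

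Finally, I would substitute into \(S=\langle A_0B_0\rangle+\langle A_0B_1\rangle+\langle A_1B_0\rangle-\langle A_1B_1\rangle\), writing \(c=\sqrt{\eta_A\eta_B}\). Each of the first three terms equals \(c/\sqrt2\), and the last equals \(-c/\sqrt2\), so \(S=2\sqrt2\,c=2\sqrt2\sqrt{m_Am_B}/n\). Along the way I would check that \(B_{0,1}=(X\mp Y)/\sqrt2\) are valid binary observables, i.e.\ that \(\|B_{0,1}\|\le1\). This follows because \(X\) and \(Y\) anticommute on the two-dimensional block and vanish elsewhere.

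There is no real obstacle here. The only step needing care is the sign and phase bookkeeping for \(Y\) in the cross correlators, so that \(\langle X\otimes Y\rangle\) is correctly seen to vanish and not to give an imaginary contribution. That step rests on \(\rho\) having a real coherence coefficient.
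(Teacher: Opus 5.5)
Your proposal is correct and follows essentially the same route as the paper. Both arguments apply the one-excitation amplitude-damping map on each side, so that the $\ket{00}\!\bra{11}$ coherence is rescaled by $\sqrt{\eta_A\eta_B}$, observe that only this coherence feeds the equatorial correlators, and substitute into the CHSH combination. Your explicit formula $\langle P\otimes P'\rangle=\sqrt{\eta_A\eta_B}\,\mathrm{Re}\bra{11}P\otimes P'\ket{00}$ simply spells out the phase bookkeeping that the paper compresses into ``the Bell state convention.''
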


\begin{proof}
The only matrix elements contributing to \(X\otimes X\) and \(Y\otimes Y\) are the coherences between \(\ket{0_m0_m}\) and \(\ket{1_m1_m}\).  By Proposition~\ref{prop:supp-one-ex-amplitude-damping}, the two-sided coherence of the encoded Bell state is multiplied by \(\sqrt{\eta_A\eta_B}\).  The Bell state convention gives \(\langle X\otimes X\rangle=1\) and \(\langle Y\otimes Y\rangle=-1\) before damping, hence Eq.~\eqref{eq:supp-one-ex-fixed-correlators}.  Mixed equatorial correlators vanish by the same phase convention.  Substituting these values in
\begin{equation}
A_0\otimes(B_0+B_1)+A_1\otimes(B_0-B_1)
=\sqrt2\,X\otimes X-\sqrt2\,Y\otimes Y
\end{equation}
gives Eq.~\eqref{eq:supp-one-ex-fixed-S}.
\end{proof}

\begin{theorem}[One-excitation CHSH value]
\label{thm:supp-one-ex-chsh}
For independent deletion probability \(p=1-\eta\), let \(\Jsurv\sim\Bin(n,\eta)\) be the survivor number on one side.  Direct blockwise single-rail measurements give
\begin{equation}
S_n(p)=2\sqrt2\left(\E\sqrt{\Jsurv/n}\right)^2.
\label{eq:supp-one-ex-finite-n}
\end{equation}
Moreover,
$S_n(p)\longrightarrow 2\sqrt2\,\eta$
as \(n\to\infty\).  Therefore, the ideal asymptotic one-excitation protocol violates CHSH iff
$\eta>\frac1{\sqrt2}$.
\end{theorem}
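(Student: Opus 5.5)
The proof averages the fixed-block result of Lemma~\ref{lem:supp-one-ex-fixed-correlators} over the independent survivor numbers, then takes the large-$n$ limit.

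\textbf{Finite-$n$ formula.} Under independent loss, Alice's and Bob's survivor numbers $m_A,m_B$ are independent copies of $\Jsurv\sim\Bin(n,\eta)$. The measurements are block diagonal in the survivor sector, so the total output is the direct sum of branch states weighted by $\Pr[\Jsurv=m_A]\Pr[\Jsurv=m_B]$. Every blockwise correlator, and hence the CHSH combination, is the corresponding mixture of the fixed-block values.

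A single choice of settings works in every block, because Lemma~\ref{lem:supp-one-ex-fixed-correlators} fixes them independently of $(m_A,m_B)$. In the branch $m_A=0$ (or $m_B=0$) the observables vanish and the branch contributes $0$. This matches the formula $2\sqrt2\sqrt{m_Am_B}/n=0$, so no separate case is needed.

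Hence
\[
S_n=2\sqrt2\,\E\bigl[\sqrt{m_Am_B}\bigr]/n.
\]
By independence this factorizes as $2\sqrt2\,(\E\sqrt{\Jsurv/n})^2$, which is Eq.~\eqref{eq:supp-one-ex-finite-n}.

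\textbf{Limit $n\to\infty$.} Set $Y_n=\Jsurv/n$. Then $Y_n$ takes values in $[0,1]$ and $Y_n\to\eta$ in probability, by Chebyshev with $\mathrm{Var}(Y_n)=\eta(1-\eta)/n$. Since $\sqrt{\cdot}$ is bounded and continuous on $[0,1]$, bounded convergence gives $\E\sqrt{Y_n}\to\sqrt\eta$, and therefore $S_n\to2\sqrt2\,\eta$.

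For a quantitative version, use $|\sqrt y-\sqrt\eta|\le|y-\eta|/\sqrt\eta$. This gives
\[
|\E\sqrt{Y_n}-\sqrt\eta|\le\sqrt{(1-\eta)/n}
\]
for $\eta>0$, which controls the finite-$n$ corrections.

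\textbf{Threshold.} The asymptotic value $2\sqrt2\,\eta$ exceeds $2$ exactly when $\eta>1/\sqrt2$. Two directions need care.

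For $\eta>1/\sqrt2$, the limit $2\sqrt2\,\eta$ exceeds $2$ with a gap, so $S_n>2$ for all large $n$.

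For $\eta\le1/\sqrt2$, "iff" should be read for the ideal asymptotic protocol. At finite $n$, Jensen's inequality ($\sqrt{\cdot}$ concave) gives $\E\sqrt{Y_n}\le\sqrt{\E Y_n}=\sqrt\eta$, so $S_n\le2\sqrt2\,\eta\le2$ for every $n$. The protocol therefore never violates CHSH below the threshold, at any block size.

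The main point to handle correctly is the statistical model of loss. The statement uses independent survivor numbers on the two sides, giving $(\E\sqrt{\cdot})^2$. The main text instead had a common $m$, giving $\E[M/n]$. I would remark that both models yield the same limit $2\sqrt2\,\eta$ and the same $1/\sqrt2$ threshold, and that the Jensen bound is what makes the converse direction hold at finite $n$.
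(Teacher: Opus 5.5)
Your proof is correct and follows the paper's argument: you average the fixed-block value $2\sqrt2\sqrt{m_Am_B}/n$ over the independent survivor numbers, factorize by independence, and pass to the limit using convergence in probability together with boundedness of $\sqrt{M/n}$ on $[0,1]$. Your Jensen bound $S_n\le 2\sqrt2\,\eta$ and the explicit $\sqrt{(1-\eta)/n}$ rate are valid small strengthenings the paper does not state; the Jensen bound makes the converse direction hold at every finite $n$ for these settings.
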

\begin{proof} Let \(M_{n,A},M_{n,B}\sim \operatorname{Bin}(n,\eta)\) be the independent survivor numbers on Alice's and Bob's sides. Averaging Eq.~\eqref{eq:supp-one-ex-fixed-S} over the two independent deletion processes gives 
\begin{align} 
S_n(p) = 2\sqrt2\, \E\!\left[ \sqrt{\frac{M_{n,A}}{n}} \sqrt{\frac{M_{n,B}}{n}} \right] 
= 2\sqrt2\, \E\sqrt{\frac{M_{n,A}}{n}}\, \E\sqrt{\frac{M_{n,B}}{n}} 
= 2\sqrt2 \left( \E\sqrt{\frac{M_n}{n}} \right)^2. 
\end{align} 
To evaluate the large-\(n\) limit, write \( M_n=\sum_{i=1}^n X_i, \) where the \(X_i\) are independent Bernoulli random variables with \(\Pr(X_i=1)=\eta\). Thus \(M_n/n\) is the fraction of particles that survive. By the law of large numbers, 
\[ 
\frac{M_n}{n}\xrightarrow{P}\eta \qquad\text{in probability, i.e. } \forall\delta>0\quad
\Pr\!\left( \left|\frac{M_n}{n}-\eta\right|>\delta \right)\longrightarrow0. \] 
Since \(x\mapsto\sqrt{x}\) is continuous on \([0,1]\), it follows that $\sqrt{\frac{M_n}{n}} \xrightarrow{P}\sqrt{\eta}$. We now pass from convergence in probability to convergence of the expectation. The variables \(\sqrt{M_n/n}\) lies in $[ 0,1 ]$ for every \(n\). This uniform bound implies uniform integrability: values in the tails cannot make a finite contribution to the expectation because the variables never exceed one. Convergence in probability together with uniform integrability therefore gives \( \E\left| \sqrt{\frac{M_n}{n}}-\sqrt{\eta} \right| \longrightarrow0, \) and hence \( \E\sqrt{\frac{M_n}{n}}\longrightarrow\sqrt{\eta}\). Substituting this limit into the finite-\(n\) expression yields \( S_n(p) \longrightarrow 2\sqrt2\,\eta\). The limiting value exceeds the local CHSH bound \(2\) precisely when \( \eta>\frac1{\sqrt2}\). \end{proof}
\section{\(N\)-excitation PI protocols and the pure-loss limit}
\label{supp:pure-loss-limit}
We consider PI deletion with finite $N$ excitations in the limit of particle $n\to\infty$.  Since deletion produces sectors with different numbers of survivors, we first identify their low-excitation states with states in a common finite-Fock space. We then prove that the resulting packed channel converges in diamond norm to the bosonic pure-loss channel. This convergence connects the finite-\(n\) Dicke code \(\ket{D_0^n},\ket{D_{\Nfp}^n}\) with the asymptotic protocol discussed below that lead to the golden ratio limit.

Fix an excitation cutoff \(K<\infty\) and let us restrict the input subspace \begin{equation} 
\cC_{n,K} = \operatorname{span} \{\ket{D_0^n},\ldots,\ket{D_K^n}\} \subseteq\cH_{n,\sym}=\operatorname{span} \{\ket{D_0^n},\ldots,\ket{D_n^n}\}. 
\end{equation} 
States in this subspace contain at most \(K\) excitations. After deletion, the number \(m\) of surviving particles may vary, but the number of surviving excitations cannot exceed \(\min(K,m)\). Therefore, the part of the \(m\)-survivor block that is reachable from \(\cC_{n,K}\) is 
\begin{equation} 
\cH_{m,K} = \operatorname{span} \{\ket{D_j^m}:0\le j\le\min(K,m)\}, \end{equation} 
and the complete reachable output space is 
\begin{equation}
    \cH_{n,K}^{\rm out} = \bigoplus_{m=0}^n\cH_{m,K}, 
    \text{ with }
    \rho=\bigoplus_{m=0}^np_m \rho_m,\quad
    p_m=\binom{n}{m}\eta^m(1-\eta)^{n-m}={\rm Pr}(J=m).
\end{equation}
To compare PI deletion with a pure-loss channel acting on a fixed Fock space, we subsequently retain \(j\) and remove the block label \(m\) through the packing map defined below. We place all survivor sectors in the common Fock space 
$\cH_K = \operatorname{span}\{\ket0,\ldots,\ket K\}$. 
For each \(m\), define the isometry 
\begin{equation} 
J_m:\cH_{m,K}\longrightarrow\cH_K, \qquad J_m\ket{D_j^m}=\ket j. \end{equation} 
Thus \(m\) records the total number of surviving particles, whereas \(j\) records the number of surviving excitations. The packing map retains \(j\) and discards the survivor-number label \(m\): \begin{equation} 
\cP_{n,K}(\omega) = \sum_{m=0}^n J_m\Pi_m\omega\Pi_mJ_m^\dagger, \label{eq:supp-fock-packing-map} 
\end{equation} 
where \(\Pi_m\) projects onto \(\cH_{m,K}\). The operators \(J_m\Pi_m\) are Kraus operators. Moreover, 
\begin{equation} 
\sum_{m=0}^n (J_m\Pi_m)^\dagger(J_m\Pi_m) = \sum_{m=0}^n\Pi_m = I_{\cH_{n,K}^{\rm out}}, 
\end{equation} 
so \(\cP_{n,K}\) is completely positive and trace preserving on every output that can arise from an input in \(\cC_{n,K}\). 
Let \(U_{n,K}:\cH_K\to\cC_{n,K}\) be the encoding isometry $U_{n,K}\ket k=\ket{D_k^n}$. 
Stochastic PI deletion followed by packing defines $\cT_{n,K,\eta}:\cH_K\mapsto \cH_K$ as
\begin{equation} 
\cT_{n,K,\eta}(\rho) = \cP_{n,K}\!\left[ \Del_{n,p}^{\rm PI} \bigl(U_{n,K}\rho U_{n,K}^\dagger\bigr) \right], \qquad p=1-\eta. \label{eq:supp-packed-deletion-channel} 
\end{equation} 
Both the input and output of \(\cT_{n,K,\eta}\) belong to the same finite-dimensional space \(\cH_K\), hence can be compare, in the same space, with the pure-loss channel that describes the large-\(n\) limit. For fixed \(K\), the packed PI-deletion channel approaches a process in which each of the finitely many excitations survives independently with probability \(\eta\). This is the finite-cutoff bosonic pure-loss channel 
\begin{equation} 
\cL_\eta(\rho) = \sum_{a=0}^{K}L_a\rho L_a^\dagger, \qquad L_a\ket{k} = \sqrt{\binom{k}{a}(1-\eta)^a\eta^{k-a}}\, \ket{k-a}. 
\label{eq:supp-pure-loss-channel} 
\end{equation} 
Here \(k\) is the initial excitation number and \(a\) is the number of excitations lost. Hence \(0\le a\le k\le K\), and \(L_a\ket{k}=0\) for \(a>k\). The corresponding probability of losing exactly \(a\) excitations is \(\binom{k}{a}(1-\eta)^a\eta^{k-a}\). To prove convergence, we compare the two channels on the matrix units \(\ket{k}\!\bra{\ell}\), which form a basis of \(\cL(\cH_K)\). For the pure-loss channel, 
\begin{equation} 
\cL_\eta(\ket k\!\bra\ell) = \sum_{a=0}^{\min(k,\ell)} \sqrt{\binom{k}{a}\binom{\ell}{a}} (1-\eta)^a\eta^{(k+\ell)/2-a} \ket{k-a}\!\bra{\ell-a}. 
\label{eq:supp-pure-loss-matrix-units} 
\end{equation}
\begin{remark}[Extension to the full survivor space]The packing map \(\cP_{n,K}\) is defined on the output space reached from inputs with at most \(K\) excitations, which is sufficient for the convergence theorem. For the sake of formality, to extend it to the full survivor-space direct sum, let \(\Pi_{\rm out}\) project onto \(\cH_{n,K}^{\rm out}\), choose any fixed state \(\tau_K\in\cL(\cH_K)\), and define
\begin{equation} 
\widehat\cP_{n,K}(\omega) = \cP_{n,K} \bigl(\Pi_{\rm out}\omega\Pi_{\rm out}\bigr) + \Tr\!\left[(I-\Pi_{\rm out})\omega\right]\tau_K. 
\label{eq:supp-full-packing-extension} 
\end{equation} 
This map first distinguishes the reachable subspace from its orthogonal complement. It applies the packing map to the former and replaces the latter by \(\tau_K\), and is therefore CPTP. Since deletion from \(\cC_{n,K}\) never leaves the reachable subspace, this extension agrees with \(\cP_{n,K}\), so it does not affect the convergence argument we are discussing below. 
\end{remark}
To compare the packed finite-\(n\) channel with pure loss, we evaluate both channels on the matrix units \(\ket{k}\!\bra{\ell}\), which form a basis of \(\cL(\cH_K)\). After encoding, \( \ket{k}\!\bra{\ell} \longmapsto \ket{D_k^n}\!\bra{D_\ell^n}. \) Let \(M_n\sim\Bin(n,\eta)\) be the random number of surviving particles. Conditioned on \(J_n=j\), the probability that \(b\) of the \(k\) input excitations remain among the survivors is 
\begin{equation} h_{n,j}(k,b) = \frac{\binom{k}{b}\binom{n-k}{j-b}}{\binom nj}. 
\label{eq:supp-hypergeom-survivor} 
\end{equation} 
If \(a\) excitations are lost, then \(b=k-a\). Hence \(h_{n,j}(k,k-a)\) is the conditional probability of losing exactly \(a\) excitations from \(\ket{D_k^n}\). Applying the exact-deletion formula of Lemma~\ref{lem:supp-exact-deletion-kraus}, followed by the packing identification \(J_j\ket{D_s^j}=\ket s\), gives 
\begin{equation} 
\cT_{n,K,\eta}(\ket k\!\bra\ell) = \sum_{a=0}^{\min(k,\ell)} \Gamma_{n,\eta}^{k\ell a} \ket{k-a}\!\bra{\ell-a}, \quad\Gamma_{n,\eta}^{k\ell a} = \E\!\left[ \sqrt{ h_{n,J_n}(k,k-a) h_{n,J_n}(\ell,\ell-a) } \right]. 
\label{eq:supp-Gamma-coeff} 
\end{equation} 
The same loss index \(a\) appears on the ket and bra because tracing out the deleted subsystem removes coherence between different lost-excitation numbers. The pure-loss channel has the same output matrix units, with coefficients 
\begin{equation} 
\Gamma_{\infty,\eta}^{k\ell a} = \sqrt{\binom{k}{a}\binom{\ell}{a}} (1-\eta)^a \eta^{(k+\ell)/2-a}. 
\label{eq:supp-Gamma-limit} 
\end{equation} 
The convergence of the two channels is therefore reduced to the convergence of the finitely many coefficients \(\Gamma_{n,\eta}^{k\ell a}\) to \(\Gamma_{\infty,\eta}^{k\ell a}\).



\begin{lemma}[Coefficient convergence]
\label{lem:supp-coefficient-convergence}
Fix \(K<\infty\) and a compact interval \(I=[\eta_-,\eta_+]\subset(0,1)\).  There is an explicit quantity \(\beta_{n,K,I}\to0\) such that, for all \(\eta\in I\) and all \(0\le a\le\min(k,\ell)\le K\),
\begin{equation}
\left|\Gamma_{n,\eta}^{k\ell a}-\Gamma_{\infty,\eta}^{k\ell a}\right|
\le \beta_{n,K,I}.
\label{eq:supp-coefficient-bound}
\end{equation}
One valid choice is as follows.  Let
$\delta_I=\frac12\min\{\eta_-,1-\eta_+\}$
and define
\begin{equation}
\alpha_{n,K,I}
=
\max_{\substack{0\le k\le K,\,0\le b\le k\\
\delta_I\le j/n\le1-\delta_I}}
\left|
 h_{n,j}(k,b)-\binom{k}{b}(j/n)^b(1-j/n)^{k-b}
\right|.
\end{equation}
Let
$G_{k\ell a}(x)=
\sqrt{\binom{k}{a}\binom{\ell}{a}}
(1-x)^a x^{(k+\ell)/2-a}$
and let \(L_{K,I}\) be the maximum Lipschitz constant of the finitely many functions \(G_{k\ell a}\) on \([\delta_I,1-\delta_I]\).  Then
\begin{equation}
\beta_{n,K,I}
=2\sqrt{\alpha_{n,K,I}}+\frac{L_{K,I}}{2\sqrt n}+8e^{-2n\delta_I^2}
\label{eq:supp-beta-bound}
\end{equation}
has the required property.
\end{lemma}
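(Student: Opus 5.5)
The plan is to split the expectation defining \(\Gamma_{n,\eta}^{k\ell a}\) in Eq.~\eqref{eq:supp-Gamma-coeff} according to whether the survivor fraction \(x=J_n/n\) lies in the bulk window \(W=[\delta_I,1-\delta_I]\) or in its complement. Write \(g_{n,j}:=\sqrt{h_{n,j}(k,k-a)\,h_{n,j}(\ell,\ell-a)}\) and note that \(G_{k\ell a}(x)=\sqrt{\tilde h(k,k-a;x)\tilde h(\ell,\ell-a;x)}\), where \(\tilde h(k,b;x)=\binom kb x^b(1-x)^{k-b}\) is the binomial approximation to the hypergeometric probability. Then
\[
\Gamma_{n,\eta}^{k\ell a}-\Gamma_{\infty,\eta}^{k\ell a}
=\E\!\left[(g_{n,J_n}-G_{k\ell a}(J_n/n))\mathbf 1_W\right]
+\E\!\left[(g_{n,J_n}-G_{k\ell a}(J_n/n))\mathbf 1_{W^c}\right]
+\bigl(\E[G_{k\ell a}(J_n/n)]-G_{k\ell a}(\eta)\bigr),
\]
and I will bound the three terms by \(2\sqrt{\alpha}\), a tail term, and \(L/(2\sqrt n)\) plus a tail term, respectively.

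\emph{First term.} Use \(|\sqrt{uv}-\sqrt{u'v'}|\le\sqrt{|u-u'|}+\sqrt{|v-v'|}\) for \(u,v,u',v'\in[0,1]\). This follows from \(|\sqrt{uv}-\sqrt{u'v}|\le\sqrt v\,|\sqrt u-\sqrt{u'}|\le\sqrt{|u-u'|}\), together with the analogous bound in the other factor. On \(W\), each difference \(|h-\tilde h|\) is at most \(\alpha_{n,K,I}\) by definition, so this term is bounded by \(2\sqrt{\alpha_{n,K,I}}\).

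\emph{Second term.} Both \(g\) and \(G\) lie in \([0,1]\): \(g\) is a geometric mean of probabilities, and \(G\le1\) is the analogous geometric mean of binomial probabilities. The integrand therefore has modulus at most \(1\), and the term is at most \(\Pr(J_n/n\notin W)\). Since \(\eta\in I\), we have \(|\eta-\delta_I|\ge\delta_I\) and \(|\eta-(1-\delta_I)|\ge\delta_I\). Two one-sided Hoeffding bounds then give \(\Pr(J_n/n\notin W)\le2e^{-2n\delta_I^2}\).

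\emph{Third term.} Split it again on \(W\). On \(W\), use the Lipschitz bound \(|G(x)-G(\eta)|\le L_{K,I}|x-\eta|\), valid because both points lie in \(W\); note that \(\eta\in I\subset W\). Then Cauchy--Schwarz gives \(\E|x-\eta|\le\sqrt{\eta(1-\eta)/n}\le1/(2\sqrt n)\). On \(W^c\), the integrand is bounded by \(1\), so that part contributes another Hoeffding tail. Counting the tail pieces generously, with one each from \(g\) and \(G\) in each of the two expectations and two one-sided bounds per piece, yields at most \(8e^{-2n\delta_I^2}\), which matches Eq.~\eqref{eq:supp-beta-bound}. These estimates are uniform in \(\eta\in I\) and in the finitely many triples \((k,\ell,a)\).

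It remains to show that \(\beta_{n,K,I}\to0\), which reduces to \(\alpha_{n,K,I}\to0\). This is the main step requiring care, because \(j\) ranges over the whole window, and the convergence of the hypergeometric to the binomial must be uniform there. I would write
\[
h_{n,j}(k,b)=\binom kb\prod_{i<b}\frac{j-i}{n-i}\prod_{i<k-b}\frac{n-j-i}{n-b-i}.
\]
Each factor differs from \(j/n\) or \(1-j/n\) by at most \(k/(n-k)\), uniformly in \(j\). Since the product has fewer than \(K\) factors, all in \([0,1]\), telescoping gives
\[
\alpha_{n,K,I}\le 2^K K^2/(n-K)=O(1/n).
\]
The Lipschitz constant \(L_{K,I}\) is finite because each \(G_{k\ell a}\) is smooth on the compact set \(W\subset(0,1)\); the half-integer powers are harmless away from \(0\) and \(1\).
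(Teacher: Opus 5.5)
Your proof is correct and follows essentially the same route as the paper. Both arguments split the expectation on the window $[\delta_I,1-\delta_I]$ and bound the window part by $2\sqrt{\alpha_{n,K,I}}$ via the square-root difference inequality, the drift by $L_{K,I}/(2\sqrt n)$ via Lipschitz continuity and the binomial variance, and the tails by Hoeffding. Your telescoping estimate $\alpha_{n,K,I}\le 2^K K^2/(n-K)$ only makes quantitative the uniform hypergeometric-to-binomial convergence that the paper asserts qualitatively.
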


\begin{proof}
For \(j/n\in[\delta_I,1-\delta_I]\), Eq.~\eqref{eq:supp-hypergeom-survivor} is a finite product of factors of the form \((j-s)/n\), \((n-j-s)/n\), and \((n-s)/n\), with at most \(K\) factors.  Hence it converges uniformly to \(\binom{k}{b}(j/n)^b(1-j/n)^{k-b}\), and the finite maximum \(\alpha_{n,K,I}\) tends to zero.

For nonnegative numbers \(u,u',w,w'\in[0,1]\),
\begin{equation}
|\sqrt{uw}-\sqrt{u'w'}|
\le |\sqrt u-\sqrt{u'}|+|\sqrt w-\sqrt{w'}|
\le 2\sqrt{\max\{|u-u'|,|w-w'|\}}.
\end{equation}
Therefore, on the central event \(\Jsurv/n\in[\delta_I,1-\delta_I]\), the finite-\(n\) square-root coefficient differs from \(G_{k\ell a}(\Jsurv/n)\) by at most \(2\sqrt{\alpha_{n,K,I}}\).

Next compare \(G_{k\ell a}(\Jsurv/n)\) with \(G_{k\ell a}(\eta)\).  On the central interval, the Lipschitz bound gives
\begin{equation}
\E\left|G_{k\ell a}(\Jsurv/n)-G_{k\ell a}(\eta)\right|
\le L_{K,I}\E|\Jsurv/n-\eta|
\le \frac{L_{K,I}}{2\sqrt n},
\end{equation}
because \(\operatorname{Var}(\Jsurv/n)=\eta(1-\eta)/n\le1/(4n)\).  Finally, for \(\eta\in I\), the distance from \(\eta\) to either boundary of \([\delta_I,1-\delta_I]\) is at least \(\delta_I\).  Hoeffding's inequality gives
\begin{equation}
\Pr\left[\Jsurv/n\notin[\delta_I,1-\delta_I]\right]
\le 2e^{-2n\delta_I^2}.
\end{equation}
All coefficients are bounded in magnitude by one, so the discarded tail contributes at most \(8e^{-2n\delta_I^2}\) to the displayed bound.  Combining the three estimates proves Eq.~\eqref{eq:supp-beta-bound}.
\end{proof}

\begin{theorem}[Fixed-excitation PI deletion converges to pure loss]
\label{thm:supp-pi-to-pure-loss}
Fix \(K<\infty\) and a compact interval \(I=[\eta_-,\eta_+]\subset(0,1)\).  For all \(\eta\in I\),
\begin{equation}
\norm{\cT_{n,K,\eta}-\cL_\eta}_\diamond
\le
\epsilon_{n,K,I},
\qquad
\epsilon_{n,K,I}:=(K+1)^3\beta_{n,K,I},
\label{eq:supp-channel-convergence}
\end{equation}
where \(\beta_{n,K,I}\) is given in Eq.~\eqref{eq:supp-beta-bound}.  In particular, \(\epsilon_{n,K,I}\to0\).  Equivalently, for every finite reference system \(R\) and every state \(\rho_{RA}\) with \(A\simeq\cH_K\),
\begin{equation}
\left\|
(\operatorname{id}_R\otimes\cT_{n,K,\eta})(\rho_{RA})
-
(\operatorname{id}_R\otimes\cL_\eta)(\rho_{RA})
\right\|_1
\le \epsilon_{n,K,I}.
\end{equation}
For a fixed \(\eta\in(0,1)\), the same statement follows by choosing any compact interval \(I\subset(0,1)\) containing \(\eta\).
\end{theorem}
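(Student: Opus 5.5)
The plan is to bound the diamond norm of the difference map $\Delta_{n,\eta}:=\cT_{n,K,\eta}-\cL_\eta$ through its action on the matrix units $\ket k\!\bra\ell$ of $\cL(\cH_K)$, using Lemma~\ref{lem:supp-coefficient-convergence}. By Eqs.~\eqref{eq:supp-Gamma-coeff} and \eqref{eq:supp-pure-loss-matrix-units}, both channels send $\ket k\!\bra\ell$ to combinations of the same output units $\ket{k-a}\!\bra{\ell-a}$, so
\begin{equation*}
\Delta_{n,\eta}(\ket k\!\bra\ell)=\sum_{a=0}^{\min(k,\ell)}\bigl(\Gamma_{n,\eta}^{k\ell a}-\Gamma_{\infty,\eta}^{k\ell a}\bigr)\ket{k-a}\!\bra{\ell-a}.
\end{equation*}
Each output unit has trace norm one, and Lemma~\ref{lem:supp-coefficient-convergence} bounds every coefficient uniformly in $\eta\in I$. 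This gives $\|\Delta_{n,\eta}(\ket k\!\bra\ell)\|_1\le(K+1)\beta_{n,K,I}$.

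The next step passes from matrix units to the diamond norm. Take an arbitrary state $\rho_{RA}$ and expand it as $\rho_{RA}=\sum_{k,\ell}X_{k\ell}\otimes\ket k\!\bra\ell$ with $X_{k\ell}=(I_R\otimes\bra k)\rho_{RA}(I_R\otimes\ket\ell)$. Then
\begin{equation*}
(\id_R\otimes\Delta_{n,\eta})(\rho_{RA})=\sum_{k,\ell}X_{k\ell}\otimes\Delta_{n,\eta}(\ket k\!\bra\ell),
\end{equation*}
and the triangle inequality together with multiplicativity of the trace norm on tensor products bounds the trace norm by $\sum_{k,\ell}\|X_{k\ell}\|_1\,(K+1)\beta_{n,K,I}$. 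I would show $\|X_{k\ell}\|_1\le1$. The operator $X_{k\ell}$ is a compression $P\rho Q$ of the state by partial isometries, so $\|X_{k\ell}\|_1\le\|\rho\|_1=1$, for instance by Hölder's inequality with operator-norm-one factors. Summing over the $(K+1)^2$ pairs $(k,\ell)$ gives the bound $(K+1)^3\beta_{n,K,I}$. This holds for every reference dimension, including a purifying one, so it bounds the diamond norm and also proves the stated equivalent reference-system form.

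The final step shows that $\epsilon_{n,K,I}\to0$, which follows from $\beta_{n,K,I}\to0$. The terms $L_{K,I}/(2\sqrt n)$ and $8e^{-2n\delta_I^2}$ vanish trivially, and $\alpha_{n,K,I}\to0$ was already argued in the lemma. For a single fixed $\eta$, I would pick $I=[\eta/2,(1+\eta)/2]$.

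The main obstacle I expect is the matrix-unit-to-diamond-norm step, specifically making sure the constant comes out as exactly $(K+1)^3$. A crude operator-basis argument could lose extra dimensional factors, for example by using $\|X_{k\ell}\|_1$ bounded via the Hilbert–Schmidt norm. So I would check the bound $\|X_{k\ell}\|_1\le1$ carefully. If necessary, I would restrict to pure $\rho_{RA}$, which suffices by convexity; there $X_{k\ell}=\ket{\psi_k}\!\bra{\psi_\ell}$ with $\|\psi_k\|\le1$. A second point to confirm is that $\cT_{n,K,\eta}$ is genuinely CPTP on $\cH_K$ for the equivalence statement, which the earlier remark on $\widehat\cP_{n,K}$ ensures.
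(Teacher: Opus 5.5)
Your proposal is correct and matches the paper's proof essentially step for step. The paper also bounds $\|(\cT_{n,K,\eta}-\cL_\eta)(\ket k\!\bra\ell)\|_1\le(K+1)\beta_{n,K,I}$ via Lemma~\ref{lem:supp-coefficient-convergence}, expands the input as $\sum_{k,\ell}X_{k\ell}\otimes\ket k\!\bra\ell$ with $\|X_{k\ell}\|_1\le\|X\|_1$, and sums over the $(K+1)^2$ blocks to obtain $(K+1)^3\beta_{n,K,I}$ independently of $\dim R$; your compression/H\"older justification of the block bound, which also covers non-positive inputs, is exactly the detail the paper asserts without proof.
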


\begin{proof}
By Eqs.~\eqref{eq:supp-Gamma-coeff}--\eqref{eq:supp-Gamma-limit} and Lemma~\ref{lem:supp-coefficient-convergence}, for each matrix unit \(E_{k\ell}=\ket k\!\bra\ell\),
\begin{equation}
\norm{(\cT_{n,K,\eta}-\cL_\eta)(E_{k\ell})}_1
\le (K+1)\beta_{n,K,I}.
\end{equation}
Let \(d=K+1\).  For any operator \(X\) on \(R\otimes\cH_K\), write \(X=\sum_{k,\ell=0}^{K}X_{k\ell}\otimes E_{k\ell}\).  Each block satisfies \(\norm{X_{k\ell}}_1\le\norm{X}_1\).  Therefore
\begin{align}
\norm{(\operatorname{id}_R\otimes(\cT_{n,K,\eta}-\cL_\eta))(X)}_1
\le \sum_{k,\ell}\norm{X_{k\ell}}_1\,\norm{(\cT_{n,K,\eta}-\cL_\eta)(E_{k\ell})}_1
\le d^2(K+1)\beta_{n,K,I}\norm{X}_1.
\end{align}
By finite-dimensional stabilization, it is sufficient to take \(\dim R=K+1\).  The block estimate is independent of \(\dim R\), and therefore yields Eq.~\eqref{eq:supp-channel-convergence}.
\end{proof}

\begin{corollary}[Strict limiting violation transfers to finite \(n\)]
\label{cor:supp-strict-transfer}
Let a fixed finite-Fock code and fixed direct finite-Fock POVMs give CHSH value \(S_\infty=2+\Delta\) with \(\Delta>0\) under \(\cL_\eta\otimes\cL_\eta\).  Then the corresponding finite-\(n\) PI code obtained by \(\ket k\mapsto\ket{D_k^n}\), with the induced survivor-sector measurements obtained by the packing identification, gives \(S_n>2\) for all sufficiently large \(n\).
\end{corollary}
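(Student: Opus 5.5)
The plan is to express the finite-\(n\) CHSH value as an expectation of a fixed bounded operator on the output of \(\cT_{n,K,\eta}\otimes\cT_{n,K,\eta}\), and then use Theorem~\ref{thm:supp-pi-to-pure-loss} to control its deviation from \(S_\infty\).

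\emph{Step 1: take the code and measurements on the common space.} Choose \(K\) at least the largest excitation number appearing in the finite-Fock code; for the vacuum-anchored code take \(K=N\). Let \(\rho\) be the encoded two-party state on \(\cH_K\otimes\cH_K\) and let \(\{E_{a|x}\},\{F_{b|y}\}\) be the fixed finite-Fock POVMs. The induced survivor-sector measurements are \(\widehat E_{a|x}=\cP_{n,K}^\dagger(E_{a|x})\) and \(\widehat F_{b|y}=\cP_{n,K}^\dagger(F_{b|y})\). On the reachable subspace these are block-diagonal, \(\bigoplus_m J_m^\dagger E_{a|x}J_m\). Off the reachable subspace they are completed by the extension \(\widehat\cP_{n,K}\) of the Remark, which is irrelevant because the state never leaves it.

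By the adjoint identity, exactly as in Proposition~\ref{prop:supp-recovery-as-measurement}, the finite-\(n\) outcome probabilities equal
\[
\Tr\!\left[(E_{a|x}\otimes F_{b|y})\,(\cT_{n,K,\eta}\otimes\cT_{n,K,\eta})(\rho)\right].
\]
Hence \(S_n=\Tr[\mathcal B\,(\cT_{n,K,\eta}\otimes\cT_{n,K,\eta})(\rho)]\) and \(S_\infty=\Tr[\mathcal B\,(\cL_\eta\otimes\cL_\eta)(\rho)]\). Here \(\mathcal B=\sum_{x,y}(-1)^{xy}A_x\otimes B_y\) is built from the signed observables \(A_x=E_{+|x}-E_{-|x}\) and \(B_y=F_{+|y}-F_{-|y}\). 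Since these satisfy \(\|A_x\|_\infty,\|B_y\|_\infty\le1\), we have \(\|\mathcal B\|_\infty\le4\).

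\emph{Step 2: telescope the two-sided channel difference.} Write
\[
\cT\otimes\cT-\cL\otimes\cL=(\cT-\cL)\otimes\cT+\cL\otimes(\cT-\cL).
\]
For the first term, view \(\cT\) acting on Bob as producing a state of \(A\otimes R\), with Bob's output playing the role of the reference \(R\). The diamond-norm bound of Theorem~\ref{thm:supp-pi-to-pure-loss} applies with this finite reference, giving a trace-norm contribution of at most \(\epsilon_{n,K,I}\).

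For the second term, contractivity of the trace norm under the CPTP map \(\cL_\eta\otimes\id\) bounds it by \(\|(\id\otimes(\cT-\cL))(\rho)\|_1\le\epsilon_{n,K,I}\). Together,
\[
|S_n-S_\infty|\le\|\mathcal B\|_\infty\cdot 2\epsilon_{n,K,I}\le 8\epsilon_{n,K,I}.
\]

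\emph{Step 3: conclude.} Fix a compact interval \(I\ni\eta\) inside \((0,1)\). Since \(\epsilon_{n,K,I}\to0\), some \(n_0\) gives \(8\epsilon_{n,K,I}<\Delta\) for all \(n\ge n_0\), and then \(S_n\ge 2+\Delta-8\epsilon_{n,K,I}>2\).

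The main obstacle is Step 1. One must check that the survivor-sector measurements are genuine POVMs on the physical packed register and that the packed statistics coincide with those of the \(\cH_K\) channel \(\cT_{n,K,\eta}\). This rests on deletion from \(\cC_{n,K}\) staying inside \(\cH^{\rm out}_{n,K}\), on \(\cP_{n,K}^\dagger\) being unital there, and on the measurement being allowed to read the block label \(m\) implicitly through the block-diagonal form \(\bigoplus_m\). Once this identification is secured, Steps 2 and 3 are routine.
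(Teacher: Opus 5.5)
Your proposal is correct and follows the paper's argument. The paper also telescopes the two-sided channel difference, using the triangle inequality and trace-norm contractivity, to get $\|\rho_n-\rho_\infty\|_1\le 2\epsilon_{n,K,I}$, and then multiplies by the operator norm of the CHSH operator. The only differences are cosmetic: the paper uses the Tsirelson bound $\|\mathcal B\|_\infty\le 2\sqrt2$, giving $4\sqrt2\,\epsilon_{n,K,I}$, where you use the cruder $\|\mathcal B\|_\infty\le 4$, and you spell out the identification of the survivor-sector measurements as $\mathcal P_{n,K}^\dagger$ of the finite-Fock effects, which the paper leaves implicit.
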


\begin{proof}
Let \(\rho_\infty\) and \(\rho_n\) denote the limiting and finite-\(n\) two-party output states after the same logical encoding.  If the one-sided channel distance is at most \(\epsilon_{n,K,I}\), then the two-sided output states differ by at most \(2\epsilon_{n,K,I}\) in trace norm, by a triangle inequality and contractivity of trace norm.  The CHSH operator built from Hermitian contractions obeys Tsirelson's norm bound \(\|\mathcal B_{\rm CHSH}\|_\infty\le2\sqrt2\), for instance by dilating contractions to dichotomic observables.  Therefore
\begin{equation}
|S_n-S_\infty|\le 2\sqrt2\,\norm{\rho_n-\rho_\infty}_1\le 4\sqrt2\,\epsilon_{n,K,I}.
\end{equation}
Since \(\epsilon_{n,K,I}\to0\), choose \(n\) large enough that \(4\sqrt2\,\epsilon_{n,K,I}<\Delta\).  Then \(S_n>2\).
\end{proof}


The finite-\(n\) PI implication follows only after invoking Corollary~\ref{cor:supp-strict-transfer}.

\begin{definition}[vacuum-anchored code and observables]
\label{def:supp-anchored-code}
For \(\Nfp\ge1\), define
$\ket{0_L}=\ket0$,
$\ket{1_L}=\ket{\Nfp}$.
On \(\operatorname{span}\{\ket0,\ldots,\ket{\Nfp}\}\), set
\begin{equation}
Z_{\Nfp}=\ket0\!\bra0-\sum_{j=1}^{\Nfp}\ket j\!\bra j,
\qquad
X_{\Nfp}=\ket0\!\bra{\Nfp}+\ket{\Nfp}\!\bra0.
\label{eq:supp-ZN-XN}
\end{equation}
\end{definition}

\begin{lemma}[Observable validity]
\label{lem:supp-anchored-observable-validity}
The observable \(Z_{\Nfp}\) is dichotomic.  The observable \(X_{\Nfp}\) is a Hermitian contraction.  Hence \(E_\pm^X=(I\pm X_{\Nfp})/2\) defines a valid binary POVM, with randomized output on the kernel of \(X_{\Nfp}\).  For every real \(\theta\),
\begin{equation}
B(\theta)=\cos\theta\,Z_{\Nfp}+\sin\theta\,X_{\Nfp}
\end{equation}
is also a Hermitian contraction.
\end{lemma}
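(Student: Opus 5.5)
The plan is to work entirely in the finite basis $\{\ket0,\ldots,\ket{\Nfp}\}$ and verify each claim by elementary spectral reasoning. Here $Z_{\Nfp}$ is already diagonal with entries $\pm1$: it is Hermitian and $Z_{\Nfp}^2=I$, so it is dichotomic, with spectral projectors $\ket0\!\bra0$ and $\sum_{j\ge1}\ket j\!\bra j$.

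For $X_{\Nfp}$, I will treat it as the Pauli $\sigma_x$ on the two-dimensional subspace $\operatorname{span}\{\ket0,\ket{\Nfp}\}$, extended by zero on $\operatorname{span}\{\ket1,\ldots,\ket{\Nfp-1}\}$. Its eigenvalues are therefore $\{+1,-1,0,\ldots,0\}$. It follows that $X_{\Nfp}$ is Hermitian with $\|X_{\Nfp}\|_\infty=1$, so $-I\le X_{\Nfp}\le I$. The effects $E^X_\pm=(I\pm X_{\Nfp})/2$ are then positive and sum to $I$. On $\ker X_{\Nfp}$ both effects equal $I/2$, which gives the randomized output.

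For $B(\theta)$, I will reduce the norm bound to the same block decomposition. Both $Z_{\Nfp}$ and $X_{\Nfp}$ leave $V=\operatorname{span}\{\ket0,\ket{\Nfp}\}$ and $V^\perp$ invariant. On $V$ they act as $\sigma_z$ and $\sigma_x$, so
\[
B(\theta)|_V=\cos\theta\,\sigma_z+\sin\theta\,\sigma_x,
\]
whose eigenvalues are $\pm1$. On $V^\perp$, $X_{\Nfp}$ vanishes and $Z_{\Nfp}=-I$, so $B(\theta)|_{V^\perp}=-\cos\theta\, I$, with norm $|\cos\theta|\le1$. Taking the maximum over the two blocks gives $\|B(\theta)\|_\infty\le1$. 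Hermiticity is immediate because $\theta$ is real, so $B(\theta)$ is a Hermitian contraction, and $(I\pm B(\theta))/2$ again defines a binary POVM.

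None of these steps is a real obstacle. The only point needing care is the anticommutation that underlies the $V$-block computation. One cannot argue through the global relation $\{Z_{\Nfp},X_{\Nfp}\}=0$, because on $V^\perp$ the operator $Z_{\Nfp}$ is not zero. The block decomposition avoids this and handles the $\Nfp=1$ case, where $V^\perp=\{0\}$, uniformly.
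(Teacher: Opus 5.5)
Your proof is correct and follows the paper's argument essentially verbatim: the same splitting into $\operatorname{span}\{\ket0,\ket{\Nfp}\}$, where $Z_{\Nfp}$ and $X_{\Nfp}$ act as Pauli $Z$ and $X$, and its orthogonal complement (the kernel of $X_{\Nfp}$), where $B(\theta)=-\cos\theta\,I$. One small correction to your closing remark: $\{Z_{\Nfp},X_{\Nfp}\}=0$ does hold globally, since both products vanish on $V^\perp$ because $X_{\Nfp}$ does; it even gives a one-line alternative, $B(\theta)^2=\cos^2\theta\,I+\sin^2\theta\,X_{\Nfp}^2\le I$, because $X_{\Nfp}^2$ is the projector onto $V$.
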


\begin{proof}
The operator \(Z_{\Nfp}\) is diagonal with eigenvalues \(\pm1\).  The operator \(X_{\Nfp}\) is the Pauli \(X\) operator on \(\operatorname{span}\{\ket0,\ket{\Nfp}\}\) and is zero on the orthogonal subspace \(\operatorname{span}\{\ket1,\ldots,\ket{\Nfp-1}\}\), so its spectrum is contained in \(\{-1,0,1\}\).  Thus it is a Hermitian contraction.  On \(\operatorname{span}\{\ket0,\ket{\Nfp}\}\), the restrictions of \(Z_{\Nfp}\) and \(X_{\Nfp}\) anticommute, so \(B(\theta)\) has eigenvalues \(\pm1\) there.  On the kernel of \(X_{\Nfp}\), \(B(\theta)=-\cos\theta\,I\), whose norm is at most one.  Hence \(B(\theta)\) is a Hermitian contraction.
\end{proof}

\begin{lemma}[Anchored correlators]
\label{lem:supp-anchored-correlators}
Let
$\ket{\Phi_{\Nfp}}=\frac{\ket{00}+\ket{\Nfp\,\Nfp}}{\sqrt2}$
and apply \(\cL_\eta\otimes\cL_\eta\).
Then
\begin{equation}
\langle Z_{\Nfp}\otimes Z_{\Nfp}\rangle=c_{\Nfp}(\eta),
\quad
\langle X_{\Nfp}\otimes X_{\Nfp}\rangle=\eta^{\Nfp},
\text{  with }
c_{\Nfp}(\eta)=1-2q_{\Nfp}(1-q_{\Nfp}),
\quad
q_{\Nfp}=(1-\eta)^{\Nfp}.
\label{eq:supp-anchored-correlators}
\end{equation}
and the mixed correlators \(\langle Z_{\Nfp}\otimes X_{\Nfp}\rangle\) and \(\langle X_{\Nfp}\otimes Z_{\Nfp}\rangle\) vanish.
\end{lemma}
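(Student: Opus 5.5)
The plan is to expand the input state in matrix units and push each term through the two-sided channel using the explicit pure-loss action in Eq.~\eqref{eq:supp-pure-loss-matrix-units}. Write
\[
\ket{\Phi_{\Nfp}}\!\bra{\Phi_{\Nfp}}
=\tfrac12\bigl(\ket{0}\!\bra{0}\otimes\ket0\!\bra0+\ket{0}\!\bra{\Nfp}\otimes\ket0\!\bra{\Nfp}+\ket{\Nfp}\!\bra{0}\otimes\ket{\Nfp}\!\bra0+\ket{\Nfp}\!\bra{\Nfp}\otimes\ket{\Nfp}\!\bra{\Nfp}\bigr).
\]
Since $\cL_\eta\otimes\cL_\eta$ acts as a product on each product matrix unit, three local facts suffice.
\begin{itemize}
\item The vacuum is fixed: $\cL_\eta(\ket0\!\bra0)=\ket0\!\bra0$.
\item The coherence $\ket0\!\bra{\Nfp}$ only admits $a=0$, because $\min(0,\Nfp)=0$. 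This gives $\cL_\eta(\ket0\!\bra{\Nfp})=\eta^{\Nfp/2}\ket0\!\bra{\Nfp}$, and similarly for its adjoint.
\item The population $\ket{\Nfp}\!\bra{\Nfp}$ maps to $\sum_a\binom{\Nfp}{a}(1-\eta)^a\eta^{\Nfp-a}\ket{\Nfp-a}\!\bra{\Nfp-a}$, a diagonal state whose vacuum weight is $q_{\Nfp}=(1-\eta)^{\Nfp}$.
\end{itemize}

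\textbf{$Z_{\Nfp}\otimes Z_{\Nfp}$ correlator.} The coherence terms are off-diagonal and $Z_{\Nfp}$ is diagonal, so they drop out. The vacuum term contributes $\tfrac12\cdot1$. In the $\ket{\Nfp\Nfp}$ term, each side's $Z_{\Nfp}$ has expectation $q_{\Nfp}-(1-q_{\Nfp})=2q_{\Nfp}-1$ in the local output, and the two sides are independent, so this term contributes $\tfrac12(2q_{\Nfp}-1)^2$. Summing gives $\tfrac12[1+(1-2q_{\Nfp})^2]=1-2q_{\Nfp}+2q_{\Nfp}^2=c_{\Nfp}(\eta)$, as claimed.

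\textbf{$X_{\Nfp}\otimes X_{\Nfp}$ correlator.} All diagonal output terms give zero, because $X_{\Nfp}$ has zero diagonal in the Fock basis. The surviving contributions are $\tfrac12\eta^{\Nfp}\Tr[(X_{\Nfp}\otimes X_{\Nfp})(\ket0\!\bra{\Nfp}\otimes\ket0\!\bra{\Nfp})]$ and its conjugate. Each trace equals $\bra{\Nfp}X_{\Nfp}\ket0^2=1$, so the total is $\eta^{\Nfp}$.

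\textbf{Mixed correlators.} The coherence terms are the only off-diagonal output terms. Under $Z_{\Nfp}\otimes X_{\Nfp}$ they require $\Tr[Z_{\Nfp}\ket0\!\bra{\Nfp}]=\bra{\Nfp}Z_{\Nfp}\ket0=0$. The diagonal terms vanish against $X_{\Nfp}$ on the other factor. Both mixed correlators are therefore zero.

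The only delicate point is the case $\Nfp=1$, where $Z_1$ and the level structure still work: the decay from $\ket{\Nfp}$ lands only on levels $\ket{\Nfp-a}$, and none of the intermediate levels $\ket1,\ldots,\ket{\Nfp-1}$ carries coherence with $\ket0$. This holds because loss preserves the ket–bra excitation difference, so the only coherence ever present is between $\ket0$ and $\ket{\Nfp}$. That observation is what makes the otherwise routine calculation correct, so it should be stated explicitly.
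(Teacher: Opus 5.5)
Your proposal is correct and follows essentially the same route as the paper's proof. Both use the same three local pure-loss actions on $\ket0\!\bra0$, $\ket{\Nfp}\!\bra{\Nfp}$ and $\ket0\!\bra{\Nfp}$, the same factorized sign expectation $(2q_{\Nfp}-1)^2$ for the $\ket{\Nfp\Nfp}$ component, and the same diagonal-versus-coherence argument for the $X_{\Nfp}\otimes X_{\Nfp}$ and mixed correlators; you simply write the traces out explicitly. Your closing remark that $\Nfp=1$ is a delicate case is unnecessary: nothing changes there, and the excitation-difference observation is already what your second bullet uses.
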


\begin{proof}
Under pure loss,
\begin{align}
\cL_\eta(\ket0\!\bra0)=\ket0\!\bra0,\qquad
\cL_\eta(\ket{\Nfp}\!\bra{\Nfp})=\sum_{j=0}^{\Nfp}\binom{\Nfp}{j}\eta^j(1-\eta)^{\Nfp-j}\ket j\!\bra j,\qquad
\cL_\eta(\ket0\!\bra{\Nfp})=\eta^{\Nfp/2}\ket0\!\bra{\Nfp}.
\end{align}
The diagonal component \(\ket{00}\!\bra{00}\) contributes \(+1\) to \(Z_{\Nfp}\otimes Z_{\Nfp}\).  In the component originating from \(\ket{\Nfp\,\Nfp}\), each side gives the local sign \(+1\) with probability \(q_{\Nfp}\) and \(-1\) with probability \(1-q_{\Nfp}\).  Its local sign expectation is \(2q_{\Nfp}-1\), so the two-party contribution is \((2q_{\Nfp}-1)^2\).  Averaging the two diagonal Bell components gives
$\frac12\left[1+(2q_{\Nfp}-1)^2\right]=1-2q_{\Nfp}(1-q_{\Nfp})$.
The only term contributing to \(X_{\Nfp}\otimes X_{\Nfp}\) is the two-party coherence \(
\ket{00}\!\bra{\Nfp\,\Nfp}+\ket{\Nfp\,\Nfp}\!\bra{00}
\), whose coefficient is multiplied by \(\eta^{\Nfp}\).  This gives the second equality.  The mixed correlators vanish because one observable is diagonal and the other couples only \(\ket0\) and \(\ket{\Nfp}\), while the state has no corresponding one-sided coherence.
\end{proof}

\begin{theorem}[Anchored pure-loss golden-ratio threshold]
\label{thm:supp-anchored-golden}
For the limiting pure-loss family in Definition~\ref{def:supp-anchored-code}, the CHSH value optimized in the \((Z_{\Nfp},X_{\Nfp})\) plane is
\begin{equation}
S_{\Nfp}(\eta)=2\sqrt{c_{\Nfp}(\eta)^2+\eta^{2\Nfp}}.
\label{eq:supp-SN}
\end{equation}
For \(\eta>1/2\), there exists a finite \(\Nfp\) with \(S_{\Nfp}(\eta)>2\) if and only if
\begin{equation}
\eta>\etaG=\frac{\sqrt5-1}{2}.
\end{equation}
For \(1/2\le\eta\le\etaG\), one has \(S_{\Nfp}(\eta)\le2\) for every finite \(\Nfp\).
\end{theorem}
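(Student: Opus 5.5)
The proof has two parts: the optimized CHSH value in the $(Z_{\Nfp},X_{\Nfp})$ plane, and the threshold analysis.

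\textbf{Optimized value.} By Lemma~\ref{lem:supp-anchored-correlators}, the two-party correlation matrix restricted to the $(Z_{\Nfp},X_{\Nfp})$ plane is diagonal, $T=\mathrm{diag}(c_{\Nfp},\eta^{\Nfp})$; the mixed terms vanish. Take Alice's settings $A_x=\cos\alpha_x Z_{\Nfp}+\sin\alpha_x X_{\Nfp}$ and Bob's settings $B(\theta_y)$. These are valid Hermitian contractions by Lemma~\ref{lem:supp-anchored-observable-validity}. By bilinearity,
\[
\langle A_x\otimes B_y\rangle=\mathbf a_x^{T}T\,\mathbf b_y ,
\]
where $\mathbf a_x,\mathbf b_y$ are the unit vectors of the two angles. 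The CHSH expression then takes the standard Horodecki form
\[
\mathbf a_0^{T}T(\mathbf b_0+\mathbf b_1)+\mathbf a_1^{T}T(\mathbf b_0-\mathbf b_1).
\]
Writing $\mathbf b_0\pm\mathbf b_1$ as orthogonal vectors of lengths $2\cos\phi$ and $2\sin\phi$ bounds this by $2\sqrt{t_1^2+t_2^2}$. The bound is attained by $A_0=Z_{\Nfp}$, $A_1=X_{\Nfp}$ and the normalized $B_{0,1}$ given in the main text. This yields Eq.~\eqref{eq:supp-SN}.

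\textbf{Reduction to a polynomial inequality.} Set $q=q_{\Nfp}$. Then $S_{\Nfp}>2$ is equivalent to $c_{\Nfp}^2+\eta^{2\Nfp}>1$. Since $1-c^2=(1-c)(1+c)$ and $1-c=2q(1-q)$, this becomes
\[
\eta^{2\Nfp}>4q(1-q)\bigl(1-q(1-q)\bigr)=4q(1-q)(1-q+q^2).
\]
For $\eta\ge1/2$ we have $q>0$, so dividing by $q=(1-\eta)^{\Nfp}$ gives
\[
\bigl(\eta^2/(1-\eta)\bigr)^{\Nfp}>4(1-q)(1-q+q^2).
\]

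\textbf{Case $1/2\le\eta\le\etaG$.} The ratio $\eta^2/(1-\eta)$ is increasing in $\eta$ and equals $1$ at $\etaG$, since $\etaG^2=1-\etaG$. So the left-hand side is at most $1$. For the right-hand side, $q\le(1/2)^{\Nfp}\le1/2$. Define $f(q)=4(1-q)(1-q+q^2)=4(1-2q+2q^2-q^3)$. Then $f'(q)=4(-2+4q-3q^2)<0$, because the discriminant $16-24$ is negative. Hence $f$ is decreasing, and $f(q)\ge f(1/2)=3/2>1$. The inequality therefore fails, so $S_{\Nfp}\le2$ for every finite $\Nfp$; this includes the endpoint $\eta=\etaG$.

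\textbf{Case $\eta>\etaG$.} Here the base $\eta^2/(1-\eta)$ exceeds $1$, so the left-hand side diverges as $\Nfp\to\infty$. The right-hand side satisfies $f(q)\le f(0)=4$. Choosing $\Nfp$ with $\bigl(\eta^2/(1-\eta)\bigr)^{\Nfp}>4$ therefore gives $S_{\Nfp}>2$.

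\textbf{Main obstacle.} The delicate step is the tight optimization claim: the maximum within the plane must be exactly $2\sqrt{c^2+\eta^{2\Nfp}}$ even though the observables are not dichotomic on the whole space. This is handled because the correlators are bilinear in the angle vectors regardless of the spectrum. The remaining steps are elementary.
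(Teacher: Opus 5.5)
Your proof is correct. The threshold analysis is the same as the paper's: divide by $q$, use $\eta^2\le 1-\eta$ up to $\etaG$ against $4(1-q)(1-q+q^2)\ge 3/2$ on $[0,1/2]$, and choose $(\eta^2/(1-\eta))^{\Nfp}>4$ above $\etaG$. Your discriminant argument supplies the sign of $f'$, which the paper only asserts.

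The genuine difference is in the optimized value. The paper fixes Alice at $A_0=Z_{\Nfp}$, $A_1=X_{\Nfp}$ and tilts Bob's two settings by a common angle $\pm\theta$. It then maximizes $2[c_{\Nfp}\cos\theta+\eta^{\Nfp}\sin\theta]$, which proves achievability but optimality only within that one-parameter family. Your Horodecki-type bound is stronger. It uses the orthogonality of $\mathbf b_0\pm\mathbf b_1$ and the identity $\lVert T\hat u\rVert^2+\lVert T\hat w\rVert^2=c_{\Nfp}^2+\eta^{2\Nfp}$ for any orthonormal pair $\hat u,\hat w$ in the plane. This shows that no choice of in-plane settings for either party can exceed $2\sqrt{c_{\Nfp}^2+\eta^{2\Nfp}}$, so it actually earns the word ``optimized'' in the statement, at the cost of a few extra lines.

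Restricting to unit angle vectors loses nothing. $aZ_{\Nfp}+bX_{\Nfp}$ is a contraction only if $a^2+b^2\le1$, and the CHSH expression is linear in each vector separately, so its maximum lies on the unit circle.

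One small slip: $q>0$ fails at $\eta=1$, where $\eta^2/(1-\eta)$ is undefined. The paper's main text excludes this case by restricting to $1/2\le\eta<1$, and there $S_{\Nfp}=2\sqrt2$ trivially anyway.
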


\begin{proof}
Use Alice's observables \(A_0=Z_{\Nfp}\) and \(A_1=X_{\Nfp}\).  Bob uses
\begin{equation}
B_0=\cos\theta\,Z_{\Nfp}+\sin\theta\,X_{\Nfp},
\qquad
B_1=\cos\theta\,Z_{\Nfp}-\sin\theta\,X_{\Nfp}.
\end{equation}
By Lemma~\ref{lem:supp-anchored-observable-validity}, these are Hermitian contractions.  Lemma~\ref{lem:supp-anchored-correlators} gives
\begin{equation}
S(\theta)=2\left[c_{\Nfp}(\eta)\cos\theta+\eta^{\Nfp}\sin\theta\right].
\end{equation}
Optimizing over \(\theta\) yields Eq.~\eqref{eq:supp-SN}.
Let \(q=q_{\Nfp}=(1-\eta)^{\Nfp}\).  The condition \(S_{\Nfp}>2\) is equivalent to
\begin{align}
\eta^{2\Nfp}>1-c_{\Nfp}^2
=1-\left[1-2q(1-q)\right]^2
=4q(1-q)\left(1-q+q^2\right).
\label{eq:supp-golden-condition}
\end{align}

Assume first \(1/2\le\eta\le\etaG\).  Then \(\eta^2\le1-\eta\), so \(\eta^{2\Nfp}\le(1-\eta)^{\Nfp}=q\).  Also \(q\le1/2\), and for \(0\le q\le1/2\),
\begin{equation}
4(1-q)(1-q+q^2)>1.
\end{equation}
Indeed, \(f(q)=4(1-q)(1-q+q^2)\) is decreasing on \([0,1/2]\), since \(f'(q)=4(-2+4q-3q^2)<0\), and \(f(1/2)=3/2\).  Therefore the right-hand side of Eq.~\eqref{eq:supp-golden-condition} is strictly larger than \(q\), while the left-hand side is at most \(q\).  Hence no finite \(\Nfp\) violates CHSH.

Conversely, assume \(\eta>\etaG\).  Then
\begin{equation}
R:=\frac{\eta^2}{1-\eta}>1.
\end{equation}
Since \(\eta^{2\Nfp}=R^{\Nfp}q\) and
\begin{equation}
4q(1-q)(1-q+q^2)\le4q,
\end{equation}
any \(\Nfp\) with \(R^{\Nfp}>4\) satisfies Eq.~\eqref{eq:supp-golden-condition}.  Thus some finite \(\Nfp\) gives \(S_{\Nfp}>2\).  The boundary is the exponential balance \(\eta^2=1-\eta\), whose positive solution is \(\etaG\).
\end{proof}

\begin{corollary}[Finite-\(n\) PI implication]
\label{cor:supp-golden-finite-n}
For every \(\eta>\etaG\), choose \(\Nfp\) such that \(S_{\Nfp}(\eta)>2\) in Theorem~\ref{thm:supp-anchored-golden}.  Then the finite-\(n\) PI code \(\ket{D_0^n},\ket{D_{\Nfp}^n}\), with the survivor-sector measurements induced by the packing map of Sec.~\ref{supp:pure-loss-limit}, gives recovery-free CHSH violation for all sufficiently large \(n\).
\end{corollary}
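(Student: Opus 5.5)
The proof combines Theorem~\ref{thm:supp-anchored-golden} with Corollary~\ref{cor:supp-strict-transfer}, working with the cutoff $K=\Nfp$.

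\textbf{Limiting violation.} Fix $\eta>\etaG$. Theorem~\ref{thm:supp-anchored-golden} provides a finite $\Nfp$ with $S_\Nfp(\eta)>2$. Set
\[
\Delta:=S_\Nfp(\eta)-2>0 .
\]
This value is attained under $\cL_\eta\otimes\cL_\eta$ by the fixed observables $A_0=Z_\Nfp$, $A_1=X_\Nfp$ and $B_{0,1}=\cos\theta^*Z_\Nfp\pm\sin\theta^*X_\Nfp$, where $\theta^*$ is the optimal angle. By Lemma~\ref{lem:supp-anchored-observable-validity}, each of these observables is a Hermitian contraction on $\cH_\Nfp$. They therefore define binary POVMs with effects $(I\pm B)/2$ and are admissible in Corollary~\ref{cor:supp-strict-transfer}. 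The encoded state $\ket{\Phi_\Nfp}$ lies in $\cH_\Nfp\otimes\cH_\Nfp$, so the encoding $\ket k\mapsto\ket{D_k^n}$ restricted to $k\in\{0,\Nfp\}$ reproduces exactly the PI code $\ket{D_0^n},\ket{D_\Nfp^n}$.

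\textbf{Identifying the finite-$n$ measurements.} The finite-$n$ measurements have to be made explicit as recovery-free operations. Take the packing map $\cP_{n,\Nfp}$ of Eq.~\eqref{eq:supp-fock-packing-map} and pull back each observable $O$ on $\cH_\Nfp$ via $\cP_{n,\Nfp}^\dagger(O)=\bigoplus_m \Pi_m J_m^\dagger O J_m\Pi_m$. By construction this is block diagonal in the survivor number $m$. Within each block it acts on Dicke states as $O$ acts on $\ket j$. The same pull-back shows that Hermitian contractions remain Hermitian contractions.

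Outside the reachable space one needs a choice. The first option is to use the extension $\widehat\cP_{n,K}$ from the Remark, which sends the complement to a fixed state. The second is simply to set the observables to zero there, which gives unbiased outputs. Since deletion from $\cC_{n,\Nfp}$ never leaves the reachable space, this choice has no effect on the statistics. Proposition~\ref{prop:supp-recovery-as-measurement} then identifies the statistics of the PI protocol with those of $\cT_{n,\Nfp,\eta}\otimes\cT_{n,\Nfp,\eta}$ followed by the fixed observables. Hence the finite-$n$ value is $S_n=\Tr[\mathcal B\,(\cT_{n,\Nfp,\eta}^{\otimes 2})(\Phi_\Nfp)]$, where $\mathcal B$ is the CHSH operator built from these observables.

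\textbf{Transfer to finite $n$.} Choose a compact interval $I\subset(0,1)$ containing $\eta$. Theorem~\ref{thm:supp-pi-to-pure-loss} gives $\norm{\cT_{n,\Nfp,\eta}-\cL_\eta}_\diamond\le\epsilon_{n,\Nfp,I}\to0$. Corollary~\ref{cor:supp-strict-transfer} converts this into
\[
|S_n-S_\Nfp(\eta)|\le4\sqrt2\,\epsilon_{n,\Nfp,I}.
\]
It follows that $S_n>2$ as soon as $4\sqrt2\,\epsilon_{n,\Nfp,I}<\Delta$, and by Eq.~\eqref{eq:supp-beta-bound} this holds for all sufficiently large $n$.

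The remaining step is only bookkeeping, since the convergence and transfer statements are already available. One must check that the pulled-back observables really act on the unflagged survivor registers alone: they use $m$, which is read from the register length, and they never use flag positions. One must also check that the $m=0$ block, where $J_0$ maps to $\ket0$, is handled consistently. With that verified, the measurement is recovery-free and the claimed violation follows.
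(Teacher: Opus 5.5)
Your proposal is correct and is essentially the paper's proof, which applies Corollary~\ref{cor:supp-strict-transfer} (resting on Theorem~\ref{thm:supp-pi-to-pure-loss}) to the fixed code $\ket0,\ket{\Nfp}$ and the Hermitian-contraction observables of Theorem~\ref{thm:supp-anchored-golden}. Your choice $K=\Nfp$ and the explicit pull-back $\cP_{n,\Nfp}^\dagger(O)=\bigoplus_m \Pi_m J_m^\dagger O J_m \Pi_m$ of the observables to the survivor sectors spell out bookkeeping that the paper leaves implicit.
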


\begin{proof}
This is exactly Corollary~\ref{cor:supp-strict-transfer} applied to the fixed finite-Fock code and measurements of Theorem~\ref{thm:supp-anchored-golden}.
\end{proof}
\subsection{Numerical view of the golden-ratio crossover}
\label{supp:golden-ratio-crossover}

The threshold in Theorem~\ref{thm:supp-anchored-golden} can be understood by separating the two terms that determine whether \(S_{\Nfp}\) exceeds the local bound. Define
\begin{equation}
\begin{aligned}
G_{\Nfp}(\eta)
:=\eta^{2\Nfp},\quad
D_{\Nfp}(\eta)
:=1-c_{\Nfp}(\eta)^2\
=4q_{\Nfp}(1-q_{\Nfp})
\bigl(1-q_{\Nfp}+q_{\Nfp}^2\bigr),
\end{aligned}
\qquad
q_{\Nfp}:=(1-\eta)^{\Nfp}.
\label{eq:supp-golden-gain-deficit}
\end{equation}
Here \(G_{\Nfp}\) is the coherent contribution to the squared CHSH value, whereas \(D_{\Nfp}\) is the deficit of the longitudinal correlation from its ideal value. Eq.~\eqref{eq:supp-SN} gives
$\frac{S_{\Nfp}(\eta)^2}{4}-1
=G_{\Nfp}(\eta)-D_{\Nfp}(\eta)$.
Therefore,
\begin{equation}
S_{\Nfp}(\eta)>2
\quad\Longleftrightarrow\quad
G_{\Nfp}(\eta)>D_{\Nfp}(\eta).
\label{eq:supp-golden-violation-condition}
\end{equation}
To display this competition over a wide range of \(\Nfp\), we introduce
$\Lambda_{\Nfp}(\eta)
:=
\log_{10}
\left[
\frac{G_{\Nfp}(\eta)}
{D_{\Nfp}(\eta)}
\right]$. 
Then \(\Lambda_{\Nfp}(\eta)>0 \Longleftrightarrow S_{\Nfp}(\eta)>2\). 
The logarithmic scale also resolves differences that become exponentially small at large \(\Nfp\).

\begin{figure}[t]
\centering
\includegraphics[width=0.95\linewidth]{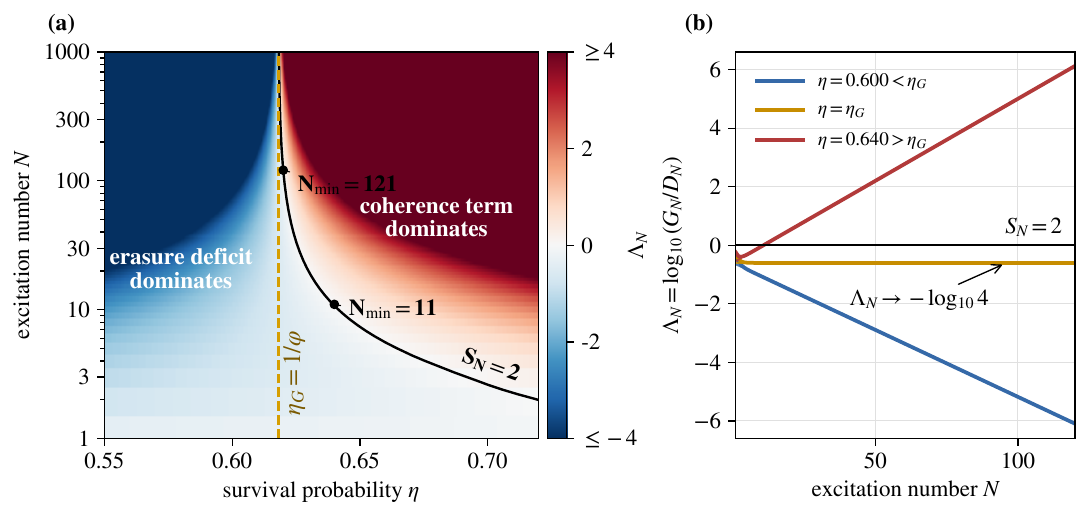}
\caption{Competition between coherence and complete-erasure errors in the vacuum-anchored \(\Nfp\)-excitation protocol. 
(a) The color scale shows the exact logarithmic ratio 
\(\Lambda_{\Nfp}(\eta)=\log_{10}[G_{\Nfp}(\eta)/D_{\Nfp}(\eta)]\). The black contour \(\Lambda_{\Nfp}=0\) is exactly the CHSH boundary \(S_{\Nfp}=2\); 
the region \(\Lambda_{\Nfp}>0\) violates CHSH. The vertical reference line marks \(\etaG=(\sqrt{5}-1)/2\).
(b) Cross sections at representative survival probabilities show the change in the large-\(\Nfp\) slope of \(\Lambda_{\Nfp}\). 
The slope is negative below \(\etaG\), vanishes at \(\etaG\), and is positive above \(\etaG\).}
\label{fig:supp-golden-ratio-balance}
\end{figure}

The two quantities in Eq.~\eqref{eq:supp-golden-gain-deficit} describe opposite effects of increasing \(\Nfp\). The probability
$q_{\Nfp}=(1-\eta)^{\Nfp}$ 
is the probability that all \(\Nfp\) excitations are lost on one side. 
This event maps the \(\ket{\Nfp}\) component to the vacuum and makes it indistinguishable from the logical state \(\ket{0}\). 
If complete erasure occurs on only one side, the longitudinal outcomes can disagree. This reduces \(c_{\Nfp}\) and produces the deficit \(D_{\Nfp}\).
Increasing \(\Nfp\) suppresses complete erasure exponentially. It is sufficient that one excitation survive to distinguish the \(\ket{\Nfp}\) branch from the vacuum. 
Consequently, for $\Nfp\to\infty$,
\begin{equation}
q_{\Nfp}\longrightarrow 0,
\qquad
c_{\Nfp}\longrightarrow 1,
\qquad
D_{\Nfp}\longrightarrow 0.
\end{equation}
 In this sense, the excitation number provides redundancy for the population information measured by \(Z_{\Nfp}\).
The same increase in \(\Nfp\) has the opposite effect on coherence. Under pure loss,
$\cL_{\eta}\left(\ket{0}\bra{\Nfp}\right)
=
\eta^{\Nfp/2}\ket{0}\bra{\Nfp}$.
Both parties must preserve this coherence for the \(X_{\Nfp}\otimes X_{\Nfp}\) correlation to survive. Hence
$\left\langle
X_{\Nfp}\otimes X_{\Nfp}
\right\rangle=
\eta^{\Nfp}
$, 
and its contribution to the optimized squared CHSH value is
$G_{\Nfp}=\eta^{2\Nfp}$.
This term also decreases exponentially with \(\Nfp\). 
A larger excitation number therefore suppresses complete-erasure errors but makes the vacuum--\(\Nfp\) coherence more fragile.
The relative decay rates follow from the exact ratio
\begin{equation}
\frac{G_{\Nfp}(\eta)}{D_{\Nfp}(\eta)}=
\frac{1}{
4(1-q_{\Nfp})
\bigl(1-q_{\Nfp}+q_{\Nfp}^2\bigr)
}
\left(
\frac{\eta^2}{1-\eta}
\right)^{\Nfp}.
\label{eq:supp-golden-exact-ratio}
\end{equation}
Since \(q_{\Nfp}\to 0\), its large-\(\Nfp\) form is
\begin{equation}
\Lambda_{\Nfp}(\eta)=
\Nfp\log_{10}
\left(
\frac{\eta^2}{1-\eta}
\right)
-\log_{10}4
+o(1).
\label{eq:supp-golden-ratio-asymptotic}
\end{equation}
This expression explains the two regions in Fig.~\ref{fig:supp-golden-ratio-balance}.
For
$\frac{1}{2}\leq\eta<\etaG$,
one has
$\frac{\eta^2}{1-\eta}<1$.
The coherent term then decreases faster with \(\Nfp\) than the complete-erasure deficit. Increasing \(\Nfp\) improves the longitudinal correlation, but the available coherence is lost even more rapidly. 
Thus \(D_{\Nfp}\) remains larger than \(G_{\Nfp}\), and the protocol remains in the nonviolating region of panel~(a).
For
$\eta>\etaG$,
the ordering of the decay rates is reversed:
$\frac{\eta^2}{1-\eta}>1$.
Complete erasure is then suppressed faster than the squared coherent contribution. For small \(\Nfp\), the finite prefactor in \(D_{\Nfp}\) may still prevent a violation. 
As \(\Nfp\) increases, however, \(G_{\Nfp}/D_{\Nfp}\) grows and eventually crosses one. 
This crossing is the black contour in panel~(a). It marks the point at which the remaining coherence becomes sufficient to compensate for the longitudinal deficit.
The boundary between these two behaviors is obtained when the exponential bases are equal:
$\eta^2=1-\eta$.
The positive solution is
$\eta=\etaG
=\frac{\sqrt{5}-1}{2}
=
\frac{1}{\varphi}$, where
$\varphi=\frac{1+\sqrt{5}}{2}$.
At this value,
$q_{\Nfp}
=(1-\etaG)^{\Nfp}
 \etaG^{2\Nfp}
G_{\Nfp}$.
The complete-erasure probability and the coherent CHSH contribution therefore have the same exponential dependence on \(\Nfp\).

Equality of the decay rates does not imply equality of the full finite-\(\Nfp\) contributions. At \(\eta=\etaG\), Eq.~\eqref{eq:supp-golden-exact-ratio} becomes
$\frac{G_{\Nfp}(\etaG)}{D_{\Nfp}(\etaG)}=
\frac{1}{
4(1-q_{\Nfp})
\bigl(1-q_{\Nfp}+q_{\Nfp}^2\bigr)
}$.
Consequently,
$\lim_{\Nfp\to\infty}
\frac{G_{\Nfp}(\etaG)}{D_{\Nfp}(\etaG)}=
\frac{1}{4},
\qquad
\lim_{\Nfp\to\infty}
\Lambda_{\Nfp}(\etaG)=
-\log_{10}4$.
The longitudinal deficit remains larger because of its finite prefactor. The balance at \(\etaG\) therefore does not itself produce a CHSH violation. 
The inverse golden ratio identifies the crossover between the two exponential rates, rather than an equality of the complete finite-\(\Nfp\) terms.

The black contour approaches \(\etaG\) only as \(\Nfp\to\infty\). Close to the threshold from above, Eq.~\eqref{eq:supp-golden-ratio-asymptotic} gives
$\Nfp_{\min}
\simeq
\frac{\ln 4}{
\ln!\left[\eta^2/(1-\eta)\right]
}$.
Expanding the denominator around \(\eta=\etaG\) yields
$\Nfp_{\min}
\sim
\frac{0.2368}{\eta-\etaG}$
as 
$\eta\downarrow\etaG$.
Thus the excitation number required to cross the CHSH boundary grows rapidly near the golden-ratio threshold.
For every fixed \(\eta<1\), both \(G_{\Nfp}\) and \(D_{\Nfp}\) vanish as \(\Nfp\to\infty\), and therefore
$S_{\Nfp}(\eta)\longrightarrow 2$.
The logarithmic ratio in Fig.~\ref{fig:supp-golden-ratio-balance} determines which contribution is dominant, but it does not measure the absolute magnitude of the violation. 
Very close to \(\etaG\), a violation exists for a sufficiently large finite \(\Nfp\), but its magnitude is correspondingly small.


\subsection{Measurement robustness $v$-- One-excitation protocol}
\label{supp:visibility}
The parameter \(v\) is a local measurement visibility.  It attenuates each local transverse signed observable.  
Assume sharp logical \(Z\) measurement and local transverse visibility \(v\).  In the \(ZX\) plane, the signed observable is
$O_v(\theta)=\cos\theta\,Z+v\sin\theta\,X$.
For \(v<1\), this is generally an unsharp signed observable: its norm is at most one, and \(E_\pm=(I\pm O_v(\theta))/2\) is a valid binary POVM.

\begin{theorem}[One-excitation optimized visibility value]
\label{thm:supp-one-ex-visibility}
For the one-excitation amplitude-damping state with symmetric survival \(\eta=1-p\), sharp logical \(Z\), and local transverse visibility \(v\), the effective correlation singular values are $a=v^2\eta$ and $c=1-2\eta(1-\eta)$.
The optimized trusted-measurement CHSH value is
\begin{equation}
S_{\max}(v,\eta)=2\sqrt{v^4\eta^2+
\max\left[v^4\eta^2,(1-2\eta(1-\eta))^2\right]}.
\label{eq:supp-Smax-vp}
\end{equation}
\end{theorem}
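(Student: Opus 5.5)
The plan is to reduce the claim to the Horodecki optimization for a two-qubit correlation matrix, with the visibility absorbed into that matrix. I work in the pure-loss (large-\(n\)) limit of the one-excitation protocol. By Proposition~\ref{prop:supp-one-ex-amplitude-damping} and Theorem~\ref{thm:supp-one-ex-chsh}, the logical state is then \(\rho_\eta=(\mathcal A_\eta\otimes\mathcal A_\eta)(\Phi)\), supported on \(\operatorname{span}\{\ket0,\ket1\}^{\otimes2}\). I write it in Pauli form,
\begin{equation*}
\rho_\eta=\tfrac14\Bigl(I\otimes I+\vec r\cdot\vec\sigma\otimes I+I\otimes\vec s\cdot\vec\sigma+\sum_{ij}T_{ij}\,\sigma_i\otimes\sigma_j\Bigr),
\end{equation*}
and compute \(T\) from the matrix-unit action.

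First, the correlation matrix. The two-sided coherence \(\ket{00}\!\bra{11}\) is damped by \(\eta\), as in Lemma~\ref{lem:supp-one-ex-fixed-correlators}. This gives \(T_{xx}=\eta\), \(T_{yy}=-\eta\), and vanishing mixed transverse entries. For \(T_{zz}\) I reuse the computation of Lemma~\ref{lem:supp-anchored-correlators} with \(N=1\). The \(\ket{11}\) branch decays on each side with probability \(q=1-\eta\), so \(T_{zz}=\tfrac12[1+(2q-1)^2]=1-2\eta(1-\eta)=c\). The entries \(T_{xz}\) and \(T_{zx}\) vanish, because \(\rho_\eta\) carries no one-sided coherence. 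Hence \(T=\operatorname{diag}(\eta,-\eta,c)\). The local Bloch vectors \(\vec r=\vec s\) point along \(z\); they do not enter CHSH and can be ignored.

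Next, the measurement model. Each local signed observable has the form \(O_v(\hat n)=n_zZ+v(n_xX+n_yY)\) with \(\hat n\) a unit vector. This is a Hermitian contraction, since its eigenvalues are \(\pm\|(n_z,vn_x,vn_y)\|\le\pm1\), so \((I\pm O_v)/2\) is a valid POVM, as noted before the theorem. With \(V=\operatorname{diag}(v,v,1)\), one has \(\langle O_v(\hat a)\otimes O_v(\hat b)\rangle=\hat a^{\,T}VTV\hat b\). The CHSH value is therefore \(\hat a_0^{\,T}T'(\hat b_0+\hat b_1)+\hat a_1^{\,T}T'(\hat b_0-\hat b_1)\) with effective matrix \(T'=VTV=\operatorname{diag}(v^2\eta,-v^2\eta,c)\). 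Its singular values are \(a=v^2\eta\) (twice) and \(c\), as claimed.

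Finally, the optimization. I follow the Horodecki argument verbatim. Write \(\hat b_0\pm\hat b_1=2\cos\phi\,\hat e_+\) and \(2\sin\phi\,\hat e_-\) with orthonormal \(\hat e_\pm\). Optimizing \(\hat a_x\) gives \(2(\cos\phi\|T'\hat e_+\|+\sin\phi\|T'\hat e_-\|)\le2\sqrt{\|T'\hat e_+\|^2+\|T'\hat e_-\|^2}\). Over orthonormal pairs this is at most \(2\sqrt{s_1^2+s_2^2}\), the two largest singular values of \(T'\), and equality is attained by choosing \(\hat e_\pm\) along the corresponding singular vectors. The two largest singular values are \(v^2\eta\) and \(\max(v^2\eta,c)\), which yields Eq.~\eqref{eq:supp-Smax-vp}.

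I expect the main obstacle to be justifying that the optimum over this trusted unsharp family is exactly the Horodecki optimum of \(T'\). One must check that no observable outside the \(V\)-dressed unit-vector family is allowed. One must also check that Bloch lengths \(\|V\hat n\|<1\) never help, which holds because the objective is linear in each vector, so its maximum lies on the boundary. If needed, I would take the more restrictive route of expressing the allowed settings explicitly as \(V\hat n\) and maximizing over \(\hat n\) directly, as above.
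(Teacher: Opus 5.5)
Your proposal is correct and follows essentially the same route as the paper: compute the correlation matrix $\operatorname{diag}(\eta,-\eta,c)$, absorb the transverse visibility as $T'=VTV$ with $V=\operatorname{diag}(v,v,1)$, and apply the Horodecki optimization to obtain the two largest singular values $v^2\eta$ and $\max(v^2\eta,c)$. Your explicit rederivation of the Horodecki bound over the $V$-dressed unit-vector family also justifies a step the paper leaves implicit: that the criterion still applies when the settings are unsharp.
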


\begin{proof}
For the ideal one-excitation output, the three nonzero diagonal entries of the two-qubit correlation matrix can be chosen as \(\eta\), \(-\eta\), and \(c=1-2\eta(1-\eta)\).  Local transverse visibility sends \(X\mapsto vX\) and \(Y\mapsto vY\), so the two transverse correlation magnitudes become \(a=v^2\eta\), while the longitudinal value remains \(c\).  The Horodecki criterion says that the optimized CHSH value for a two-qubit correlation matrix is \(2\sqrt{u_1+u_2}\), where \(u_1,u_2\) are the two largest eigenvalues of \(T^T T\) \cite{Horodecki1995}.  Here the squared singular values are \(a^2,a^2,c^2\), which gives Eq.~\eqref{eq:supp-Smax-vp}.
\end{proof}

There are two branches.  If \(a\ge c\), the two largest singular values are transverse and \(S_{\max}=2\sqrt2\,a\).  If \(c\ge a\), the optimal directions are one longitudinal and one transverse, and \(S_{\max}=2\sqrt{c^2+a^2}\).

\begin{proposition}[Leading estimate]
\label{prop:supp-pc-small-v}
In the mixed longitudinal-transverse branch, the threshold \(\eta_c(v)\) satisfies
\begin{equation}
(1-2\eta_c(1-\eta_c))^2+v^4\eta_c^2=1.
\label{eq:supp-mixed-threshold-equation}
\end{equation}
For small thresholds in the experimentally relevant range discussed in the Letter, the leading estimate is
$1-\eta_c\simeq\frac{v^4}{4}$.
\end{proposition}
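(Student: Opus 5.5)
The plan is to read the threshold directly off Theorem~\ref{thm:supp-one-ex-visibility}. In the mixed branch \(c\ge a\), the optimized value is \(S_{\max}=2\sqrt{c^2+a^2}\), where \(a=v^2\eta\) and \(c=1-2\eta(1-\eta)\). The CHSH boundary \(S_{\max}=2\) is therefore \(c^2+a^2=1\), which is Eq.~\eqref{eq:supp-mixed-threshold-equation} with \(\eta=\eta_c\). I will also confirm that \(c^2+a^2\) is increasing in \(\eta\) near \(\eta=1\), so that \(\eta_c\) is a genuine crossing and \(S_{\max}>2\) holds for \(\eta>\eta_c\). At \(\eta=1\) one has \(c=1\) and \(a=v^2\), so \(c^2+a^2=1+v^4>1\). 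For \(v>0\) and \(\eta\) slightly below one the function is continuous and strictly exceeds its local-bound value only above the root.

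Next I will write \(p=1-\eta\) and expand for small \(p\). One has \(c=1-2p(1-p)=1-2p+2p^2\), hence \(c^2=1-4p+O(p^2)\). Similarly \(a^2=v^4(1-p)^2=v^4-2v^4p+O(p^2)\). Substituting into \(c^2+a^2=1\) gives \(v^4-(4+2v^4)p+O(p^2)=0\), so
\begin{equation*}
p_c=\frac{v^4}{4+2v^4}+O(v^8)=\frac{v^4}{4}+O(v^8).
\end{equation*}
This is the stated leading estimate \(1-\eta_c\simeq v^4/4\). To make the step rigorous I will apply the implicit function theorem to \(F(p,v^4)=c^2+a^2-1\) at \((0,0)\). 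Its derivative there is \(\partial_pF=-4\neq0\), which gives a smooth root \(p_c(v^4)\) with slope \(1/4\).

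Finally I will check the consistency condition \(c\ge a\), i.e.\ \(1-2p+2p^2\ge v^2(1-p)\), at \(p=p_c\approx v^4/4\). This holds whenever \(v^2<1\) and \(v\) is small, since the left side is \(\approx1\) and the right side is \(\approx v^2\). The mixed branch is therefore the relevant one in this regime. The only delicate point is this range statement: the estimate is asymptotic in small \(v^4\), equivalently small \(p_c\), which is what ``small thresholds'' means. I expect the main obstacle to be bookkeeping this branch condition and the correction term \(2v^4p\), which is of higher order \(v^8\) and thus negligible at leading order.
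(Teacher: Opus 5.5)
Your proposal is correct and follows the paper's own argument: in the mixed branch you set $c^2+a^2=1$ with $a=v^2\eta$ and $c=1-2\eta(1-\eta)$, then expand in $p=1-\eta$; your implicit-function and branch-consistency ($c\ge a$) checks add rigor that the paper leaves implicit. One refinement: the threshold equation reduces exactly to $v^4(1-p)=4p(1-p+p^2)$, so the $O(p^2)$ terms cancel the $-2v^4p$ correction and $p_c=v^4/4+O(v^{12})$, which means your intermediate form $v^4/(4+2v^4)$ is actually a slightly worse approximation than $v^4/4$ itself.
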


\begin{proof}
In the mixed branch, the threshold is \(2\sqrt{c^2+a^2}=2\), i.e. \(c^2+a^2=1\).  Substituting \(c=\) and \(a\) gives Eq.~\eqref{eq:supp-mixed-threshold-equation}.  Expanding at small \(1-\eta_c\) and imposing to be equal to 1,
gives the leading estimate \(1-\eta_c\simeq v^4/4\).  
The exact curve in the figures is obtained by solving Eq.~\eqref{eq:supp-mixed-threshold-equation}.
\end{proof}

\subsection{Measurement robustness $v$ -- $N$--excitation protocol}
We now apply the same visibility model to the \(N\)-excitation family.  If the local measurement of the coherence between \(\ket{0}\) and \(\ket{\Nfp}\) has visibility \(v\), then \(X_{\Nfp}\mapsto vX_{\Nfp}\) on each side.  The transverse correlation therefore becomes
$\langle X_{\Nfp}\otimes X_{\Nfp}\rangle
=v^2\eta^{\Nfp}$.
and the tilted-measurement optimization gives
$S_{\Nfp,v}(\eta)=2\sqrt{c_{\Nfp}(\eta)^2+v^4\eta^{2\Nfp}}$.
For any fixed \(v>0\) independent of \(\Nfp\), the exponent ratio in the proof of Theorem~\ref{thm:supp-anchored-golden} becomes
$v^4\left(\frac{\eta^2}{1-\eta}\right)^{\Nfp}$.
The visibility changes the excitation number required to obtain a violation, but not the asymptotic threshold: a sufficiently large finite \(\Nfp\) exists whenever
\(\eta^2/(1-\eta)>1\), i.e. \(\eta>\etaG\).
The conclusion can change if the visibility decreases with \(\Nfp\).  Writing it as \(v_{\Nfp}\), define
$\chi_+:=\limsup_{\Nfp\to\infty} v_{\Nfp}^{4/\Nfp}$.
%
A sufficient condition for violation is
$\frac{\eta^2}{1-\eta}\chi_+>1$.
Indeed, this condition gives a subsequence along which \(v_{\Nfp}^4[\eta^2/(1-\eta)]^{\Nfp}\) grows exponentially. This is only a sufficient asymptotic condition; it is neither a complete characterization of \(\Nfp\)-dependent visibility nor a statement about any particular experimental platform.

\begin{figure}
\centering
\includegraphics[scale=0.9]{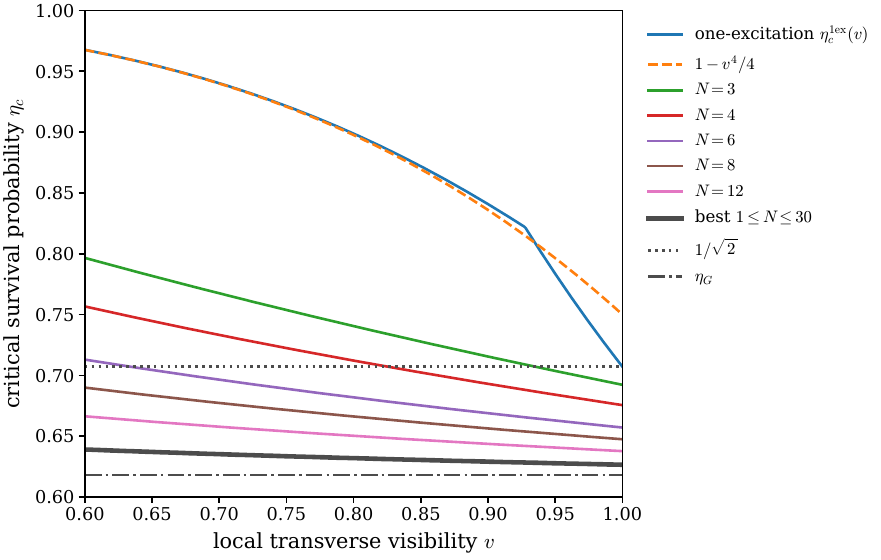}
\caption{
Critical survival probability \(\eta_c\) versus local measurement visibility \(v\). CHSH violation occurs for \(\eta>\eta_c\), so lower curves correspond to greater loss tolerance. The blue one-excitation curve changes slope because the CHSH optimum switches from the mixed longitudinal--transverse branch to the transverse--transverse branch; the dashed line \(\eta_c\simeq1-v^4/4\) is the small-loss approximation to the former. Curves labelled by \(N\) refer to the vacuum--\(N\)-excitation encodings, while the thick gray curve is the best threshold obtained over \(1\le N\le30\) at each \(v\).}
\label{fig:supp-combined-robustness}
\end{figure}
\subsection{Extended numerical search}
\label{supp:numerics}

The golden-ratio threshold of Theorem~\ref{thm:supp-anchored-golden} is exact for the anchored family \((\ket{00}+\ket{\Nfp\Nfp})/\sqrt2\), but the theorem does not establish optimality over all finite-Fock encodings and measurements.  The purpose of this section is therefore to test whether the threshold can be lowered by allowing progressively more general logical codes and observables.  In particular, we ask whether a finite-Fock strategy can produce a CHSH violation for \(\eta\leq\etaG\) by moving the encoded coherence away from the vacuum, combining several Fock-level coherences, or removing the measurement restrictions altogether.  The search provides numerical evidence within finite cutoffs; it is not an optimality proof.  All scripts, parameters, code coefficients, observables, and numerical outputs are included in the accompanying reproducibility archive.


We truncate the local Fock space at excitation number \(K\),
\begin{equation}
\cH_K=\operatorname{span}\{\ket0,\ket1,\ldots,\ket K\},
\end{equation}
and implement the pure-loss channel using the Kraus operators
\begin{equation}
L_a\ket{k}
=
\sqrt{
\binom{k}{a}
(1-\eta)^a
\eta^{k-a}
}\,
\ket{k-a},
\qquad
0\leq a\leq k\leq K.
\label{eq:supp-numerical-loss-kraus}
\end{equation}
Here \(a\) is the number of lost excitations and \(\eta\) is the survival probability of each excitation.

A code is specified by an isometry
\begin{equation}
V:\mathbb C^2\longrightarrow\cH_K,
\qquad
V\ket0=\ket{0_L},
\qquad
V\ket1=\ket{1_L},
\end{equation}
whose two columns are orthonormal logical states.  The corresponding encoded Bell state and its two-sided lossy output are
\begin{equation}
\ket{\Phi_L(V)}
=
\frac{
\ket{0_L0_L}
+
\ket{1_L1_L}
}{\sqrt2},
\qquad
\rho_\eta(V)
=
(\cL_\eta\otimes\cL_\eta)
\left(
\ket{\Phi_L(V)}\!\bra{\Phi_L(V)}
\right).
\label{eq:supp-numerical-objective}
\end{equation}

For each output state, we evaluate
\begin{equation}
S
=
\Tr\!\left[
\rho_\eta(V)
\left(
A_0\otimes(B_0+B_1)
+
A_1\otimes(B_0-B_1)
\right)
\right],
\label{eq:supp-numerical-chsh}
\end{equation}
where \(A_0,A_1,B_0,B_1\) are Hermitian contractions.  Each contraction \(O\), satisfying \(\norm{O}_\infty\leq1\), defines a valid binary POVM through \(E_\pm=(I\pm O)/2\).  We compare four search classes, ordered from the anchored analytical construction to increasingly general finite-Fock measurements.

\paragraph{Anchored Fock-pair benchmark.}

The anchored family uses
\begin{equation}
\ket{0_L}=\ket0,
\qquad
\ket{1_L}=\ket{\Nfp},
\qquad
1\leq\Nfp\leq K.
\end{equation}
It is called ``anchored'' because one logical state is fixed at the vacuum, and ``Fock-pair'' because the code is supported on the two Fock levels \(0\) and \(\Nfp\).  Its transverse observable probes the single coherence between these levels,
\begin{equation}
X_{\Nfp}
=
\ket0\!\bra{\Nfp}
+
\ket{\Nfp}\!\bra0,
\end{equation}
while the longitudinal observable distinguishes the vacuum from the nonvacuum sector.  For each cutoff \(K\), the benchmark is the largest closed-form value
\begin{equation}
S_{\Nfp}(\eta)
=
2\sqrt{
c_{\Nfp}(\eta)^2
+
\eta^{2\Nfp}
}
\end{equation}
over \(1\leq\Nfp\leq K\).  This class reproduces the analytical family of Theorem~\ref{thm:supp-anchored-golden}; no numerical optimization of its observables is required.

\paragraph{Single-coherence search.}

The single-coherence class allows the logical states \(\ket{0_L}\) and \(\ket{1_L}\) to be arbitrary orthonormal vectors in \(\cH_K\), but restricts the measurement to one selected Fock-level coherence.  For a pair \(0\leq r<s\leq K\), the transverse observable is
\begin{equation}
X_{rs}
=
\ket r\!\bra s
+
\ket s\!\bra r.
\end{equation}
The accompanying longitudinal observable is diagonal,
\begin{equation}
Z=\sum_{k=0}^{K}z_k\ket k\!\bra k,
\qquad
z_k\in\{+1,-1\},
\end{equation}
with \(z_r=+1\), \(z_s=-1\), and the remaining signs optimized by enumeration.  The local CHSH observables are then optimized within the two-dimensional operator space spanned by \(Z\) and \(X_{rs}\).

Thus, ``single coherence'' describes the measurement restriction, not the support of the code: the encoded states may occupy several Fock levels, but the transverse measurement accesses only one off-diagonal matrix element \(r\leftrightarrow s\) at a time.  The search compares all pairs \((r,s)\), all allowed diagonal sign patterns, the deterministic code list, and the sampled random isometries.  For each fixed code, this tests whether shifting the measured coherence away from \(0\leftrightarrow\Nfp\), changing the longitudinal assignment, or spreading the logical states across several Fock levels improves on the anchored family.

\paragraph{Band-limited search.}

The band-limited class permits each observable to contain several Fock-level coherences, while restricting how far apart the coupled levels may be.  A Hermitian contraction \(O\) has bandwidth \(B\) if
\begin{equation}
\bra r O\ket s=0
\qquad
\text{whenever}
\qquad
|r-s|>B.
\label{eq:supp-bandwidth-definition}
\end{equation}
For example, \(B=1\) permits diagonal terms and nearest-neighbor coherences, while larger \(B\) allows couplings between more widely separated Fock levels.  The four CHSH observables are otherwise independent.

The values reported as ``band-limited'' are the largest values found over the tested cutoffs, bandwidths, codes, random initializations, and alternating optimization runs.  They are feasible CHSH values, but they are not certified maxima of the band-limited class.  In particular, ``best band-limited'' always means ``best band-limited value found in the specified search.''

\paragraph{General-observable finite-Fock benchmark.}

In the final class, the four observables are arbitrary Hermitian contractions on \(\cH_K\), without diagonal, single-coherence, or bandwidth restrictions.  This is the most permissive measurement class included in the search.  We refer to it in the tables as the ``unrestricted benchmark,'' but the term ``unrestricted'' applies only to the observables within the chosen finite-dimensional space.  The calculation still uses finite cutoffs, a finite list of deterministic and random codes, and a nonconvex alternating optimization.  It is therefore not an optimization over all finite-Fock codes, nor over the full infinite-dimensional output space.  Its role is to test whether removing the imposed measurement structure reveals a violation missed by the more operationally constrained classes.

\paragraph{Alternating optimization.}

For the band-limited and general-observable classes, the observables are optimized by a see-saw procedure.  The idea is to optimize one observable at a time while keeping the state and the other three observables fixed.

For example, with \(B_0\) and \(B_1\) fixed, the dependence of the CHSH value on \(A_0\) is
\begin{equation}
\Tr(A_0H_{A_0}),
\qquad
H_{A_0}
=
\Tr_B\!\left[
\rho_\eta(V)
\bigl(I\otimes(B_0+B_1)\bigr)
\right].
\label{eq:supp-A0-effective-operator}
\end{equation}
If
\begin{equation}
H_{A_0}
=
\sum_j\lambda_j\ket{j}\!\bra{j}
\end{equation}
is its spectral decomposition, define
\begin{equation}
\operatorname{sign}(H_{A_0})
=
\sum_j\operatorname{sign}(\lambda_j)
\ket{j}\!\bra{j}.
\end{equation}
Among all Hermitian contractions \(A_0\), the exact maximizer of \(\Tr(A_0H_{A_0})\) is
\begin{equation}
A_0
\leftarrow
\operatorname{sign}\!\left(
\Tr_B\!\left[
\rho_\eta(V)
\bigl(I\otimes(B_0+B_1)\bigr)
\right]
\right).
\label{eq:supp-seesaw-update}
\end{equation}
Operationally, this update assigns the outcome \(+1\) to the nonnegative eigenspace of \(H_{A_0}\) and \(-1\) to its negative eigenspace.  The remaining updates are obtained in the same way:
\begin{align}
A_1
&\leftarrow
\operatorname{sign}\!\left(
\Tr_B\!\left[
\rho_\eta(V)
\bigl(I\otimes(B_0-B_1)\bigr)
\right]
\right),\\
B_0
&\leftarrow
\operatorname{sign}\!\left(
\Tr_A\!\left[
\rho_\eta(V)
\bigl((A_0+A_1)\otimes I\bigr)
\right]
\right),\\
B_1
&\leftarrow
\operatorname{sign}\!\left(
\Tr_A\!\left[
\rho_\eta(V)
\bigl((A_0-A_1)\otimes I\bigr)
\right]
\right).
\end{align}

Each update is exact for the observable being optimized while the state and the other observables are fixed.  This does not make the complete see-saw globally optimal: the CHSH objective is not jointly convex in all four observables, and different initial conditions may converge to different stationary points.  Moreover, the code is selected from deterministic candidates and randomly sampled isometries rather than optimized globally.  Multiple random starts and code samples are therefore used to reduce, but not eliminate, the risk of missing a better strategy.

The unrestricted sign update generally violates the bandwidth constraint in Eq.~\eqref{eq:supp-bandwidth-definition}.  For the band-limited search, two admissible candidates are constructed at each step.  First, the unrestricted operator \(\operatorname{sign}(H)\) is projected onto the band by deleting all matrix elements with \(|r-s|>B\), and the result is rescaled if necessary to restore operator norm at most one.  Second, \(H\) itself is projected onto the same band and normalized to a Hermitian contraction.  The candidate giving the larger one-step objective is retained.  These are the ``projected sign'' and ``projected linear'' updates, respectively.

This projected procedure is heuristic.  Projection of the unrestricted optimizer need not produce the optimal band-limited contraction.  The exact one-observable update would instead require the semidefinite program
\begin{equation}
\begin{aligned}
\text{maximize}\quad
&\Tr(OH),\\
\text{subject to}\quad
&O=O^\dagger,\qquad -I\leq O\leq I,\\
&\bra rO\ket s=0
\quad\text{for}\quad |r-s|>B.
\end{aligned}
\label{eq:supp-band-limited-sdp}
\end{equation}
This SDP was not solved in the present search.  The reported band-limited values are therefore achieved lower bounds on the optimum within the corresponding search class, not SDP-certified optima.

\paragraph{Search configuration and validation}

Table~\ref{tab:supp-numerical-grid} gives the complete search configuration.  The random-number-generator seed fixes all random code samples and initial observables, making the calculation reproducible.  The profile name \texttt{serious} identifies the stated numbers of samples, starts, and iterations.  A random code is obtained by sampling two complex vectors in \(\cH_K\) and orthonormalizing them to form an isometry \(V\).

The deterministic code list contains every anchored pair \(\ket0,\ket s\), with \(1\leq s\leq K\), and, for \(K\geq4\), the loss-code-inspired candidate denoted by \texttt{binomial\_024},
\begin{equation}
\ket{0_L}
=
\frac{\ket0+\ket4}{\sqrt2},
\qquad
\ket{1_L}
=
\ket2.
\end{equation}
The violation tolerance \(10^{-7}\) means that a value is reported as a violation only when \(S>2+10^{-7}\).  Smaller excesses are treated as numerical roundoff.  The convergence tolerance controls termination of the alternating updates, while the number of starts specifies how many independently initialized see-saw runs are performed for each code.  The bandwidth condition \(B\leq K\) prevents requesting matrix elements outside the truncated Fock space.

\begin{table}[t]
\centering
\caption{Configuration of the extended numerical search.}
\label{tab:supp-numerical-grid}
\begin{tabular}{ll}
\toprule
Quantity & Value \\
\midrule
Survival grid & \(\{1/\sqrt2,\etaG,0.60,0.58,0.56,0.54,0.52,0.505\}\) \\
Violation tolerance & \(10^{-7}\) \\
Convergence tolerance in code & \(10^{-12}\) \\
Anchored benchmark cutoffs & \(K\in\{2,3,4,5,6,8\}\) \\
Single-coherence cutoffs & \(K=2,3,4,5,6\) \\
Single-coherence code list & deterministic + 256 random isometries \\
Band-limited cutoffs & \(K\in\{3,4,5,6,8\}\) \\
Bandwidths & \(B=1,2,3,4\), with \(B\le K\) \\
Band-limited code list & deterministic + 64 random isometries \\
Band-limited see-saw & 12 starts/code, 150 iterations \\
Unrestricted cutoffs & \(K\in\{3,4,5,6,8\}\) \\
Unrestricted code list & deterministic + 128 random isometries \\
Unrestricted see-saw & 24 starts/code, 250 iterations \\
Deterministic candidates & anchored \(\ket0,\ket s\), \(1\le s\le K\); plus \texttt{binomial\_024} for \(K\ge4\) \\
\bottomrule
\end{tabular}
\end{table}

Before running the broader search, we checked the implementation against the analytical anchored formula for \(\Nfp=1,2,3,4\).  The largest absolute difference between the direct Kraus-channel calculation and the closed-form CHSH value was
$6.661338147750939\times10^{-16}.$
This agreement, at the level of floating-point precision, validates the channel implementation and the CHSH evaluation for the anchored test cases.

\paragraph{Numerical results}

Table~\ref{tab:supp-best-values} reports the largest value found in each search class at every tested survival probability.  At \(\eta=1/\sqrt2\), the anchored and single-coherence classes both reach \(S=2.033217\).  Allowing several coherences improves the best value to \(S=2.172382\) in both the band-limited and unrestricted searches.  Thus, broader measurements can substantially increase the magnitude of the violation when \(\eta\) is sufficiently above \(\etaG\).

The behavior changes at the golden-ratio value.  At \(\eta=\etaG\), the largest anchored value within the tested cutoff is \(1.998642\), consistent with the analytical result that no finite \(\Nfp\) violates CHSH at the boundary, although the value approaches \(2\) as \(\Nfp\) increases.  The three broader searches return \(2.000000\), but these values correspond to the local bound within numerical precision.  They are not violations.  The same pattern persists at every tested value below \(\etaG\): the anchored observable remains below \(2\), while the broader classes can recover the trivial local-bound value but do not exceed it by the reporting tolerance.

\begin{table}[t]
\centering
\caption{Best value by survival probability and regime.  Values at or below \(\etaG\) that round to \(2.000000\) are local-bound values within numerical precision, not robust violations above the tolerance.}
\label{tab:supp-best-values}
\begin{tabular}{ccccc}
\toprule
\(\eta\) & Anchored & Single coherence & Band-limited & Unrestricted benchmark \\
\midrule
0.707107 & 2.033217 & 2.033217 & 2.172382 & 2.172382 \\
0.618034 & 1.998642 & 2.000000 & 2.000000 & 2.000000 \\
0.600000 & 1.997663 & 2.000000 & 2.000000 & 2.000000 \\
0.580000 & 1.996295 & 2.000000 & 2.000000 & 2.000000 \\
0.560000 & 1.994482 & 2.000000 & 2.000000 & 2.000000 \\
0.540000 & 1.992050 & 2.000000 & 2.000000 & 2.000000 \\
0.520000 & 1.988789 & 2.000000 & 2.000000 & 2.000000 \\
0.505000 & 1.985652 & 2.000000 & 2.000000 & 2.000000 \\
\bottomrule
\end{tabular}
\end{table}

Table~\ref{tab:supp-best-below-etaG} resolves the apparent \(S=2\) values more closely.  In the single-coherence, band-limited, and unrestricted searches, the largest positive excesses below or at \(\etaG\) are between \(10^{-15}\) and \(10^{-14}\).  These numbers are comparable to machine-precision rounding errors and are eight orders of magnitude below the adopted violation tolerance \(10^{-7}\).  They therefore carry no physical significance.  The anchored entry remains below the local bound by \(1.358\times10^{-3}\) at the largest tested cutoff.

\begin{table}[t]
\centering
\small
\caption{Best entries at or below \(\etaG\) by regime.  The tiny positive excesses shown are numerical noise below the \(10^{-7}\) reporting tolerance, not physical violations.}
\label{tab:supp-best-below-etaG}
\resizebox{.65\linewidth}{!}{%
\begin{tabular}{@{}lllll@{}}
\toprule
Regime & Best \(\eta\) & Parameters & Code & \(S-2\) \\
\midrule
Anchored & 0.618033988750 & \(K=8\) & \texttt{anchored\_0\_8} & \(-1.358\times10^{-3}\) \\
Single-coh. & 0.580000000000 & \(K=4\) & \texttt{anchored\_0\_3} & \(+1.3\times10^{-15}\) \\
Band-limited & 0.505000000000 & \(K=8,B=4\) & \texttt{random\_50} & \(+1.8\times10^{-15}\) \\
Unrestricted & 0.600000000000 & \(K=8\) & \texttt{random\_57} & \(+6.2\times10^{-15}\) \\
\bottomrule
\end{tabular}%
}
\end{table}

Figure~\ref{fig:supp-best-S-vs-eta} summarizes the same comparison across the survival grid.  The vertical lines distinguish the exact anchored threshold \(\etaG\) from the one-excitation threshold \(1/\sqrt2\).  The improvement at \(1/\sqrt2\) shows that the broader search is capable of finding strategies beyond the anchored single-coherence benchmark.  Its failure to find a corresponding improvement below \(\etaG\) is therefore not simply due to all four search classes being numerically identical.

\begin{figure}[t]
\centering
\includegraphics[width=.78\linewidth]{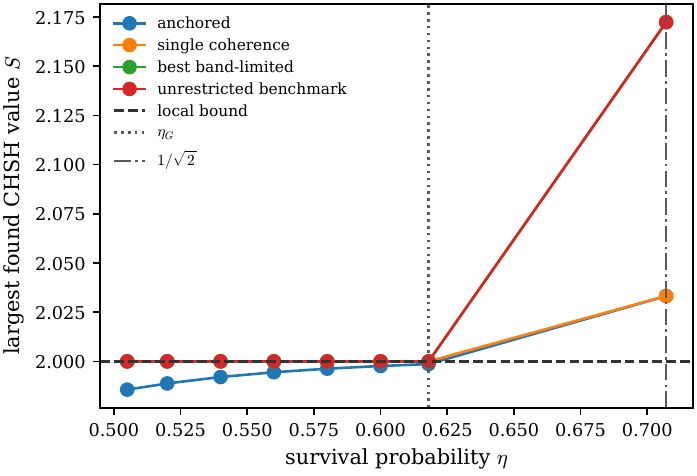}
\caption{Largest CHSH value found at each tested survival probability.  The anchored curve uses the closed-form \(N\)-excitation expression, the single-coherence search permits one optimized Fock-level coherence, the band-limited search permits several coherences with \(|r-s|\leq B\), and the unrestricted benchmark permits arbitrary Hermitian contractions within the tested cutoff.  Vertical lines mark \(\etaG\) and \(1/\sqrt2\).}
\label{fig:supp-best-S-vs-eta}
\end{figure}

Figure~\ref{fig:supp-performance-bandwidth} isolates the dependence on the bandwidth \(B\).  Increasing \(B\) enlarges the admissible measurement class by allowing coherences between more widely separated Fock levels.  The plotted values are the best feasible values found by the projected see-saw procedure at each bandwidth.  Because the update is not the exact SDP in Eq.~\eqref{eq:supp-band-limited-sdp}, the curves describe the performance of the implemented search and should not be interpreted as the optimal CHSH value at fixed \(B\).

\begin{figure}[t]
\centering
\includegraphics[width=.78\linewidth]{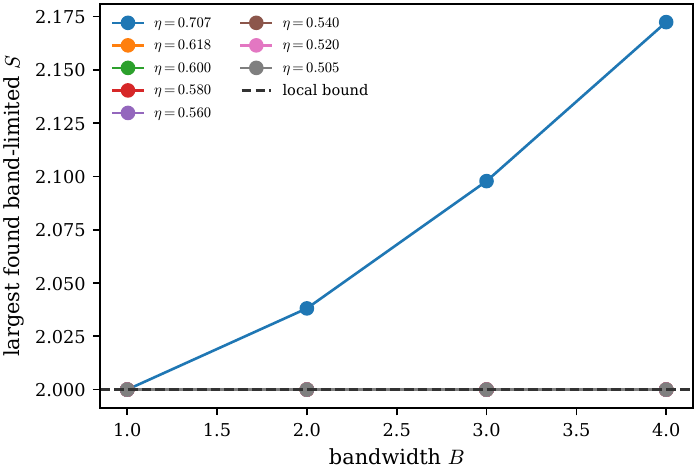}
\caption{Largest CHSH values found by the band-limited search as a function of the bandwidth \(B\).  A bandwidth \(B\) permits matrix elements between Fock levels satisfying \(|r-s|\leq B\).  The values are obtained with projected sign and linear updates and are feasible lower bounds, not SDP-certified optima.}
\label{fig:supp-performance-bandwidth}
\end{figure}

Within the tested cutoffs, codes, and measurement classes, we found no value satisfying
\begin{equation}
S>2+10^{-7}
\qquad
\text{for}
\qquad
\eta\leq\etaG.
\end{equation}
This negative result must be interpreted with care.  A see-saw search provides achieved values and therefore lower bounds on the best performance of the tested class.  It cannot exclude a better code or measurement that was not sampled, nor can it exclude convergence to a suboptimal stationary point.  The band-limited search has the additional limitation that its one-observable updates are projected approximations rather than exact SDP solutions.  Finally, the general-observable runs are permissive finite-dimensional benchmarks, not experimentally specified protocols.

Nevertheless, the results show that the golden-ratio boundary is not easily removed by shifting a single coherence, spreading the logical states across several Fock levels, increasing the measurement bandwidth, or allowing arbitrary finite-Fock observables within the tested dimensions.  This raises the possibility that \(\etaG\) reflects a broader operational boundary for recovery-free finite-Fock strategies, rather than an accidental feature of the vacuum-anchored family.  The present numerics do not establish such universality, and \(\etaG\) should not be interpreted as a no-go threshold beyond the family covered by Theorem~\ref{thm:supp-anchored-golden}.

Testing this possibility requires broader constructions.  Natural directions include multilevel codes with several deliberately aligned coherences, non-vacuum Fock-pair superpositions, binomial or cat-like bosonic encodings measured directly without recovery, exact SDP updates for band-limited observables, and joint optimization of the code isometry and measurements.  Such searches may either produce a counterexample below \(\etaG\) or identify additional structure behind a more general operational bound.

The finite-Fock search above asks how much can be gained by enlarging the local code and measurement spaces under pure loss.  We now take a complementary finite-size approach.  Instead of increasing the number of accessible Fock coherences, we introduce redundancy through a sparse binomial permutation-invariant code.  Its observables act directly on the survivor Dicke sectors, and its performance is established analytically for a controlled number of unflagged deletions.

\section{Sparse binomial PI construction}
\label{supp:sparse-proof}


\begin{definition}[Binomial PI code]
\label{def:supp-binomial-code}
Let \(\Mbin\) be odd, let \(g=t+1\), let \(n=g\Mbin\), and assume \(\Mbin>t\).  The binomial permutation-invariant code is
\begin{align}
\ket{0_L}
=2^{-(\Mbin-1)/2}
\sum_{\ell\ {\rm even}}
\sqrt{\binom{\Mbin}{\ell}}\,\ket{D_{g\ell}^{n}},
\qquad
\ket{1_L}
=2^{-(\Mbin-1)/2}
\sum_{\ell\ {\rm odd}}
\sqrt{\binom{\Mbin}{\ell}}\,\ket{D_{g\ell}^{n}} .
\label{eq:supp-binomial-code}
\end{align}
This is a PI Dicke-lattice encoding of the same general type as the PI deletion-code families studied in Refs.~\cite{Ouyang2014,Aydin2024,Shibayama2021}.  The logical label is the parity of the coarse index \(\ell\).
\end{definition}

On the survivor block \(\cH_{m,{\rm sym}}\), define
$a(q)=(-q)\bmod g$, and $L(q)=\frac{q+a(q)}{g}$.
The modular parity observable $Z_g^{(m)}$ and the transverse observable $\ContractX_g^{(m)}$ are respectively,
\begin{equation}
Z_g^{(m)}=
\sum_{q=0}^{m}(-1)^{L(q)}\ket{D_q^m}\!\bra{D_q^m},
\qquad
\ContractX_g^{(m)}=
\sum_{\substack{0\le q\le m-g\\ L(q)\ {\rm even}}}
\left(
\ket{D_q^m}\!\bra{D_{q+g}^m}
+
\ket{D_{q+g}^m}\!\bra{D_q^m}
\right).
\label{eq:supp-Zg}
\end{equation}
The unpaired subspace is assigned zero signed observable, i.e. randomized output.

\begin{lemma}[Partial Dicke-pair contraction]
\label{lem:supp-partial-shift-validity}
For every survivor block \(m\), \(\ContractX_g^{(m)}\) is Hermitian and satisfies \(\norm{\ContractX_g^{(m)}}_\infty\le1\).  It anticommutes with \(Z_g^{(m)}\) on the paired support and vanishes on the unpaired support.  Therefore
$B_{0,1}^{(m)}=\frac{Z_g^{(m)}\pm\ContractX_g^{(m)}}{\sqrt2}$
are Hermitian contractions and define valid binary POVMs through
\(E_\pm(B_y^{(m)})=(I\pm B_y^{(m)})/2\).
\end{lemma}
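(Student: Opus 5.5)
The plan is to show that $\ContractX_g^{(m)}$ and $Z_g^{(m)}$ decompose into a direct sum of two-dimensional blocks plus a diagonal remainder. On each two-dimensional block they act as the Pauli pair $(\sigma_x,\sigma_z)$, and on the remainder $\ContractX_g^{(m)}$ vanishes. All claims then follow blockwise, exactly as in Lemma~\ref{lem:supp-anchored-observable-validity}.

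\emph{Shift rule for the offset.} First record how $a(q)$ and $L(q)$ change under a shift by $g$. Since $a(q)=(-q)\bmod g$ depends only on $q$ modulo $g$, we have $a(q+g)=a(q)$. Hence
\[
L(q+g)=\frac{q+g+a(q)}{g}=L(q)+1 .
\]
Two consequences follow:
\begin{itemize}
\item[(i)] If $(q,q+g)$ is a displayed pair, i.e.\ $L(q)$ is even, then $L(q+g)$ is odd. So $q+g$ is never the left element of another pair.
\item[(ii)] Since $q\mapsto q+g$ is injective, each index is the right element of at most one pair.
\end{itemize}
Moreover, a left element has $L$ even and a right element has $L$ odd, so no index is both. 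Therefore the two-dimensional subspaces
\[
P_q=\operatorname{span}\{\ket{D_q^m},\ket{D_{q+g}^m}\},\qquad L(q)\ \text{even},\ q\le m-g,
\]
are mutually orthogonal. Let $U$ be the span of the remaining Dicke states in $\cH_{m,\sym}$, the unpaired support. Then $\cH_{m,\sym}=\bigl(\bigoplus_q P_q\bigr)\oplus U$.

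\emph{Blockwise form.} By construction $\ContractX_g^{(m)}$ acts on each $P_q$ as $\sigma_x$ in the ordered basis $(\ket{D_q^m},\ket{D_{q+g}^m})$ and is zero on $U$. It is Hermitian because each summand is, and its spectrum lies in $\{-1,0,1\}$, so $\norm{\ContractX_g^{(m)}}_\infty\le1$.

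The operator $Z_g^{(m)}$ is diagonal in the Dicke basis, so it preserves every $P_q$ and $U$. On $P_q$ its entries are $(-1)^{L(q)}=+1$ and $(-1)^{L(q)+1}=-1$, so it acts as $\sigma_z$. Hence $\{Z_g^{(m)},\ContractX_g^{(m)}\}=0$ on each $P_q$, while $\ContractX_g^{(m)}=0$ on $U$. This establishes the anticommutation and vanishing claims.

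\emph{Contraction property of $B_{0,1}^{(m)}$.} The operators $B_{0,1}^{(m)}$ respect the same decomposition.
\begin{itemize}
\item On each $P_q$, $B_{0,1}^{(m)}$ equals $(\sigma_z\pm\sigma_x)/\sqrt2$. Its square is $I$ by anticommutation, so its eigenvalues are $\pm1$.
\item On $U$, $B_{0,1}^{(m)}=Z_g^{(m)}/\sqrt2$. This is diagonal with entries $\pm1/\sqrt2$.
\end{itemize}
Thus $\norm{B_y^{(m)}}_\infty\le1$, and $E_\pm=(I\pm B_y^{(m)})/2\ge0$ sum to $I$, giving valid binary POVMs. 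The direct sums over $m$ inherit these properties blockwise.

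The only step needing care is the disjointness of the pairs. It hinges on the shift rule $L(q+g)=L(q)+1$; everything else reduces to $2\times2$ Pauli algebra.
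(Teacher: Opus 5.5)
Your proof is correct and follows the same route as the paper. Both arguments split the survivor block into mutually orthogonal two-dimensional pair blocks, on which $Z_g^{(m)}$ and the contraction act as Pauli $Z$ and $X$ while the contraction vanishes on the unpaired complement, and then check $\bigl(B_y^{(m)}\bigr)^2$ block by block. You are slightly more explicit than the paper in two places: you derive $L(q+g)=L(q)+1$ from $a(q+g)=a(q)$, and you note that injectivity of $q\mapsto q+g$ prevents two pairs from sharing an upper element, a point the paper leaves implicit.
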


\begin{proof}
The summands in $\ContractX_g^{(m)}$ Eq.~\eqref{eq:supp-Zg} are Hermitian swaps between the two-dimensional subspaces
\(\operatorname{span}\{\ket{D_q^m},\ket{D_{q+g}^m}\}\).  These two-dimensional blocks are disjoint.  Indeed, if \(L(q)\) is even, then \(L(q+g)=L(q)+1\) is odd; hence \(q+g\) cannot also appear as the lower, even-labeled member of another pair.  Thus \(\ContractX_g^{(m)}\) is a direct sum of Pauli \(X\) matrices on the paired support and zero on its orthogonal complement.  This proves Hermiticity and \(\norm{\ContractX_g^{(m)}}_\infty\le1\).

On each paired block, \(Z_g^{(m)}\) has eigenvalues \(+1\) and \(-1\), because \(L(q+g)=L(q)+1\).  Therefore the restriction of \(Z_g^{(m)}\) is Pauli \(Z\), while the restriction of \(\ContractX_g^{(m)}\) is Pauli \(X\), and they anticommute.  On the unpaired support, \(\ContractX_g^{(m)}=0\).  Hence
\begin{equation}
\left(B_y^{(m)}\right)^2
=
\begin{cases}
I,&\text{on paired two-dimensional blocks},\\
\frac12 I,&\text{on the unpaired support},
\end{cases}
\end{equation}
so \(\norm{B_y^{(m)}}_\infty\le1\).  The effects \((I\pm B_y^{(m)})/2\) are positive and sum to the identity.  On the kernel of \(\ContractX_g^{(m)}\), the transverse signed observable is zero, which is exactly unbiased classical output randomization.\end{proof}

In the CHSH construction Alice's transverse setting is the signed observable \(\ContractX_g\); it gives unbiased randomized output on unpaired sectors.  Bob's tilted settings are \((Z_g\pm\ContractX_g)/\sqrt2\); on unpaired sectors these reduce to \(Z_g/\sqrt2\) and remain Hermitian contractions.

To evaluate the CHSH correlations after deletion, we now resolve the output into branches labeled by the number \(a\) of lost excitations.  The key point is that, for fewer than \(\Mbin\) deletions, each branch occurs with the same weight in the two logical sectors.  This balancing allows the post-deletion state to be written as a mixture of Bell-like branch states.
For exact deletion of \(r\) particles, recall
$Q_{\ell,a}^{(r)}=
\frac{\binom{g\ell}{a}\binom{g\Mbin-g\ell}{r-a}}{\binom{g\Mbin}{r}}$,
with $ 0\le a\le r$.
The binomial coefficient is understood to be zero outside its natural range.  The branch weights in the even and odd logical sectors are
\begin{align}
c_{0,a}^{(r)}&=2^{-(\Mbin-1)}
\sum_{\ell\ {\rm even}}\binom{\Mbin}{\ell}Q_{\ell,a}^{(r)},
&
 c_{1,a}^{(r)}&=2^{-(\Mbin-1)}
\sum_{\ell\ {\rm odd}}\binom{\Mbin}{\ell}Q_{\ell,a}^{(r)} .
\end{align}

\begin{lemma}[Finite-difference syndrome balancing]
\label{lem:supp-finite-diff-balancing}
For every \(r<\Mbin\) and every lost-excitation syndrome \(a\): $c_{0,a}^{(r)}=c_{1,a}^{(r)}=:c_a^{(r)}$.
\end{lemma}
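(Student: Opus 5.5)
The plan is to show that the difference
\[
c_{0,a}^{(r)}-c_{1,a}^{(r)}
=2^{-(\Mbin-1)}\sum_{\ell=0}^{\Mbin}(-1)^\ell\binom{\Mbin}{\ell}Q_{\ell,a}^{(r)}
\]
vanishes, by recognizing it as an \(\Mbin\)-th finite difference of a polynomial in \(\ell\) of degree at most \(r<\Mbin\).

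\textbf{Step 1: reduce to a polynomial in \(\ell\).} The denominator \(\binom{g\Mbin}{r}\) does not depend on \(\ell\), so it suffices to study
\[
P(\ell):=\binom{g\ell}{a}\binom{g\Mbin-g\ell}{r-a}
\]
as a function of \(\ell\in\{0,\dots,\Mbin\}\).

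\textbf{Step 2: check that the polynomial form is valid at every integer point used.} Write \(\binom{x}{a}=x(x-1)\cdots(x-a+1)/a!\), a polynomial in \(x\) of degree \(a\). This expression agrees with the combinatorial binomial for every integer \(x\ge0\); in particular it vanishes for \(0\le x<a\). For the first factor, \(x=g\ell\ge0\) at every integer \(\ell\) in the sum. For the second factor, \(x=g(\Mbin-\ell)\ge0\) likewise. Hence the convention ``zero outside the natural range'' coincides with the polynomial formula at all summation points, and no negative arguments occur.

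\textbf{Step 3: bound the degree.} The first factor is a polynomial in \(\ell\) of degree \(a\), and the second of degree \(r-a\). Therefore \(P(\ell)\) is a polynomial in \(\ell\) of degree at most \(r\).

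\textbf{Step 4: apply the finite-difference identity.} For any polynomial \(P\) of degree less than \(\Mbin\),
\[
\sum_{\ell=0}^{\Mbin}(-1)^\ell\binom{\Mbin}{\ell}P(\ell)=(-1)^{\Mbin}\,(\Delta^{\Mbin}P)(0)=0 .
\]
To verify this, reduce to the monomial basis, or equivalently to the falling-factorial basis \(\ell^{\underline{j}}\). For \(j<\Mbin\),
\[
\sum_{\ell}(-1)^\ell\binom{\Mbin}{\ell}\ell^{\underline{j}}
=\Mbin^{\underline{j}}\sum_{\ell}(-1)^\ell\binom{\Mbin-j}{\ell-j}=0,
\]
because the inner alternating sum is \((1-1)^{\Mbin-j}\) up to sign. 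Since \(\deg P\le r<\Mbin\), the difference \(c_{0,a}^{(r)}-c_{1,a}^{(r)}\) vanishes, so \(c_{0,a}^{(r)}=c_{1,a}^{(r)}\).

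The only delicate point is Step 2: confirming that the zero-convention binomials really agree with the polynomial extension at every \(\ell\) in the range, including the endpoints \(\ell=0\) and \(\ell=\Mbin\). The rest is the standard vanishing of high-order finite differences of low-degree polynomials. Note that the argument uses neither the oddness of \(\Mbin\) nor the specific spacing \(g\), only \(r<\Mbin\).
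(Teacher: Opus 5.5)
Your proposal is correct and follows the same route as the paper: write the difference as the alternating sum $2^{-(\Mbin-1)}\sum_{\ell=0}^{\Mbin}(-1)^\ell\binom{\Mbin}{\ell}Q_{\ell,a}^{(r)}$, observe that $Q_{\ell,a}^{(r)}$ is a polynomial in $\ell$ of degree at most $r<\Mbin$, and use the vanishing of the $\Mbin$-th finite difference. Your Step~2 fills in a detail the paper leaves implicit: the zero-outside-range convention agrees with the polynomial extension at every summation point, because $g\ell\ge0$ and $g(\Mbin-\ell)\ge0$. Your falling-factorial proof of the finite-difference identity is likewise a detail the paper does not spell out.
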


\begin{proof}
The difference is
$c_{0,a}^{(r)}-c_{1,a}^{(r)}
=2^{-(\Mbin-1)}\sum_{\ell=0}^{\Mbin}
(-1)^\ell\binom{\Mbin}{\ell}Q_{\ell,a}^{(r)}$ .
As a function of \(\ell\), \(Q_{\ell,a}^{(r)}\) is a polynomial of degree at most \(r\): the factor \(\binom{g\ell}{a}\) has degree \(a\), and \(\binom{g\Mbin-g\ell}{r-a}\) has degree \(r-a\).  Since \(r<\Mbin\), the \(\Mbin\)-th finite difference of this polynomial vanishes:
$\sum_{\ell=0}^{\Mbin}(-1)^\ell\binom{\Mbin}{\ell}P(\ell)=0$ with $\deg P<\Mbin$.
This proves the equality of the even and odd branch weights.\end{proof}

For \(r\le t=g-1\), the syndrome \(a\) is encoded in the residue class of the surviving Dicke weight: if \(q=g\ell-a\), then \(a(q)=a\) and \(L(q)=\ell\).  Define the normalized branch states
\begin{align}
\ket{\widetilde 0_a^{(r)}}
=\frac{1}{\sqrt{2^{(\Mbin-1)}c_a^{(r)}}}
\sum_{\substack{\ell\ {\rm even}\\ g\ell\ge a}}
\sqrt{\binom{\Mbin}{\ell}Q_{\ell,a}^{(r)}}\,
\ket{D_{g\ell-a}^{n-r}},
\quad
\ket{\widetilde 1_a^{(r)}}
=\frac{1}{\sqrt{2^{(\Mbin-1)}c_a^{(r)}}}
\sum_{\substack{\ell\ {\rm odd}\\ g\ell\ge a}}
\sqrt{\binom{\Mbin}{\ell}Q_{\ell,a}^{(r)}}\,
\ket{D_{g\ell-a}^{n-r}} .
\end{align}
If a coefficient has zero norm, the corresponding normalized vector is omitted from the branch mixture; it has zero weight.

For fixed deletion numbers \(r_A,r_B\le t\), the post-deletion encoded Bell state decomposes as
\begin{equation}
\rho_{r_A,r_B}=
\sum_{a=0}^{r_A}\sum_{b=0}^{r_B}
 c_a^{(r_A)}c_b^{(r_B)}
\ket{\Phi_{a,b}^{(r_A,r_B)}}\!\bra{\Phi_{a,b}^{(r_A,r_B)}},
\quad
\ket{\Phi_{a,b}^{(r_A,r_B)}}=
\frac{
\ket{\widetilde0_a^{(r_A)}}_A\ket{\widetilde0_b^{(r_B)}}_B+
\ket{\widetilde1_a^{(r_A)}}_A\ket{\widetilde1_b^{(r_B)}}_B}{\sqrt2}.
\label{eq:supp-branch-mixture}
\end{equation}
This follows directly from the exact-\(r\) Kraus maps and Lemma~\ref{lem:supp-finite-diff-balancing}.  The branch weights are normalized because
\begin{equation}
\sum_{a=0}^{r} c_a^{(r)}
=2^{-(\Mbin-1)}\sum_{\ell\ {\rm even}}\binom{\Mbin}{\ell}
\sum_{a=0}^{r} Q_{\ell,a}^{(r)}
=2^{-(\Mbin-1)}\sum_{\ell\ {\rm even}}\binom{\Mbin}{\ell}=1,
\end{equation}
and the same calculation holds on the odd sector.  Hence \(\sum_{a,b}c_a^{(r_A)}c_b^{(r_B)}=1\).  The unnormalized \((a,b)\) branch of the logical Bell state has squared norm \(c_a^{(r_A)}c_b^{(r_B)}\), since the even and odd contributions have equal branch weights.

\begin{lemma}[Branch Bell decomposition]
\label{lem:supp-branch-bell-decomposition}
For each good branch \(r\le t\) and syndrome \(a\),
\begin{equation}
Z_g\ket{\widetilde0_a^{(r)}}=\ket{\widetilde0_a^{(r)}},
\qquad
Z_g\ket{\widetilde1_a^{(r)}}=-\ket{\widetilde1_a^{(r)}} .
\end{equation}
Moreover,
\begin{equation}
\mu_a^{(r)}:=
\bra{\widetilde0_a^{(r)}}\ContractX_g\ket{\widetilde1_a^{(r)}}
=
\frac{1}{c_a^{(r)}2^{\Mbin-1}}
\sum_{\ell\ {\rm even}}
\sqrt{\binom{\Mbin}{\ell}\binom{\Mbin}{\ell+1}
Q_{\ell,a}^{(r)}Q_{\ell+1,a}^{(r)}} .
\label{eq:supp-mu-ra}
\end{equation}
\end{lemma}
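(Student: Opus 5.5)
The plan is to verify both claims directly from the definitions. The key observation is that, in a good branch $r\le t=g-1$ with fixed syndrome $a\le r<g$, every Dicke weight appearing in $\ket{\widetilde0_a^{(r)}}$ or $\ket{\widetilde1_a^{(r)}}$ has the form $q=g\ell-a$ with $0\le a<g$. For such $q$ one has $(-q)\bmod g=a$, so $a(q)=a$ and $L(q)=(g\ell-a+a)/g=\ell$.

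\emph{Eigenvalue equations.} Since $Z_g^{(m)}$ is diagonal in the Dicke basis with eigenvalue $(-1)^{L(q)}$ on $\ket{D_q^m}$, each component $\ket{D_{g\ell-a}^{n-r}}$ is multiplied by $(-1)^\ell$. In $\ket{\widetilde0_a^{(r)}}$ only even $\ell$ occur, and in $\ket{\widetilde1_a^{(r)}}$ only odd $\ell$. This gives $Z_g\ket{\widetilde0_a^{(r)}}=\ket{\widetilde0_a^{(r)}}$ and $Z_g\ket{\widetilde1_a^{(r)}}=-\ket{\widetilde1_a^{(r)}}$, with no further work.

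\emph{Matrix element $\mu_a^{(r)}$.} Set $m=n-r$ and expand
\[
\bra{\widetilde0_a^{(r)}}\ContractX_g\ket{\widetilde1_a^{(r)}}
\]
in the Dicke basis. A component $\ket{D_q^m}$ with $q=g\ell-a$ and $\ell$ even has $L(q)=\ell$ even. If $q\le m-g$, it is the lower member of the pair $(q,q+g)$, and $\ContractX_g$ maps $\ket{D_{q+g}^m}=\ket{D_{g(\ell+1)-a}^m}$ to $\ket{D_q^m}$. Hence the bra component with even index $\ell$ couples only to the ket component with odd index $\ell+1$. The upper members of pairs have odd $L$, so no cross term pairs an even $\ell$ with $\ell-1$. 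Reading off the coefficients, each contributes the product
\[
\sqrt{\binom{\Mbin}{\ell}Q_{\ell,a}^{(r)}}\;\sqrt{\binom{\Mbin}{\ell+1}Q_{\ell+1,a}^{(r)}}
\]
times the common normalization $1/(2^{\Mbin-1}c_a^{(r)})$. Here Lemma~\ref{lem:supp-finite-diff-balancing} is used: it makes the even and odd normalizations both equal to $\sqrt{2^{\Mbin-1}c_a^{(r)}}$, so their product is $2^{\Mbin-1}c_a^{(r)}$. Summing over even $\ell$ yields Eq.~\eqref{eq:supp-mu-ra}.

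\emph{Main obstacle: boundary cases.} Two kinds of terms need checking.

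First, components with $g\ell<a$ are absent from the states; only $\ell=0$ can fail, and there $Q_{0,a}^{(r)}=0$ for $a>0$. So they contribute zero to the formula as well.

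Second, the pair condition $q\le m-g$ can fail, and we must check that this happens only when the partner coefficient vanishes. This is the step needing the most care. We have $q+g=g(\ell+1)-a\le m=g\Mbin-r$ exactly when $g(\ell+1)\le g\Mbin-(r-a)$. This can fail only if $\ell+1=\Mbin$ and $r>a$. In that case $Q_{\Mbin,a}^{(r)}$ contains $\binom{0}{r-a}=0$, so the term is zero in both the actual matrix element and the formula. Likewise the term $\ell+1=\Mbin+1$ never arises, because $\Mbin$ is odd and so the largest even $\ell$ is $\Mbin-1$.

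With these boundary cases disposed of, the two expressions agree term by term.
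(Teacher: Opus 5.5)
Your proposal is correct and follows the same route as the paper. Both arguments identify $L(g\ell-a)=\ell$ from the residue class, observe that $\ContractX_g$ only links $g\ell-a$ (even $\ell$) to $g(\ell+1)-a$, substitute the normalized branch expansions, and dispose of out-of-range boundary terms through vanishing binomial coefficients; your boundary analysis is more explicit than the paper's, since you show that the pair condition can fail only at $\ell+1=\Mbin$ with $r>a$, where $Q_{\Mbin,a}^{(r)}=0$, and that $g\ell<a$ forces $\ell=0$ with $Q_{0,a}^{(r)}=0$.
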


\begin{proof}
For a good branch, \(g\ell-a\) has syndrome \(a\) and reconstructed index \(L(g\ell-a)=\ell\).  Thus Eq.~\eqref{eq:supp-Zg} returns the parity of \(\ell\), proving the first statement.  The contraction \(\ContractX_g\) pairs only Dicke weights separated by \(g\), so it connects \(g\ell-a\) to \(g(\ell+1)-a\) and no non-neighboring logical sectors.  Substituting the normalized branch expansions gives Eq.~\eqref{eq:supp-mu-ra}.  
The sum may formally include boundary values of \(\ell\) for which one of the two Dicke weights,
$g\ell-a$ or $g(\ell+1)-a$, lies outside the physical range \(0,\ldots,n-r\).  In that case the corresponding coefficient \(Q_{\ell,a}^{(r)}\) or \(Q_{\ell+1,a}^{(r)}\) is zero, by the convention that binomial coefficients vanish outside their natural range.  The associated summand therefore vanishes automatically.  Hence Eq.~\eqref{eq:supp-mu-ra} includes every admissible boundary pair without requiring separate the first and the last possible values of $\ell$ terms.
\end{proof}
\begin{proposition}[Good-branch correlators]
\label{prop:supp-fixed-branch-correlators}
For exact deletion numbers \(r_A,r_B\le t\), the partial-contraction observables satisfy
\begin{equation}
\langle Z_g\otimes Z_g\rangle=1,
\qquad
\langle \ContractX_g\otimes \ContractX_g\rangle=\mu_{r_A}\mu_{r_B},
\qquad
\langle Z_g\otimes\ContractX_g\rangle=
\langle\ContractX_g\otimes Z_g\rangle=0 .
\end{equation}
With Alice's observables \(A_0=Z_g\), \(A_1=\ContractX_g\), and Bob's signed contractions \(B_{0,1}=(Z_g\pm\ContractX_g)/\sqrt2\), the good-branch CHSH value is
\begin{equation}
S_{r_A,r_B}=\sqrt2\left(1+\mu_{r_A}\mu_{r_B}\right),
\qquad \text{with }\;
\mu_r=\sum_{a=0}^{r}c_a^{(r)}\mu_a^{(r)}.
\end{equation}
\end{proposition}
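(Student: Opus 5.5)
We work branch by branch, using the mixture decomposition Eq.~\eqref{eq:supp-branch-mixture}. Because every term of $\rho_{r_A,r_B}$ is a pure state $\ket{\Phi_{a,b}^{(r_A,r_B)}}$ with weight $c_a^{(r_A)}c_b^{(r_B)}$, every correlator is the weighted average of the corresponding branch correlators. It therefore suffices to evaluate $Z_g\otimes Z_g$, $\ContractX_g\otimes\ContractX_g$ and the mixed products on one branch state, and then sum with the normalized weights $\sum_{a,b}c_a^{(r_A)}c_b^{(r_B)}=1$.

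On a fixed branch $(a,b)$, Lemma~\ref{lem:supp-branch-bell-decomposition} says that $\ket{\widetilde 0_a}$ and $\ket{\widetilde 1_a}$ are $Z_g$-eigenvectors with eigenvalues $\pm1$, and likewise on Bob's side. Hence the branch state lies in the $+1$ eigenspace of $Z_g\otimes Z_g$, which gives $\langle Z_g\otimes Z_g\rangle=1$ on every branch and therefore overall.

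For the transverse term, we use two facts about $\ContractX_g$.
\begin{itemize}
\item It connects only weight $g\ell-a$ to $g(\ell+1)-a$, so it maps the even sector to the odd sector and vice versa, leaving the syndrome $a$ fixed.
\item It is real symmetric.
\end{itemize}
As a result, $\bra{\widetilde 0_a}\ContractX_g\ket{\widetilde 0_a}=\bra{\widetilde 1_a}\ContractX_g\ket{\widetilde 1_a}=0$, and $\bra{\widetilde 1_a}\ContractX_g\ket{\widetilde 0_a}=\mu_a^{(r_A)}$ is real. Expanding
\[
\bra{\Phi_{a,b}}\ContractX_g\otimes\ContractX_g\ket{\Phi_{a,b}}
\]
leaves only the cross terms $\tfrac12(\mu_a\mu_b+\mu_a\mu_b)=\mu_a^{(r_A)}\mu_b^{(r_B)}$. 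Averaging over branches with product weights factorizes to $\bigl(\sum_a c_a^{(r_A)}\mu_a^{(r_A)}\bigr)\bigl(\sum_b c_b^{(r_B)}\mu_b^{(r_B)}\bigr)=\mu_{r_A}\mu_{r_B}$.

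The mixed correlators vanish for a parity reason. $Z_g\otimes\ContractX_g$ is diagonal in the logical label on Alice's side but flips it on Bob's side, so it maps $\ket{\widetilde 0\widetilde 0}$ to a combination of $\ket{\widetilde 0\widetilde 1}$-type vectors. These are orthogonal to both components of $\ket{\Phi_{a,b}}$. Equivalently, $\ContractX_g$ has zero diagonal matrix elements on $Z_g$-eigenvectors.

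Substituting into the CHSH combination gives the value. With $A_0=Z_g$, $A_1=\ContractX_g$ and $B_0\pm B_1$ equal to $\sqrt2 Z_g$ and $\sqrt2\ContractX_g$ respectively, we have
\[
S=\sqrt2\,\langle Z_g\otimes Z_g\rangle+\sqrt2\,\langle\ContractX_g\otimes\ContractX_g\rangle=\sqrt2\,(1+\mu_{r_A}\mu_{r_B}).
\]
The validity of these observables is Lemma~\ref{lem:supp-partial-shift-validity}.

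The step needing the most care is making sure no cross-branch coherences survive. The Kraus structure of Lemma~\ref{lem:supp-exact-deletion-kraus} already removes coherences between different $a$. Distinct syndromes also occupy different residue classes mod $g$ when $r\le t<g$, and $\ContractX_g$ preserves residue classes, so mixing in the decomposition is harmless even at the operator level. We also need to confirm that the $\ContractX_g$ pairing, which starts at even $L(q)$, matches $\ell$ even to $\ell+1$ exactly as in Eq.~\eqref{eq:supp-mu-ra}. This holds because $L(g\ell-a)=\ell$ on good branches.
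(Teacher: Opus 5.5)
Your proposal is correct and follows essentially the same route as the paper. Both write the state as the balanced branch mixture of Eq.~\eqref{eq:supp-branch-mixture}, use that \(Z_g\) acts as logical \(Z\) on every branch while \(\ContractX_g\) only connects the two branch logical states, read off \(\mu_a^{(r_A)}\mu_b^{(r_B)}\) per branch, and average with the product syndrome weights. Your explicit expansion of the Bell-state cross terms and your check that \(\mu_a^{(r)}\) is real fill in details the paper's proof leaves implicit.
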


\begin{proof}
For every branch Bell state in Eq.~\eqref{eq:supp-branch-mixture}, \(Z_g\) acts as logical Pauli \(Z\), hence \(\langle Z_g\otimes Z_g\rangle=1\).  The mixed correlators vanish because \(Z_g\) is diagonal in the branch logical basis whereas \(\ContractX_g\) has only off-diagonal matrix elements between the two branch logical states.  Finally, Lemma~\ref{lem:supp-branch-bell-decomposition} gives the transverse branch contribution \(\mu_a^{(r_A)}\mu_b^{(r_B)}\).  Averaging over independent syndrome weights gives \(\mu_{r_A}\mu_{r_B}\).  Substitution into the CHSH expression yields the stated value.
\end{proof}
We have now completed the structural part of the construction.  For every deletion pattern with at most \(t\) losses, the output decomposes into balanced Bell-like branches, and the corresponding CHSH value is determined by the transverse overlaps \(\mu_r\).  The following theorem summarizes these results and identifies the remaining step: obtaining uniform lower bounds on these overlaps and averaging them over the stochastic deletion process.
\begin{theorem}[Sparse PI theorem, structural form]
\label{thm:supp-sparse-pi-structural}
For the code in Eq.~\eqref{eq:supp-binomial-code} and the partial-contraction observables in Eqs.~\eqref{eq:supp-Zg}, every good exact-deletion branch \(r_A,r_B\leq t\) has the CHSH value given in Proposition~\ref{prop:supp-fixed-branch-correlators}.  Branches with more than \(t\) deletions are assigned randomized signed outputs.  Combining this structure with the estimates derived below gives the finite-size lower bound for independent stochastic deletion stated in Theorem~\ref{thm:supp-sparse-stochastic}.
\end{theorem}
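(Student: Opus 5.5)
The statement is essentially an assembly of results already proved, so the plan is to decompose the stochastic post-deletion state by deletion numbers and treat good and bad branches separately.

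\emph{Decomposition.} Under independent deletion, Alice's and Bob's deletion numbers \(R_A,R_B\sim\Bin(n,1-\eta)\) are independent. By Definition~\ref{def:supp-stochastic-pi-deletion}, the output is a direct sum over survivor sectors \(m_A=n-r_A\), \(m_B=n-r_B\). The observables \(A_x\) and \(B_y\) are block diagonal, being built as \(\bigoplus_m Z_g^{(m)}\) and \(\bigoplus_m \ContractX_g^{(m)}\). Hence the CHSH value is the probability-weighted average of the sector values \(S_{r_A,r_B}\), and no cross-sector terms appear.

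\emph{Good branches.} For \(r_A,r_B\le t\), Lemma~\ref{lem:supp-finite-diff-balancing} applies because \(t<\Mbin\); this is exactly where the hypothesis \(\Mbin>t\) enters. It yields the balanced branch mixture in Eq.~\eqref{eq:supp-branch-mixture}. Lemma~\ref{lem:supp-branch-bell-decomposition} then identifies \(Z_g\) as logical \(Z\) and gives \(\ContractX_g\) off-diagonal matrix elements \(\mu_a^{(r)}\). Proposition~\ref{prop:supp-fixed-branch-correlators} gives \(S_{r_A,r_B}=\sqrt2(1+\mu_{r_A}\mu_{r_B})\). Validity of the observables in every sector, including unpaired components, is supplied by Lemma~\ref{lem:supp-partial-shift-validity}.

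\emph{Bad branches.} For \(\max(r_A,r_B)>t\), the structural statement \emph{assigns} randomized signed outputs. I would implement this literally: in survivor blocks with \(m<n-t\), replace each party's local observables by \(0\), which gives the trivially valid POVM \(E_\pm=I/2\). Every correlator then vanishes and the branch contributes \(S=0\). This assignment is allowed because the survivor number \(m\) is locally readable from the packed register length, so each party knows whether its own block is good. With this choice, any branch in which either side has more than \(t\) deletions contributes zero.

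\emph{Assembly.} Combining the two cases gives the structural form
\[
S=\sum_{r_A,r_B\le t}\Pr[R_A=r_A]\Pr[R_B=r_B]\,\sqrt2\bigl(1+\mu_{r_A}\mu_{r_B}\bigr),
\]
which is what the theorem asserts to hand off to Theorem~\ref{thm:supp-sparse-stochastic}.

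The only delicate point is the bad-branch convention. It must be a local rule, so it must depend only on the party's own \(m\); otherwise the zero-correlator claim could fail. In particular, the mixed case where one side is good and the other is bad should be checked. With the local rule, any correlator involving the bad party's zero observable vanishes, so those branches contribute exactly zero rather than something uncontrolled.
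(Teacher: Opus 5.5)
Your proposal is correct and follows the paper's own route. The paper uses the same survivor-number direct sum with block-diagonal observables. It treats good branches via Lemma~\ref{lem:supp-finite-diff-balancing}, Lemma~\ref{lem:supp-branch-bell-decomposition} and Proposition~\ref{prop:supp-fixed-branch-correlators}, and assigns randomized signed outputs to bad branches, which is your local zero-observable rule on blocks with \(m<n-t\). Your assembled sum equals \(\sqrt2\,(q_t^2+\nu_t^2)\), where \(q_t=\Pr[\Rdel\le t]\) and \(\nu_t=\sum_{r\le t}\Pr[\Rdel=r]\mu_r\); the paper then bounds this quantity using \(q_t\ge 1-\epsq\) and \(\mu_r\ge\mu_*\ge0\) to reach Theorem~\ref{thm:supp-sparse-stochastic}.
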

It remains to control the transverse overlaps \(\mu_r\) uniformly over all good deletion branches.  These overlaps compare the loss distributions of neighboring coarse Dicke sectors.  We quantify their similarity through the corresponding Hellinger affinity and then use the resulting estimate to treat independent stochastic deletion.

For neighboring coarse Dicke sectors, define the hypergeometric Hellinger affinity
\begin{equation}
h_\ell^{(r)}=
\sum_{a=0}^{r}\sqrt{Q_{\ell,a}^{(r)}Q_{\ell+1,a}^{(r)}} .
\label{eq:supp-h-ell-r}
\end{equation}
We lower-bound this quantity using ordered sampling without replacement.  For a starting population with \(g\ell\) excitations among \(g\Mbin\) objects, let \(P_\ell(x_1,\ldots,x_r)\) be the probability of an ordered deletion history, with \(x_s=1\) denoting that the \(s\)-th deleted particle is an excitation.  The count map \((x_1,\ldots,x_r)\mapsto a=\sum_s x_s\) sends \(P_\ell\) to \(Q_{\ell,a}^{(r)}\).

Hellinger affinity is monotone under coarse graining.  Therefore
\begin{equation}
h_\ell^{(r)}
\ge
\sum_{x_1,\ldots,x_r}\sqrt{P_\ell(x_1,
\ldots,x_r)P_{\ell+1}(x_1,\ldots,x_r)} .
\label{eq:supp-coarse-grain-hellinger}
\end{equation}
The inequality follows from Cauchy--Schwarz applied within each count class.  We do not identify the two affinities by equality.

\begin{lemma}[Bernoulli Hellinger step]
\label{lem:supp-bernoulli-hellinger-step}
For any \(p,p'\in[1/4,3/4]\), the Bernoulli Hellinger affinity obeys
\begin{equation}
\sqrt{pp'}+\sqrt{(1-p)(1-p')}\ge 1-(p-p')^2 .
\label{eq:supp-bernoulli-step}
\end{equation}
\end{lemma}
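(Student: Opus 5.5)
We need to prove that for $p,p'\in[1/4,3/4]$ one has $\sqrt{pp'}+\sqrt{(1-p)(1-p')}\ge 1-(p-p')^2$. The plan is to rewrite the affinity through the squared Hellinger distance and then bound that distance by a multiple of $(p-p')^2$, using that square roots are Lipschitz away from zero.

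First, I will use the identity
\begin{equation*}
\sqrt{pp'}+\sqrt{(1-p)(1-p')}
=1-\tfrac12\Bigl[(\sqrt p-\sqrt{p'})^2+(\sqrt{1-p}-\sqrt{1-p'})^2\Bigr],
\end{equation*}
which follows by expanding the squares and using $p+(1-p)=1$. The claim then reduces to showing
\begin{equation*}
(\sqrt p-\sqrt{p'})^2+(\sqrt{1-p}-\sqrt{1-p'})^2\le 2(p-p')^2 .
\end{equation*}

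Second, I will bound each term. Write
\begin{equation*}
\sqrt p-\sqrt{p'}=\frac{p-p'}{\sqrt p+\sqrt{p'}} .
\end{equation*}
Since $p,p'\ge 1/4$, we have $\sqrt p+\sqrt{p'}\ge 1$, so $(\sqrt p-\sqrt{p'})^2\le (p-p')^2$. The same argument applies to $1-p,\,1-p'\in[1/4,3/4]$, and $(1-p)-(1-p')=-(p-p')$, so $(\sqrt{1-p}-\sqrt{1-p'})^2\le (p-p')^2$. Adding the two bounds gives $2(p-p')^2$. Substituting into the identity yields $1-\tfrac12\cdot 2(p-p')^2=1-(p-p')^2$, which is the stated inequality.

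There is no real obstacle here. The only point requiring care is that the denominator bound $\sqrt p+\sqrt{p'}\ge 1$ uses the lower endpoint $1/4$ for both $p$ and $1-p$. The interval $[1/4,3/4]$ is exactly the range where both hold, and this is why the lemma is restricted to it. In the later application, one must check that the ordered-deletion conditional excitation probabilities stay in this window; I expect that step to belong to the subsequent proof rather than to this lemma.
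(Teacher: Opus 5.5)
Your proposal is correct and follows the paper's proof exactly. Both rewrite the affinity through the squared Hellinger distance identity, then bound each squared difference by \((p-p')^2\) using \(\sqrt{p}+\sqrt{p'}\ge 1\) and \(\sqrt{1-p}+\sqrt{1-p'}\ge 1\), which holds because \(p,p',1-p,1-p'\ge 1/4\).
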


\begin{proof}
Let \(\delta=p-p'\).  The squared Hellinger distance identity gives
\begin{equation}
1-\sqrt{pp'}-\sqrt{(1-p)(1-p')}
=\frac12\left[(\sqrt p-\sqrt{p'})^2+(\sqrt{1-p}-\sqrt{1-p'})^2\right].
\end{equation}
Since \(p,p',1-p,1-p'\ge1/4\), one has \(\sqrt p+\sqrt{p'}\ge1\) and \(\sqrt{1-p}+\sqrt{1-p'}\ge1\).  Therefore
\begin{equation}
(\sqrt p-\sqrt{p'})^2=\frac{\delta^2}{(\sqrt p+\sqrt{p'})^2}\le\delta^2,
\qquad
(\sqrt{1-p}-\sqrt{1-p'})^2\le\delta^2.
\end{equation}
Substituting these two estimates proves Eq.~\eqref{eq:supp-bernoulli-step}.
\end{proof}

\begin{lemma}[Central-window Hellinger bound]
\label{lem:supp-hellinger-lower}
Let \(\Mbin\ge24\), \(r\le t=g-1\), and let even \(\ell\) satisfy
\begin{equation}
\frac{3\Mbin}{8}\le \ell\le \frac{5\Mbin}{8}-1 .
\end{equation}
Then
\begin{equation}
h_\ell^{(r)}\ge
\left(1-\frac{1}{(\Mbin-1)^2}\right)^r
\ge
\exp\left[-\frac{2r}{(\Mbin-1)^2}\right].
\label{eq:supp-h-lower}
\end{equation}
\end{lemma}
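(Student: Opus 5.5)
We start from the coarse-graining bound \eqref{eq:supp-coarse-grain-hellinger}, which reduces the claim to a lower bound on the Hellinger affinity between the two ordered deletion histories $P_\ell$ and $P_{\ell+1}$. Sampling without replacement is sequential. Conditioned on a prefix $x_1,\ldots,x_{s-1}$ containing $a_{s-1}$ excitations, the $s$-th deleted particle is an excitation with probability
\[
p_s=\frac{g\ell-a_{s-1}}{g\Mbin-s+1}\quad\text{under }P_\ell,
\qquad
p_s'=\frac{g(\ell+1)-a_{s-1}}{g\Mbin-s+1}\quad\text{under }P_{\ell+1}.
\]
The affinity of two sequential (Markov-chain) laws factorizes recursively. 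We sum over $x_r$ first, given the prefix, and pull out the Bernoulli affinity $\sqrt{p_rp_r'}+\sqrt{(1-p_r)(1-p_r')}$. If this factor is bounded below by a constant $\kappa$ uniformly over all prefixes, we peel it off and iterate. This gives
\[
\sum_{x}\sqrt{P_\ell P_{\ell+1}}\ge\kappa^r .
\]

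For the per-step bound we invoke Lemma~\ref{lem:supp-bernoulli-hellinger-step}, which requires $p_s,p_s'\in[1/4,3/4]$. The gap is
\[
p_s'-p_s=\frac{g}{g\Mbin-s+1}.
\]
Since $s\le r\le g-1$, the denominator is at least $g\Mbin-g+2>g(\Mbin-1)$. Hence $|p_s-p_s'|\le 1/(\Mbin-1)$, and the per-step factor is at least $1-1/(\Mbin-1)^2$. This yields the first inequality in \eqref{eq:supp-h-lower}.

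The main obstacle is verifying the window condition uniformly over all prefixes. Using $0\le a_{s-1}\le s-1\le g-2$, the minimal probability satisfies
\[
p_s\ge\frac{g\ell-g}{g\Mbin}=\frac{\ell-1}{\Mbin}.
\]
The maximal probability satisfies
\[
p_s'\le\frac{g(\ell+1)}{g\Mbin-g}=\frac{\ell+1}{\Mbin-1}.
\]
With $\ell\ge3\Mbin/8$ we get $(\ell-1)/\Mbin\ge 3/8-1/\Mbin$, which is at least $1/4$ once $\Mbin\ge 8$. With $\ell\le 5\Mbin/8-1$ we get $(\ell+1)/(\Mbin-1)\le 5\Mbin/(8(\Mbin-1))$, which is at most $3/4$ once $\Mbin\ge6$. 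The hypothesis $\Mbin\ge24$ therefore leaves ample slack. I would recheck these endpoint estimates carefully, including the mismatch between the two denominators.

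Finally, the exponential form follows from $\ln(1-x)\ge -2x$ for $x\in[0,1/2]$, applied with $x=1/(\Mbin-1)^2$. This gives
\[
\Bigl(1-\tfrac{1}{(\Mbin-1)^2}\Bigr)^r\ge e^{-2r/(\Mbin-1)^2}.
\]
The parity assumption on $\ell$ plays no role in the bound itself; it only matches the use of $h_\ell^{(r)}$ in $\mu_a^{(r)}$.
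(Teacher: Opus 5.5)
Your proposal is correct and follows the paper's proof essentially step for step: sequential without-replacement conditional probabilities, a uniform window bound and the gap bound $g/(g\Mbin-s)\le 1/(\Mbin-1)$ feeding the Bernoulli step lemma, recursive peeling of the per-step affinity to get $\gamma^r$, monotonicity under the count map, and $1-x\ge e^{-2x}$ on $[0,1/2]$. The only differences are cosmetic: you index the prefix as length $s-1$ rather than $s$, and you state explicitly the worst-case bounds $(\ell-1)/\Mbin$ and $(\ell+1)/(\Mbin-1)$ that the paper leaves terse; with $p_s<p_s'$ these cover all four window conditions.
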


\begin{proof}
Fix a common ordered history of length \(s<r\) with \(j\) previous successes.  The next-step conditional success probabilities for the two neighboring populations are
\begin{equation}
p_s=\frac{g\ell-j}{g\Mbin-s},
\qquad
p_s'=\frac{g(\ell+1)-j}{g\Mbin-s} .
\end{equation}
For \(r=0\) there is nothing to prove.  For \(r\ge1\), \(g\ge2\) and \(0\le s\le r-1\le g-2\).  Since \(j\le s\), the central-window assumptions imply
\begin{equation}
\frac14\le p_s,p_s'\le\frac34,
\qquad
|p_s-p_s'|=\frac{g}{g\Mbin-s}\le\frac{1}{\Mbin-1} .
\end{equation}
The first inequality follows, for example, from
\(p_s\ge 3/8-1/\Mbin>1/4\) and
\(p_s'\le (5\Mbin/8)/(\Mbin-1)<3/4\) for \(\Mbin\ge24\); the remaining cases are easier.

By Lemma~\ref{lem:supp-bernoulli-hellinger-step}, each conditional one-step affinity is at least
\begin{equation}
\gamma:=1-\frac{1}{(\Mbin-1)^2}.
\end{equation}
Let \(H_s\) be the ordered-history affinity after \(s\) steps.  Recursively,
\begin{equation}
H_{s+1}
=\sum_{h_s}\sqrt{P_\ell(h_s)P_{\ell+1}(h_s)}
\sum_{x\in\{0,1\}}\sqrt{P_\ell(x|h_s)P_{\ell+1}(x|h_s)}
\ge \gamma H_s .
\end{equation}
Since \(H_0=1\), the ordered-history affinity is at least \(\gamma^r\).  By Eq.~\eqref{eq:supp-coarse-grain-hellinger}, the hypergeometric count affinity is no smaller.  Finally, \(1-x\ge e^{-2x}\) for \(0\le x\le1/2\), and here \(x=(\Mbin-1)^{-2}\le1/23^2\).
\end{proof}

The Hellinger estimate controls how much the deletion statistics change between neighboring coarse Dicke sectors.  We now translate this statistical overlap into a lower bound on the branch coherence \(\mu_r\), which determines the transverse correlator and hence the CHSH value in Proposition~\ref{prop:supp-fixed-branch-correlators}.  After averaging Eq.~\eqref{eq:supp-mu-ra} over the lost-excitation syndrome \(a\), the coherence becomes
\begin{equation}
\mu_r
=
2^{-(\Mbin-1)}
\sum_{\ell\ {\rm even}}
\sqrt{
\binom{\Mbin}{\ell}
\binom{\Mbin}{\ell+1}
}
\,h_\ell^{(r)} .
\end{equation}
Let \(\mathcal I_\Mbin\) be the even indices in the central window
\(3\Mbin/8\le\ell\le5\Mbin/8-1\).  Lemma~\ref{lem:supp-hellinger-lower} gives
\begin{equation}
\mu_r\ge
\exp\left[-\frac{2r}{(\Mbin-1)^2}\right]B_\Mbin(\mathcal I_\Mbin),
\end{equation}
where \(B_\Mbin(\mathcal I_\Mbin)\) is the corresponding binomial pairing mass.

\begin{lemma}[Full even pairing affinity]
\label{lem:supp-full-even-pairing}
Let \(\Mbin\) be odd.  Then
\begin{equation}
2^{-(\Mbin-1)}
\sum_{\ell\ {\rm even}}
\sqrt{\binom{\Mbin}{\ell}\binom{\Mbin}{\ell+1}}
\ge
1-2\frac{\binom{\Mbin}{\lfloor \Mbin/2\rfloor}}{2^\Mbin} .
\label{eq:supp-full-even-pairing}
\end{equation}
The sum is over even \(\ell\) for which \(0\le \ell\le \Mbin-1\).
\end{lemma}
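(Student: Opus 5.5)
The plan is a pointwise AM--GM estimate followed by a bound on the mass of the binomial distribution near its central terms. Write \(b_\ell=\binom{\Mbin}{\ell}2^{-\Mbin}\), so that \(\sum_{\ell=0}^{\Mbin}b_\ell=1\). The left-hand side of Eq.~\eqref{eq:supp-full-even-pairing} equals
\[
2\sum_{\ell\ {\rm even}}\sqrt{b_\ell b_{\ell+1}}.
\]
Since \(\Mbin\) is odd, the even indices \(\ell=0,2,\ldots,\Mbin-1\) pair every level with its right neighbour \(\ell+1\). These pairs cover \(\{0,\ldots,\Mbin\}\) exactly once. Hence if we had \(\sqrt{b_\ell b_{\ell+1}}\) equal to \((b_\ell+b_{\ell+1})/2\), the sum would equal one. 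The deficit is therefore
\[
1-\mathrm{LHS}=\sum_{\ell\ {\rm even}}\bigl(\sqrt{b_{\ell+1}}-\sqrt{b_\ell}\bigr)^2 .
\]
This identity is the starting point of the argument.

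Next we bound each square by \(|b_{\ell+1}-b_\ell|\). This follows from
\[
\bigl(\sqrt x-\sqrt y\bigr)^2\le \bigl|\sqrt x-\sqrt y\bigr|\,\bigl(\sqrt x+\sqrt y\bigr)=|x-y| .
\]
The deficit is then at most
\[
\sum_{\ell\ {\rm even}}|b_{\ell+1}-b_\ell| ,
\]
which is at most the full total variation \(\sum_{\ell=0}^{\Mbin-1}|b_{\ell+1}-b_\ell|\).

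The binomial sequence is unimodal. For odd \(\Mbin\) it increases up to \(\lfloor\Mbin/2\rfloor\), has two equal central values, and then decreases. The total variation therefore telescopes to twice the maximum minus the two endpoint values:
\[
2b_{\lfloor\Mbin/2\rfloor}-b_0-b_{\Mbin}\le 2\binom{\Mbin}{\lfloor\Mbin/2\rfloor}2^{-\Mbin}.
\]
This gives Eq.~\eqref{eq:supp-full-even-pairing} directly.

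The only delicate point is the bookkeeping for odd \(\Mbin\). We must check that the even-\(\ell\) pairing is a perfect matching of \(\{0,\ldots,\Mbin\}\); this holds because \(\Mbin+1\) is even. We must also check that the central pair \(\ell=\lfloor\Mbin/2\rfloor,\lfloor\Mbin/2\rfloor+1\) contributes zero when \(\lfloor\Mbin/2\rfloor\) is even, since the two central binomial coefficients are equal. Neither check affects the bound, so I expect no real obstacle. One could tighten the constant by restricting the sum to even \(\ell\) only, but the stated inequality does not need this.
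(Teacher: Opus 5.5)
Your proof is correct and is essentially the paper's argument. The paper writes the left-hand side as the Hellinger affinity of the normalized distributions $P_\ell\propto\binom{\Mbin}{\ell}$ and $Q_\ell\propto\binom{\Mbin}{\ell+1}$ on even $\ell$, which is equivalent to your perfect-matching deficit identity. It then bounds $1-F$ via $(\sqrt x-\sqrt y)^2\le|x-y|$ and controls the even-pair variation by the full adjacent variation of the unimodal binomial sequence, which telescopes to $2(a_h-1)\le 2a_h$, exactly as you do.
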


\begin{proof}
Let \(\Omega_{\rm ev}=\{\ell:0\le\ell\le\Mbin-1,\ \ell\ {\rm even}\}\).  Since \(\Mbin\) is odd, the even and odd binomial masses are both \(2^{\Mbin-1}\).  Define probability distributions on \(\Omega_{\rm ev}\) by
\begin{equation}
P_\ell=2^{-(\Mbin-1)}\binom{\Mbin}{\ell},
\qquad
Q_\ell=2^{-(\Mbin-1)}\binom{\Mbin}{\ell+1}.
\end{equation}
They are normalized because \(\ell\) runs over even indices and \(\ell+1\) runs over odd indices.  The left side of Eq.~\eqref{eq:supp-full-even-pairing} is the Hellinger affinity \(F(P,Q)=\sum_{\ell\in\Omega_{\rm ev}}\sqrt{P_\ell Q_\ell}\).
For any two probability distributions, the affinity and total variation obey
\begin{equation}
F(P,Q)\ge 1-\frac12\sum_{\ell\in\Omega_{\rm ev}}|P_\ell-Q_\ell|.
\label{eq:supp-affinity-tv}
\end{equation}
Indeed, \(1-F(P,Q)=\frac12\sum_\ell(\sqrt{P_\ell}-\sqrt{Q_\ell})^2\le \frac12\sum_\ell|P_\ell-Q_\ell|\), because \((\sqrt x-\sqrt y)^2\le |x-y|\) for \(x,y\ge0\).
It remains to bound the variation.  Put \(a_j=\binom{\Mbin}{j}\) and \(h=\lfloor\Mbin/2\rfloor\).  The binomial sequence is increasing up to \(h\), has the equal central maxima \(a_h=a_{h+1}\), and is decreasing after \(h+1\).  Hence the full adjacent variation telescopes:
\begin{equation}
\sum_{j=0}^{\Mbin-1}|a_{j+1}-a_j|
= (a_h-a_0)+(a_{h+1}-a_{\Mbin})
=2(a_h-1)\le 2a_h.
\label{eq:supp-binomial-telescope}
\end{equation}
The even-pair sum is a sub-sum of this full adjacent variation, so \(\sum_{\ell\in\Omega_{\rm ev}}|a_\ell-a_{\ell+1}|\le2a_h\).  Hence
\begin{equation}
\frac12\sum_{\ell\in\Omega_{\rm ev}}|P_\ell-Q_\ell|
=2^{-\Mbin}\sum_{\ell\in\Omega_{\rm ev}}|a_\ell-a_{\ell+1}|
\le 2\frac{a_h}{2^\Mbin}.
\end{equation}
Combining this with Eq.~\eqref{eq:supp-affinity-tv} proves the lemma.
\end{proof}

By Lemma~\ref{lem:supp-full-even-pairing}, the full even pairing affinity obeys Eq.~\eqref{eq:supp-full-even-pairing}.

For the tails, AM--GM gives
\begin{equation}
\sqrt{\binom{\Mbin}{\ell}\binom{\Mbin}{\ell+1}}
\le \frac12\left[\binom{\Mbin}{\ell}+\binom{\Mbin}{\ell+1}\right].
\end{equation}
After relaxing the parity restriction, Hoeffding's inequality for a fair binomial random variable gives the tail bound
\begin{equation}
2^{-(\Mbin-1)}
\sum_{\substack{\ell\ {\rm even}\\ \ell\notin\mathcal I_{\Mbin}}}
\sqrt{\binom{\Mbin}{\ell}\binom{\Mbin}{\ell+1}}
\le 4e^{-\Mbin/32} .
\end{equation}

\begin{proposition}[Uniform coherence lower bound]
\label{prop:supp-mu-star-bound}
For every \(r\le t\),
\begin{equation}
\mu_r\ge\mu_*:=
\exp\left[-\frac{2t}{(\Mbin-1)^2}\right]
\left[
1-
2\frac{\binom{\Mbin}{\lfloor\Mbin/2\rfloor}}{2^{\Mbin}}
-
4e^{-\Mbin/32}
\right]_+ .
\label{eq:supp-mu-star}
\end{equation}
\end{proposition}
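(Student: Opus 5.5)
We will prove the bound by combining the averaged coherence formula
\[
\mu_r=2^{-(\Mbin-1)}\sum_{\ell\ {\rm even}}\sqrt{\binom{\Mbin}{\ell}\binom{\Mbin}{\ell+1}}\,h_\ell^{(r)}
\]
with three estimates already established: the central-window Hellinger bound (Lemma~\ref{lem:supp-hellinger-lower}), the full even pairing affinity (Lemma~\ref{lem:supp-full-even-pairing}), and the Hoeffding tail estimate displayed just before the statement.

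\emph{Positivity and restriction.} Every summand is nonnegative, since each $h_\ell^{(r)}\ge0$ is a Hellinger affinity. We may therefore discard all even $\ell$ outside the central window $\mathcal I_\Mbin$ and obtain a lower bound:
\[
\mu_r\ge 2^{-(\Mbin-1)}\sum_{\ell\in\mathcal I_\Mbin}\sqrt{\binom{\Mbin}{\ell}\binom{\Mbin}{\ell+1}}\,h_\ell^{(r)}.
\]

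\emph{Uniform affinity bound.} On $\mathcal I_\Mbin$, Lemma~\ref{lem:supp-hellinger-lower} gives $h_\ell^{(r)}\ge\exp[-2r/(\Mbin-1)^2]$. Since $r\le t$ and the exponential is decreasing in $r$, this is at least $\exp[-2t/(\Mbin-1)^2]$, uniformly in $r$. Hence
\[
\mu_r\ge \exp[-2t/(\Mbin-1)^2]\,B_\Mbin(\mathcal I_\Mbin).
\]

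\emph{Bounding the window mass.} We split
\[
B_\Mbin(\mathcal I_\Mbin)=(\text{full even pairing sum})-(\text{even terms outside }\mathcal I_\Mbin).
\]
Lemma~\ref{lem:supp-full-even-pairing} bounds the first term below by $1-2\binom{\Mbin}{\lfloor\Mbin/2\rfloor}/2^\Mbin$. The tail estimate bounds the second term above by $4e^{-\Mbin/32}$. Together,
\[
B_\Mbin(\mathcal I_\Mbin)\ge 1-2\binom{\Mbin}{\lfloor\Mbin/2\rfloor}2^{-\Mbin}-4e^{-\Mbin/32}.
\]
Because $B_\Mbin(\mathcal I_\Mbin)\ge0$ in any case, this can be replaced by its positive part $[\,\cdot\,]_+$. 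Multiplying the two factors yields $\mu_r\ge\mu_*$.

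\emph{Main obstacle: hypotheses of the Hellinger lemma.} Lemma~\ref{lem:supp-hellinger-lower} requires $\Mbin\ge24$, and the statement does not assume this. We handle it as follows.
\begin{itemize}
\item If $\Mbin\ge24$, the argument above applies directly.
\item If $\Mbin<24$, we check that the bracket is nonpositive, so that $\mu_*=0$ and the claim holds trivially because $\mu_r\ge0$. For small odd $\Mbin$, $4e^{-\Mbin/32}$ exceeds $1$ up to $\Mbin=44$; for example, $\Mbin=23$ gives $4e^{-23/32}\approx1.95$. So the bracket is indeed negative for all $\Mbin<24$.
\end{itemize}

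We also confirm two points in the tail estimate.
\begin{itemize}
\item It remains valid after the AM--GM step and after relaxing parity. The even-$\ell$ tail sum is bounded by $2^{-\Mbin}\sum_{\ell\notin\mathcal I}\bigl[\binom{\Mbin}{\ell}+\binom{\Mbin}{\ell+1}\bigr]$, which is at most twice a two-sided fair-binomial tail at deviation about $\Mbin/8$. Hoeffding gives $2\cdot2e^{-2(\Mbin/8)^2/\Mbin}=4e^{-\Mbin/32}$, with boundary rounding absorbed into the window's $-1$.
\item The central-window lemma was stated for even $\ell$ only, which is exactly the index set of $\mathcal I_\Mbin$.
\end{itemize}
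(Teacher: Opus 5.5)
Your proposal is correct and follows the paper's own argument. Like the paper, you restrict $\mu_r$ to the central window, apply the Hellinger bound with $r$ replaced by $t$, and bound $B_\Mbin(\mathcal I_\Mbin)$ below by the full even pairing affinity minus the $4e^{-\Mbin/32}$ tail; your explicit treatment of $\Mbin<24$ via $4e^{-\Mbin/32}>1$ spells out what the paper leaves to the positive part.

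One small correction to your side check of the tail. The $-1$ in the window does not absorb rounding: together with the shift $\ell\to\ell+1$, it reduces one side's margin to $\Mbin/8-1$, so ``twice a tail at deviation $\Mbin/8$'' does not literally give the constant $4$. The clean route is to keep parity and note that the pairs $\{\ell,\ell+1\}$ with $\ell$ even are disjoint. The tail is then a single probability, $\Pr\bigl[|X-\Mbin/2|>\Mbin/8-1\bigr]\le 2e^{1/2}e^{-\Mbin/32}<4e^{-\Mbin/32}$ for $X\sim\Bin(\Mbin,1/2)$.
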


\begin{proof}
Use the previous central-window estimate and replace \(r\) by its maximum \(t\).  The positive part is harmless and ensures a nonnegative lower bound even outside the useful parameter region.\end{proof}

The analysis so far applies to every fixed deletion number \(r\leq t\), for which Proposition~\ref{prop:supp-mu-star-bound} provides a uniform lower bound on the surviving coherence.  We now average these fixed-\(r\) results over independent stochastic deletion.  The probability of leaving the good region \(r\leq t\) is controlled by a binomial tail bound, while branches with more than \(t\) deletions are assigned randomized signed outputs.  This yields a finite-size CHSH bound and, for the square family \(n=\Mbin^2\), an explicit loss scaling that guarantees violation.

For independent deletion, \(\Rdel\sim\Bin(n,p)\).  If \(np<t\), Bernstein's inequality gives
\begin{equation}
\Pr[\Rdel>t]\le
\epsq:=
\exp\left[-\frac{(t-np)^2}{2np(1-p)+\frac23(t-np)}\right].
\label{eq:supp-eps-q}
\end{equation}
Let \(q_t=\Pr[\Rdel\le t]\).  Good branches use Proposition~\ref{prop:supp-fixed-branch-correlators}; bad branches are assigned randomized signed outputs.  Thus the stochastic \(Z_gZ_g\) contribution is at least \(q_t^2\), and the stochastic transverse contribution is at least
\begin{equation}
\nu_t^2,
\qquad
\nu_t:=\sum_{r=0}^{t}\Pr[\Rdel=r]\mu_r\ge q_t\mu_* .
\end{equation}
Therefore
\begin{theorem}[Stochastic sparse-deletion lower bound]
\label{thm:supp-sparse-stochastic}
Let \(\Mbin\) be odd, \(\Mbin\ge24\), \(g=t+1\), \(n=g\Mbin\), and \(\Mbin>t\).  If \(np<t\), the sparse PI protocol satisfies
\begin{equation}
S(n,p)
\ge
\sqrt2(1-\epsq)^2(1+\mu_*^2),
\label{eq:supp-sparse-bound}
\end{equation}
with \(\epsq\) and \(\mu_*\) as in Eqs.~\eqref{eq:supp-eps-q} and \eqref{eq:supp-mu-star}.
\end{theorem}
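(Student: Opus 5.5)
The plan is to write the stochastic CHSH value as an average over the two independent deletion numbers $R_A,R_B\sim\Bin(n,p)$ and to handle good and bad branches separately. Because the measurements are block diagonal in the survivor number, every correlator splits as
\[
\langle O_A\otimes O_B\rangle=\sum_{r_A,r_B}\Pr[\Rdel=r_A]\Pr[\Rdel=r_B]\,\langle O_A\otimes O_B\rangle_{r_A,r_B}.
\]

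\textbf{Bad branches.} If $r_A>t$ or $r_B>t$, the prescription replaces the signed output on the bad side by an unbiased random bit. That side then has zero signed expectation, so the whole branch contributes nothing to any correlator. This step is what lets us ignore the failure of syndrome balancing and of the syndrome decoding $a(q)$ when $r>t$. I should check that ``randomized signed outputs'' really means the observable is set to $0$ on those survivor sectors; that is the convention already used for unpaired sectors, so I will read it the same way.

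\textbf{Good branches.} For $r_A,r_B\le t$, Proposition~\ref{prop:supp-fixed-branch-correlators} gives
\[
\langle Z_gZ_g\rangle=1,\qquad \langle \ContractX_g\ContractX_g\rangle=\mu_{r_A}\mu_{r_B},
\]
and the mixed correlators vanish. Summing over good branches gives
\[
\langle Z_gZ_g\rangle=q_t^2,\qquad \langle \ContractX_g\ContractX_g\rangle=\nu_t^2,\qquad \nu_t=\sum_{r\le t}\Pr[\Rdel=r]\mu_r.
\]
Both sums factorize because the two sides delete independently. The mixed terms are still zero.

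\textbf{CHSH value.} With $A_0=Z_g$, $A_1=\ContractX_g$ and $B_{0,1}=(Z_g\pm\ContractX_g)/\sqrt2$, the CHSH combination is
\[
\sqrt2\bigl(\langle Z_gZ_g\rangle+\langle\ContractX_g\ContractX_g\rangle\bigr)=\sqrt2\,(q_t^2+\nu_t^2).
\]
Proposition~\ref{prop:supp-mu-star-bound} gives $\mu_r\ge\mu_*\ge0$ uniformly for $r\le t$, so $\nu_t\ge q_t\mu_*$. Hence
\[
S\ge\sqrt2\,q_t^2(1+\mu_*^2).
\]

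\textbf{Tail bound.} It remains to show $q_t\ge 1-\epsq$. Since $np<t$, the upper tail $\Pr[\Rdel>t]$ is controlled by Bernstein's inequality for a sum of $n$ independent centered Bernoulli variables. These have variance $p(1-p)$ each and are bounded by $1$, with deviation $t-np>0$. This reproduces Eq.~\eqref{eq:supp-eps-q} up to the constant in the denominator. The standard form is $\exp[-x^2/(2\sigma^2+\tfrac23 x)]$, which matches exactly. Then $q_t^2\ge(1-\epsq)^2$, which yields Eq.~\eqref{eq:supp-sparse-bound}.

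\textbf{Main obstacle.} The arithmetic is simple; the only delicate step is justifying the branchwise decomposition. The survivor number $m=n-r$ is readable from the packed register, and the observables are direct sums over $m$. So cross-sector coherences between different $m$ do not contribute, and correlators add over sectors with the binomial weights. The hypotheses $\Mbin\ge24$, $\Mbin>t$, and $\Mbin$ odd are needed only to invoke Lemma~\ref{lem:supp-finite-diff-balancing} and Proposition~\ref{prop:supp-mu-star-bound}.
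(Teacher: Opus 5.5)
Your proposal is correct and follows the paper's proof essentially step for step. The paper likewise averages the good-branch correlators of Proposition~\ref{prop:supp-fixed-branch-correlators} over independent deletion numbers, assigns zero signed contribution to bad branches to get \(S\ge\sqrt2(q_t^2+\nu_t^2)\), and concludes with \(\nu_t\ge q_t\mu_*\) and the Bernstein bound \(q_t\ge 1-\epsq\); your account of the bad-branch convention and the Bernstein constants just spells out what the paper leaves implicit.
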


\begin{proof}
Since \(q_t\ge1-\epsq\),
\begin{equation}
S(n,p)\ge\sqrt2(q_t^2+\nu_t^2)
\ge\sqrt2(1-\epsq)^2(1+\mu_*^2).
\end{equation}
The first inequality is the CHSH expression of Proposition~\ref{prop:supp-fixed-branch-correlators} averaged over independent deletion numbers, with zero signed contribution assigned to bad branches.\end{proof}

\begin{corollary}[Square-family consequence]
\label{cor:supp-square-family}
Let
$g=\Mbin$, $t=\Mbin-1$, $n=\Mbin^2$,
with odd \(\Mbin\ge101\).  If
\begin{equation}
p\le\frac{1}{4\Mbin}=\frac{1}{4\sqrt n},
\end{equation}
then the lower bound in Theorem~\ref{thm:supp-sparse-stochastic} is strictly larger than \(2\).  Hence the sparse PI protocol certifies recovery-free CHSH violation for \(p=O(n^{-1/2})\).
\end{corollary}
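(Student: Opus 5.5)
We plug the square-family parameters into Theorem~\ref{thm:supp-sparse-stochastic}. We then show that $\mu_*$ is close to one and that $\epsq$ is small enough for $\sqrt2(1-\epsq)^2(1+\mu_*^2)>2$. The margin is tight, since the ideal value is $2\sqrt2\approx2.83$. So the plan is to find explicit numerical lower bounds at $\Mbin=101$, check that they suffice, and check that every error term is monotone in $\Mbin$.

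\emph{Step 1 (applicability).} With $g=\Mbin$, $t=\Mbin-1$, $n=\Mbin^2$, the hypotheses $\Mbin$ odd, $\Mbin\ge24$ and $\Mbin>t$ hold. For $p\le1/(4\Mbin)$ we have $np\le\Mbin/4<t$, so the condition $np<t$ is satisfied.

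\emph{Step 2 (bound $\mu_*$).} The exponential prefactor is $\exp[-2(\Mbin-1)/(\Mbin-1)^2]=e^{-2/(\Mbin-1)}\ge e^{-1/50}\approx0.980$. For the central binomial term, use $\binom{\Mbin}{\lfloor\Mbin/2\rfloor}/2^\Mbin\le 1/\sqrt{\pi\Mbin/2}$ or an equivalent standard bound, so that $2\binom{\Mbin}{\lfloor \Mbin/2\rfloor}2^{-\Mbin}\le 2\sqrt{2/(\pi\Mbin)}\approx0.159$ at $\Mbin=101$.

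The tail term $4e^{-\Mbin/32}$ is the main obstacle. At $\Mbin=101$ it equals $4e^{-3.16}\approx0.17$, so the bracket is only about $1-0.159-0.170\approx0.67$, giving $\mu_*\gtrsim0.66$ and $\mu_*^2\gtrsim0.43$. I would verify this arithmetic carefully. If it turns out too weak, I would sharpen the tail estimate: use the exact fair-binomial mass outside $[3\Mbin/8,5\Mbin/8-1]$, or Hoeffding with the true deviation $\approx\Mbin/8$, giving $2e^{-2(\Mbin/8)^2/\Mbin}=2e^{-\Mbin/32}$, and drop the AM--GM factor where possible. All three error terms decrease in $\Mbin$, so the bound at $\Mbin=101$ is uniform for larger $\Mbin$.

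\emph{Step 3 (bound $\epsq$).} Since the Bernstein exponent decreases as $np$ grows, take the worst case $np=\Mbin/4$, giving $t-np\ge 3\Mbin/4-1$. Then $2np(1-p)\le\Mbin/2$ and $\tfrac23(t-np)\le\Mbin/2$, so the exponent is at least $(3\Mbin/4-1)^2/\Mbin\approx 0.56\Mbin-1.5$. At $\Mbin=101$ this is about $55$, so $\epsq\le e^{-50}$, which is negligible and decreasing in $\Mbin$.

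\emph{Step 4 (conclude).} Combining the bounds, $S\ge\sqrt2(1-e^{-50})^2(1+0.43)\approx2.02>2$ for all odd $\Mbin\ge101$. Since $p\le1/(4\sqrt n)$ means the expected number of deletions is $np\le\sqrt n/4$, this certifies CHSH violation for $p=O(n^{-1/2})$. The only delicate point is the tightness of Step~2, which forces the threshold $\Mbin\ge101$. I would double-check the constants there, especially the central-binomial bound and the tail constant, before finalizing.
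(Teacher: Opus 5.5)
Your proposal is correct and follows essentially the same route as the paper. Both reduce to the worst case $p=1/(4\Mbin)$ and bound the Bernstein exponent at $\Mbin=101$; the paper computes it exactly as $(3\Mbin/4-1)^2/(\Mbin-19/24)>55$. Both then use $\binom{\Mbin}{\lfloor\Mbin/2\rfloor}2^{-\Mbin}\le\sqrt{2/(\pi\Mbin)}$ and monotonicity in $\Mbin$ to get $\mu_*>0.65$, hence $S>2$. Your numbers already suffice ($\mu_*\approx0.658$, so $\mu_*^2>0.43>\sqrt2-1$), so the sharper tail estimate you considered is not needed.
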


\begin{proof}
It is enough to check the largest allowed value \(p=1/(4\Mbin)\), because decreasing \(p\) decreases the binomial upper-tail probability and leaves \(\mu_*\) unchanged.  At \(p=1/(4\Mbin)\), the binomial mean is \(np=\Mbin/4\), and
\begin{equation}
t-np=\frac{3\Mbin}{4}-1,
\qquad
2np(1-p)+\frac23(t-np)=\Mbin-\frac{19}{24}.
\end{equation}
The Bernstein exponent is
\begin{equation}
E(\Mbin)=\frac{(3\Mbin/4-1)^2}{\Mbin-19/24}.
\end{equation}
A direct derivative gives
\begin{equation}
E'(\Mbin)=\frac{27(3\Mbin-4)(4\Mbin-1)}{(24\Mbin-19)^2}>0
\qquad (\Mbin\ge101),
\end{equation}
and \(E(101)=20631/370>55\).  Thus \(\epsq\le e^{-55}\) for all \(\Mbin\ge101\).

For the coherence, use the standard central-binomial estimate
$\frac{\binom{\Mbin}{\lfloor\Mbin/2\rfloor}}{2^\Mbin}
\le \sqrt{\frac{2}{\pi\Mbin}}$.
For \(\Mbin\ge101\), the elementary monotonicities of \(e^{-2/(\Mbin-1)}\), \(-\sqrt{2/(\pi\Mbin)}\), and \(-e^{-\Mbin/32}\) give the uniform lower bound
\begin{equation}
\mu_*\ge
\exp\left[-\frac{2}{100}\right]
\left[1-2\sqrt{\frac{2}{101\pi}}-4e^{-101/32}\right]>0.65.
\end{equation}
Consequently
\begin{equation}
S(n,p)
\ge
\sqrt2(1-e^{-55})^2(1+0.65^2)>2 .
\end{equation}
This proves the corollary for all odd \(\Mbin\ge101\).
\end{proof}
\subsection{Numerical results}
\label{supp:sparse-numerics}

The analytical bound of Theorem~\ref{thm:supp-sparse-stochastic} can be evaluated directly to illustrate how the sparse binomial PI construction behaves away from the particular parameter choice used in Corollary~\ref{cor:supp-square-family}.  We focus on the square family
\begin{equation}
g=\Mbin,
\qquad
t=\Mbin-1,
\qquad
n=\Mbin^2,
\label{eq:supp-square-family-numerics}
\end{equation}
and introduce the scaled deletion parameter
\begin{equation}
c=p\Mbin,
\qquad
p=\frac{c}{\Mbin},\qquad p=1-\eta.
\label{eq:supp-scaled-deletion-c}
\end{equation}
For fixed \(c=O(1)\), this is precisely the sparse-deletion scaling
\[
p=O(\Mbin^{-1})=O(n^{-1/2})
\]
established in the theorem.  The corresponding certified CHSH lower bound is
\begin{equation}
S_{\rm lb}(\Mbin,c)
=
\sqrt2\,
\bigl(1-\epsq(\Mbin,c)\bigr)^2
\bigl(1+\mu_*^2(\Mbin)\bigr),
\label{eq:supp-square-Slb}
\end{equation}
where \(\epsq\) is obtained from Eq.~\eqref{eq:supp-eps-q} after setting \(p=c/\Mbin\), while \(\mu_*\) is the uniform coherence bound of Eq.~\eqref{eq:supp-mu-star} with \(t=\Mbin-1\).

\begin{figure}[t]
\centering
\begin{minipage}[t]{0.49\linewidth}
    \centering
    \includegraphics[width=\linewidth]{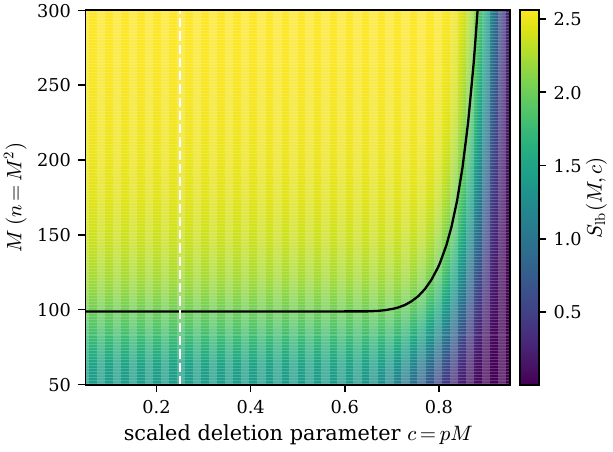}
    \vspace{1mm}
    \centerline{\small (a)}
\end{minipage}
\hfill
\begin{minipage}[t]{0.49\linewidth}
    \centering
    \includegraphics[width=\linewidth]{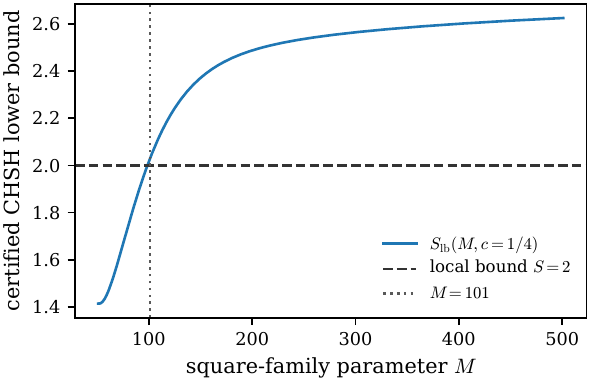}
    \vspace{1mm}
    \centerline{\small (b)}
\end{minipage}
\caption{
Numerical evaluation of the analytical lower bound for the square-family sparse binomial PI construction of Sec.~\ref{supp:sparse-proof}.
(a) Certified lower bound \(S_{\rm lb}(\Mbin,c)\) as a function of the square-family parameter \(\Mbin\), with \(n=\Mbin^2\), and the scaled deletion probability \(c=p\Mbin\).  The black contour marks \(S_{\rm lb}=2\), so points on the side of the contour with \(S_{\rm lb}>2\) are certified by Theorem~\ref{thm:supp-sparse-stochastic} to violate CHSH without recovery.  The vertical dashed line marks the choice \(c=1/4\), corresponding to \(p=1/(4\Mbin)\) used in Corollary~\ref{cor:supp-square-family}.
(b) Cross section of the same lower bound along \(c=1/4\).  The dashed horizontal line is the local CHSH bound \(S=2\), and the vertical dotted line marks \(\Mbin=101\), from which the corollary guarantees violation for the allowed odd values of \(\Mbin\).  Together, the two panels show how increasing the PI redundancy improves the certified transverse coherence while suppressing the probability of leaving the good-deletion sector.
}
\label{fig:supp-sparse-numerical-results}
\end{figure}
Panel~(a) gives a two-parameter view of the sparse-deletion region.  The horizontal variable \(c\) is useful because it separates the scaling with system size from the numerical prefactor of the deletion probability.  Indeed, since \(n=\Mbin^2\),
$np
=
\Mbin^2\frac{c}{\Mbin}
=
c\Mbin$ ,
whereas the largest deletion number treated by the good-branch analysis is
$t=\Mbin-1.$
Thus, for fixed \(c<1\) and increasing \(\Mbin\), the mean number of deleted particles and the correctable deletion scale both grow linearly with \(\Mbin\), while their ratio remains controlled.  The figure should therefore not be interpreted as showing tolerance to a fixed nonzero deletion probability as \(n\) grows.  Rather, it displays the finite-size behavior inside the sparse regime \(p=c/\Mbin\).

The color variation in panel~(a) reflects the two factors in Eq.~\eqref{eq:supp-square-Slb}.  The first is the good-branch probability, controlled by \(\epsq\).  For fixed \(c\) sufficiently below unity, the binomial mean \(np=c\Mbin\) remains separated from the cutoff \(t=\Mbin-1\), and the probability of observing more than \(t\) deletions decreases rapidly as \(\Mbin\) grows.  The second factor is the coherence term \(\mu_*\).  In the square family,
\begin{equation}
\mu_*
=
\exp\!\left[-\frac{2}{\Mbin-1}\right]
\left[
1-
2\frac{\binom{\Mbin}{\lfloor\Mbin/2\rfloor}}{2^\Mbin}
-
4e^{-\Mbin/32}
\right]_+ ,
\end{equation}
so the lower bound on the surviving Dicke-pair coherence improves with increasing \(\Mbin\) and approaches unity asymptotically.  These two effects explain the broad high-\(S_{\rm lb}\) region at large \(\Mbin\): the stochastic deletion process remains with high probability inside the syndrome-resolved sector \(r\le t\), while neighboring coarse Dicke sectors become increasingly difficult to distinguish through their deletion statistics.

The black contour in panel~(a) gives a direct numerical representation of the sufficient condition
\[
S_{\rm lb}(\Mbin,c)>2.
\]
It is important that this contour is a boundary of the \emph{certified lower bound}, not an exact phase boundary for the optimal CHSH value.  Points for which the plotted lower bound is below \(2\) are simply not certified by Theorem~\ref{thm:supp-sparse-stochastic}; the calculation does not imply that the corresponding post-deletion state is local.  The sharp upward bending of the contour at larger \(c\) has a simple interpretation.  As \(c\) approaches unity, the mean number of deletions \(np=c\Mbin\) approaches the maximal good-branch value \(t=\Mbin-1\).  The safety margin entering the Bernstein exponent,
\begin{equation}
t-np
=
(1-c)\Mbin-1,
\end{equation}
then decreases, so bad branches can no longer be neglected at moderate system size.  Increasing \(\Mbin\) is consequently required to recover a sufficiently small \(\epsq\) and hence a certified violation.

The dashed line \(c=1/4\) selects the explicit scaling used in Corollary~\ref{cor:supp-square-family}.  Along this line,
\begin{equation}
p=\frac{1}{4\Mbin}
=
\frac{1}{4\sqrt n},
\qquad
np=\frac{\Mbin}{4},
\end{equation}
so the mean deletion number remains well below \(t=\Mbin-1\).  Panel~(b) isolates this cross section and makes the finite-size onset of the analytical certification transparent.  The lower bound increases with \(\Mbin\), crosses the local value \(2\) at the scale highlighted by \(\Mbin=101\), and then continues to improve.  This is the numerical counterpart of Corollary~\ref{cor:supp-square-family}, which proves that all allowed odd \(\Mbin\ge101\) satisfy the CHSH-violation condition for \(p\le1/(4\Mbin)\).

The growth in panel~(b) should again be understood through the two ingredients of the proof.  The factor \((1-\epsq)^2\) approaches unity because stochastic deletion increasingly concentrates inside the good region, while \(\mu_*^2\) increases because the Hellinger overlap between neighboring coarse Dicke sectors becomes closer to one.  Consequently, the direct observables \(Z_g\) and \(\ContractX_g\) retain increasingly ideal logical correlations even though no recovery operation is performed.  In the large-\(\Mbin\) limit at fixed \(c<1\), the analytical lower bound approaches the ideal CHSH value \(2\sqrt2\); at the same time the physical deletion probability itself decreases as \(1/\Mbin\).

These numerical results therefore complement the proof rather than introduce a separate optimization.  They show explicitly how the finite-difference syndrome balancing, the Hellinger coherence estimate, and the stochastic tail bound combine at finite size.  In particular, they make clear that the sparse binomial PI construction is a redundancy mechanism: increasing \(\Mbin\) enlarges the separation between the typical deletion count and the failure region while preserving the logical parity information in the survivor Dicke lattice.  This is the finite-size behavior underlying the \(p=O(n^{-1/2})\) recovery-free violation stated in the main text.
\section{Experimental implementation notes}
\label{supp:experiment}

The measurement models used in the Letter can be implemented with existing state-preparation, number-resolved detection, and coherence-control techniques, although their difficulty varies considerably across the three protocols.  The discussion below identifies the required experimental operations and the main trade-offs.  It does not claim a complete implementation or a platform-specific error budget.  Its purpose is to connect the analytical thresholds in the main text with measurable quantities and to clarify which protocol is most natural in each experimental regime.

A test would begin by preparing two local systems in the encoded Bell state \(\ket{\Phi_L^+}\), with one encoded subsystem sent to Alice and the other to Bob.  Each subsystem would then undergo a calibrated loss channel with survival probability \(\eta\).  Alice and Bob would independently choose one of two binary measurement settings and record an outcome \(a,b\in\{+1,-1\}\) on every trial.  No event should be discarded because of the observed particle number or loss outcome.  Outcomes on unused measurement subspaces must instead be assigned according to the randomized-output rules specified in the protocols.  From the four correlators
$E_{xy}
=
\sum_{a,b=\pm1}
ab\,P(a,b|x,y)$,
 with $x,y\in\{0,1\}$,
one obtains
$S=E_{00}+E_{01}+E_{10}-E_{11}$.
The survival probability can be calibrated from number-resolved measurements, while the effective transverse visibility \(v\) should be determined independently by detector tomography or by applying the coherence measurement to known superposition states.  Scanning \(\eta\) would then test the threshold hierarchy described in the main text: the one-excitation protocol above \(1/\sqrt2\), the anchored family above \(\etaG\), and the universal absence of CHSH violation at or below \(1/2\).

The one-excitation protocol is the simplest starting point.  It requires preparation and control in the logical subspace
$\operatorname{span}
\left\{
\ket{D_0^n},
\ket{D_1^n}
\right\}$,
followed after loss by measurements in each survivor-number sector
\(\operatorname{span}\{\ket{D_0^m},\ket{D_1^m}\}\).  The longitudinal observable is obtained from population readout: it distinguishes the vacuum from the one-excitation state.  The transverse observable must instead measure the relative phase between these states.  Experimentally, this can be implemented by a calibrated rotation between \(\ket{D_0^m}\) and \(\ket{D_1^m}\), followed by the same population measurement.  The tilted settings entering CHSH are produced by changing the rotation angle, or more generally by implementing the binary POVM associated with
$O_v(\theta)=\cos\theta\,Z+v\sin\theta\,X$.
Because the survivor number \(m\) is part of the local output space, the rotation may be conditioned on \(m\).  This does not constitute recovery or postselection: every survivor sector is measured and every trial contributes a binary outcome.

The main advantage of the one-excitation encoding is its low excitation cost.  Its logical states and number measurement are comparatively simple, and loss cannot populate levels above the one-excitation sector.  It is therefore a natural choice when state preparation is the dominant limitation or when only low-order coherent control is available.  Its disadvantage is the threshold \(\eta>1/\sqrt2\), which is higher than the anchored asymptotic threshold \(\etaG\).  The protocol is also sensitive to the quality of the vacuum--one-excitation coherence measurement.  Population readout alone is insufficient, because it contains no information about the off-diagonal element
\(\ket{D_0^m}\!\bra{D_1^m}\).

A global collective Ramsey pulse offers a simple approximation to this transverse measurement, but it is not an exact logical Hadamard for \(m>1\).  The collective raising operator satisfies
\begin{equation}
J_+\ket{D_q^m}
=
\sqrt{(q+1)(m-q)}\,
\ket{D_{q+1}^m}.
\end{equation}
A pulse that couples \(\ket{D_0^m}\) to \(\ket{D_1^m}\) therefore also couples \(\ket{D_1^m}\) to \(\ket{D_2^m}\), producing leakage from the logical subspace.  If a Ramsey rotation is followed by Dicke-number binning, a standard \(\beta=\pi/2\) pulse gives the transverse contrast
\begin{equation}
\lambda_m
=
\frac{1}{\sqrt m\,2^m}
\sum_{k=0}^{m}
\binom{m}{k}|2k-m|,
\label{eq:supp-ramsey-contrast}
\end{equation}
or equivalently
\begin{equation}
\lambda_m=
\begin{cases}
\sqrt m\binom{m}{m/2}/2^m,
& m\ {\rm even},\\[2pt]
(m+1)\binom{m}{(m-1)/2}/(\sqrt m\,2^m),
& m\ {\rm odd}.
\end{cases}
\end{equation}
Stirling's formula gives
\(\lambda_m\to\sqrt{2/\pi}\).  A global Ramsey measurement is thus easy to apply and does not require selective addressing of the logical transition, but it realizes an unsharp transverse observable with a visibility that remains below one.  It is preferable when collective control is accurate and rapid, whereas a selective logical rotation is preferable when the larger visibility compensates for its greater control complexity.

\label{supp:experiment-cavity}
Circuit and cavity QED are well suited to both the one-excitation and finite-Fock protocols because they provide photon-number-selective rotations, photon-number-sensitive readout, controllable amplitude damping, and classical output randomization.  Photon-number-selective manipulations have been demonstrated in cavity systems \cite{Heeres2015}.  A possible experiment would use two bosonic modes, or two multimode symmetric registers, coupled to ancillary nonlinear elements.  The ancillas would prepare the encoded Bell state, implement the required selective rotations, and map the final Fock populations to binary readout outcomes.  A tunable attenuator, engineered reservoir, or calibrated waiting time would set the loss probability.  Repeating the experiment for the four pairs of settings would provide the correlators entering \(S\).

For the one-excitation protocol, the required control is a selective rotation between the vacuum and the one-photon state.  Its main challenge is to suppress unwanted coupling to higher Fock levels while maintaining a short measurement time.  Circuit and cavity systems offer strong spectral selectivity, but remote preparation of a high-fidelity encoded Bell state, phase stabilization between the two sites, and independent readout calibration remain necessary.  If the two parties are spatially separated for a Bell test, the measurement choices and readout windows must also satisfy the usual locality and detection requirements.  The present analysis concerns the state and measurement statistics after loss and does not by itself close those experimental loopholes.

A useful phenomenological description of the local transverse visibility is
\begin{equation}
v_{\rm exp}\simeq
(1-\epsilon_{\rm rot})
(1-\epsilon_{\rm meas})
(1-\epsilon_{\rm leak})
e^{-T/T_2}
V_{\rm state},
\label{eq:supp-visibility-budget}
\end{equation}
where \(\epsilon_{\rm rot}\) is the rotation error, \(\epsilon_{\rm meas}\) is the readout-assignment error, \(\epsilon_{\rm leak}\) is the probability of leaving the intended measurement subspace, \(T\) is the control and readout duration, \(T_2\) is the relevant coherence time, and \(V_{\rm state}\) is the initial state-preparation contrast.  This expression is bookkeeping rather than a microscopic noise model; the factors need not be independent.  It nevertheless identifies the quantities that should be calibrated experimentally.  For an off-resonant selective pulse with Rabi frequency \(\Omega\) and nearest unwanted detuning \(\Delta\), a common perturbative estimate is
\begin{equation}
\epsilon_{\rm leak}\sim(\Omega/\Delta)^2.
\label{eq:supp-leakage-estimate}
\end{equation}
Reducing \(\Omega\) suppresses leakage but lengthens the pulse and increases exposure to decoherence.  The useful operating point therefore balances spectral selectivity against the factor \(e^{-T/T_2}\).

The anchored finite-Fock protocol replaces the one-excitation code by
$\ket{0_L}=\ket0$,
 and 
$\ket{1_L}=\ket{\Nfp}$.
In a two-mode implementation, the source prepares
$\frac{\ket{00}+\ket{\Nfp\Nfp}}{\sqrt2}$,
after which each mode undergoes the calibrated pure-loss channel.  The longitudinal measurement distinguishes the vacuum from the nonvacuum levels.  The transverse observable
$X_{\Nfp}
=
\ket0\!\bra{\Nfp}
+
\ket{\Nfp}\!\bra0$
requires a phase-sensitive measurement of the coherence between \(\ket0\) and \(\ket{\Nfp}\).  One implementation is a selective rotation in this two-dimensional subspace followed by number-resolved detection.  Outcomes on the intermediate levels \(\ket1,\ldots,\ket{\Nfp-1}\), which lie in the kernel of \(X_{\Nfp}\), are assigned by unbiased classical randomization.  The resulting binary statistics directly realize the observable used in the analytical calculation, without first reconstructing the encoded qubit.

The anchored protocol has a lower asymptotic survival threshold than the one-excitation construction.  It is therefore preferable when loss tolerance is more important than excitation cost and when selective Fock-state control is available.  The improvement has a clear physical origin: increasing \(\Nfp\) changes the competition between the surviving two-party coherence \(\eta^{2\Nfp}\) and the population imbalance produced by complete erasure.  The same increase in \(\Nfp\), however, makes both state preparation and measurement more demanding.  The initial Bell state contains an \(\Nfp\)-excitation coherence on each side, and the local measurement must preserve and resolve the phase between Fock levels separated by \(\Nfp\).  Selective pulses may become slower, the nearest unwanted transitions may become harder to isolate, and preparation errors can accumulate with excitation number.  Moreover, the CHSH excess becomes small close to \(\etaG\), so resolving the asymptotic threshold requires correspondingly precise state preparation, loss calibration, and readout.

The visibility analysis in Sec.~\ref{supp:visibility} makes this trade-off quantitative.  If the coherent-pair visibility remains bounded below by a fixed \(v>0\) as \(\Nfp\) grows, imperfect measurements increase the finite excitation number required for violation but do not change the asymptotic boundary \(\etaG\).  If the visibility \(v_{\Nfp}\) decreases exponentially with \(\Nfp\), the measurement penalty competes directly with the surviving coherence and can shift the boundary.  An experiment should therefore measure \(v_{\Nfp}\) for each implemented excitation number rather than assume an \(\Nfp\)-independent visibility.  The anchored protocol is most attractive at moderate \(\Nfp\), where its loss advantage is already visible but the selective transition can still be calibrated accurately.

This construction is related to bosonic loss codes because it uses the same finite-Fock structure of the pure-loss channel \cite{Michael2016}.  Its operational goal is different.  Bosonic error correction normally extracts an error syndrome and applies a recovery intended to reconstruct the logical state.  Here the lossy output is measured directly.  The relevant observables are number measurements and selected Fock-level coherence probes, and their statistics are used immediately in the Bell functional.  This distinction is important experimentally: omitting recovery reduces the number of active operations, but it transfers the burden to the final measurement, which must resolve the logical coherence across the loss-populated Fock sectors.

\label{supp:experiment-sparse-pi}
The sparse binomial PI protocol takes a different route.  Instead of increasing the excitation separation within a single bosonic mode, it introduces redundancy across a symmetric \(n\)-particle register.  The logical states are superpositions of Dicke levels separated by \(g\), and the measurement is adapted to the residue and parity structure of those levels.  After a trial, particle-number readout determines the survivor number \(m\), while Dicke-weight readout determines the surviving excitation number \(q\).  Since the initial particle number \(n\) is fixed, \(m\) also gives the number \(r=n-m\) of deletions, without revealing which particles were lost.  The longitudinal observable \(Z_g\) assigns its binary value from the reconstructed coarse index \(L(q)\).  It can therefore be implemented by resolving the Dicke weight and applying the prescribed modular binning.

The transverse contraction \(\ContractX_g\) is more difficult.  It requires coherent measurements between pairs of surviving Dicke states whose weights differ by \(g\),
$\ket{D_q^m}
\longleftrightarrow
\ket{D_{q+g}^m}$.

A possible implementation would use a sequence of collective pulses, ancilla-mediated interactions, or digitally synthesized multilevel rotations to map the symmetric and antisymmetric combinations of each pair onto distinguishable populations.  Number readout would then complete the measurement.  Unpaired Dicke levels and branches with more than \(t\) deletions must produce unbiased random outcomes, as required by the theoretical construction.  The experimentally relevant quantities are therefore the probability of remaining in the good region \(r\leq t\), the assignment fidelity of \(Z_g\), and the coherence contrast of each \(q\leftrightarrow q+g\) pair.  These quantities determine the longitudinal and transverse correlators appearing in Theorem~\ref{thm:supp-sparse-stochastic}.

The main advantage of the sparse PI protocol is that it gives a rigorous finite-size construction for unflagged deletion and does not use the identities of the lost particles.  It makes the role of redundancy explicit and remains well defined on every survivor-number sector.  Its main limitation is experimental complexity.  Preparing the binomial superposition requires coherent control of several Dicke weights, and \(\ContractX_g\) couples levels separated by \(g\).  In the square family, \(g\sim\sqrt n\), so the transverse measurement becomes a high-order collective operation as the register grows.  The proven stochastic-loss scaling \(1-\eta=O(n^{-1/2})\) is also less favorable than the constant-loss thresholds obtained in the finite-Fock limit.  The sparse construction is therefore preferable when one wants a finite-\(n\) deletion protocol with a rigorous analytical guarantee and has access to collective symmetric-state control.  The one-excitation and anchored protocols are preferable when simpler low-dimensional measurements or constant-loss operation are the main priorities.

\label{supp:experiment-photonics}
Photonic platforms provide a natural setting for the one-excitation and anchored protocols because attenuation is directly implemented with beam splitters or lossy channels.  Photon-number-resolving detection gives the required population information, while displacement operations, phase references, and number-resolved detection can approximate single-rail coherence measurements \cite{Banaszek1999,Lee2017}.  For the one-excitation protocol, a path, time-bin, or cavity mode can encode the vacuum--one-photon subspace.  The main advantages are the natural loss model, rapid setting choice, and spatial separation of the parties.  The main limitations are probabilistic state preparation, detector inefficiency, phase instability, and the difficulty of realizing an exact projective \(X\) measurement with passive linear optics and vacuum ancillas alone.  Every no-click event must be assigned a valid outcome rather than removed by postselection.

The anchored protocol is more demanding in photonics because it requires coherent preparation of vacuum and \(\Nfp\)-photon components on both sides, together with a measurement of their relative phase.  Multiphoton generation rates, number resolution, and phase stability generally worsen with \(\Nfp\).  Photonics is therefore most competitive for small \(\Nfp\), where the pure-loss model is accurate and the required coherence remains experimentally accessible.  The sparse PI protocol could instead use symmetric multiphoton or multi-path Dicke states.  Such states have been prepared and manipulated experimentally \cite{Prevedel2009}, but scaling the state preparation and implementing coherent \(q\leftrightarrow q+g\) contractions would require nonlinear, measurement-induced, or adaptive resources beyond simple counting.

\label{supp:experiment-ions}
Trapped ions offer a complementary set of advantages.  Collective interactions and high-fidelity state-dependent fluorescence are well suited to preparing symmetric Dicke states and resolving excitation number \cite{Hume2009}.  The one-excitation logical Bell state can be prepared with global or sideband-mediated operations, and the longitudinal observable follows directly from fluorescence detection.  A selective transverse measurement can be implemented by mapping the vacuum--one-excitation coherence to populations before readout.  Compared with photonics, trapped ions provide stronger deterministic control and higher readout fidelity, but a physical deletion channel is less natural.  Loss may instead be simulated by removing selected ions, transferring them to states excluded from the detected register, or tracing out controlled subsystems.  Care is then required to ensure that the resulting operation matches unflagged deletion rather than flagged erasure.

The sparse PI protocol is conceptually well matched to trapped-ion collective control because both the code and the measurements remain in the symmetric Dicke basis.  Its modular \(Z_g\) measurement can be built from excitation-number readout and classical binning.  The transverse contraction remains the main obstacle: coupling \(\ket{D_q^m}\) selectively to \(\ket{D_{q+g}^m}\) may require high-order sideband processes or long digital pulse sequences.  Leakage to other Dicke levels, motional heating, and the growth of the pulse duration with \(g\) can reduce the effective coherence.  Trapped ions are therefore attractive for small proof-of-principle PI instances, where the branch structure and randomized boundary completion can be tested directly, but large square-family instances would require substantial improvements in multilevel collective control.

The three protocols expose a common physical trade-off.  The one-excitation code minimizes preparation and measurement complexity but has the highest survival threshold.  The anchored finite-Fock family lowers the asymptotic threshold by storing the relevant coherence across a larger excitation separation, at the cost of more demanding state preparation and selective readout.  The sparse PI construction replaces this excitation-space strategy with explicit particle redundancy and gives a finite-size guarantee for unflagged deletion, but requires coherent control across several Dicke sectors.  Across all platforms, the decisive experimental task is not number detection alone but the direct measurement of the coherence that survives the loss channel.  This is the operational content of the recovery-free approach developed in the main text: active reconstruction is omitted, while the accessible post-loss coherence is converted directly into the transverse correlations needed for CHSH violation.  No quantitative platform threshold is claimed here, but the required calibrations and correlators provide a concrete route for testing the theoretical hierarchy.

\end{widetext}

\end{document}